\newtheorem{theorem}{Theorem}
\newtheorem{fact}{Fact}
\newtheorem{lemma}{Lemma}
\newtheorem{corollary}{Corollary}
\newtheorem{proposition}{Proposition}
\newcommand{\comments}[1]{}
\begin{document}
\title{Realizing Non-Physical Actions through Hermitian-Preserving Map Exponentiation}
\begin{abstract}
Quantum mechanics features a variety of distinct properties such as coherence and entanglement, which could be explored to showcase potential advantages over classical counterparts in information processing. In general, legitimate quantum operations must adhere to principles of quantum mechanics, particularly the requirements of complete positivity and trace preservation. Nonetheless, non-physical maps, especially Hermitian-preserving maps, play a crucial role in quantum information science. To date, there exists {\it no} effective method for implementing these non-physical maps with quantum devices. In this work, we introduce the Hermitian-preserving map exponentiation algorithm, which can effectively realize the action of an arbitrary Hermitian-preserving map  by encoding its output into a quantum process. We analyze the performances of this algorithm, including its sample complexity and robustness, and prove its optimality in certain cases. When combined with algorithms such as the Hadamard test and quantum phase estimation, it allows for the extraction of information and generation of states from outputs of Hermitian-preserving maps, enabling various applications. Utilizing positive but not completely positive maps, this algorithm provides exponential advantages in entanglement detection and quantification compared to protocols based on single-copy operations. In addition, it facilitates the recovery of noiseless quantum states from multiple copies of noisy states by implementing the inverse map of the corresponding noise channel, offering an intriguing approach to handling quantum errors. Our findings present a pathway for systematically and efficiently implementing non-physical actions with quantum devices, thereby boosting the exploration of potential quantum advantages across a wide range of information processing tasks.
\end{abstract}

\date{\today}

\author{Fuchuan Wei}
\thanks{These authors contributed equally to this work.}
\affiliation{Yau Mathematical Sciences Center and Department of Mathematics, Tsinghua University, Beijing 100084, China}

\author{Zhenhuan Liu}
\thanks{These authors contributed equally to this work.}
\affiliation{Center for Quantum Information, Institute for Interdisciplinary Information Sciences, Tsinghua University, Beijing 100084, China}

\author{Guoding Liu}
\affiliation{Center for Quantum Information, Institute for Interdisciplinary Information Sciences, Tsinghua University, Beijing 100084, China}

\author{Zizhao Han}
\affiliation{Center for Quantum Information, Institute for Interdisciplinary Information Sciences, Tsinghua University, Beijing 100084, China}

\author{Xiongfeng Ma}
\email{xma@tsinghua.edu.cn}
\affiliation{Center for Quantum Information, Institute for Interdisciplinary Information Sciences, Tsinghua University, Beijing 100084, China}

\author{Dong-Ling Deng}
\email{dldeng@tsinghua.edu.cn}
\affiliation{Center for Quantum Information, Institute for Interdisciplinary Information Sciences, Tsinghua University, Beijing 100084, China}
\affiliation{Shanghai Qi Zhi Institute, 41th Floor, AI Tower, No. 701 Yunjin Road, Xuhui District, Shanghai 200232, China}
\affiliation{Hefei National Laboratory, Hefei 230088, People’s Republic of China}

\author{Zhengwei Liu}
\email{liuzhengwei@mail.tsinghua.edu.cn}
\affiliation{Yau Mathematical Sciences Center and Department of Mathematics, Tsinghua University, Beijing 100084, China}
\affiliation{Yanqi Lake Beijing Institute of Mathematical Sciences and Applications, Beijing 100407, China}

\maketitle

\section{Introduction}
\subsection{Backgrounds and Motivations}
Principles of quantum mechanics dictate that quantum operations must act on and output density matrices, which are positive matrices with a unit trace \cite{nielsen2010quantum}. Thus, a valid quantum operation, known as a quantum channel, must be completely positive and trace-preserving (CPTP). Compared to operations in classical physics, CPTP maps encompass a broader range of possibilities, enabling coherent and entangled operations that underpin quantum advantages observed in various quantum information processing tasks. However, as depicted in Fig.~\ref{fig:overview}(a), the set of CPTP maps represents only a small subset of all linear maps \cite{watrous2018theory}. In practical terms, the CPTP constraint limits the performance of many quantum tasks. Notably, entanglement detection \cite{gunhe2009entanglement} and quantum error mitigation \cite{endo2021hybrid,cai2022quantum} rely on some linear maps that lie beyond the scope of CPTP requirements. We coin such maps as non-physical maps in this context.

In entanglement detection and quantification, positive but not completely positive maps \cite{gunhe2009entanglement} serve as crucial tools. By deciding the positivity of output matrices generated by acting positive maps on subsystems, corresponding entanglement criteria exhibit strong detection capabilities. For instance, the positive partial transposition criterion \cite{peres1996ppt}, based on the transposition map, is widely employed for entanglement detection of mixed-states \cite{elben2020mixed,vidal2002negativity} and entanglement distillation \cite{Horodecki1998distillation}. Moreover, the entanglement negativity, which quantifies the violation of the positive partial transposition criterion, represents an easily computable and operationally meaningful entanglement measure \cite{vidal2002negativity}. However, due to their lack of complete positivity, verifying positive map criteria often requires highly joint operations or exponential repetition times \cite{gray2018machine,zhou2020Single,elben2020mixed}. 

In contrast to quantum error correction \cite{terhal2015qec}, which sacrifices a large number of ancillary qubits to protect noiseless states, quantum error mitigation aims to suppress errors using currently available quantum devices \cite{cai2022quantum}. The core idea of leading approaches for mitigating quantum errors is built on applying the inverse map of the noise channel to noisy states. However, since the inverse maps of noise channels are always non-physical and quantum error mitigation protocols are confined by restricted quantum operations, techniques such as probabilistic error cancellation \cite{temme2017mitigation,endo2018practical} are adopted to statistically realize these inverse maps. Consequently, quantum error mitigation protocols can only recover noiseless expectation values rather than noiseless states, limiting the range of applications.

Given the importance of non-physical maps, particularly Hermitian-preserving ones as introduced above, researchers have devoted substantial efforts to their physical realizations. However, existing approaches are restricted in the level of quantum states and aim to prepare outputs of non-physical maps in some indirect ways. Since the output of the non-physical map is not always a density matrix, these approaches face fundamental limitations in feasibility and efficiency. For example, methods based on structural approximation \cite{horodecki2002direct,korb2008structural} and Petz recovery map \cite{Petz1986,gilyen2022petz} employ quantum channels to approximate non-physical maps. However, the approximate channel may largely deviate from the target non-physical map. The multi-copy extension method \cite{dong2019positive} utilizes a joint quantum channel acting on multiple copies of input states to produce a single output of the non-physical map, making it feasible only when the output remains a density matrix. Other attempts, such as the probabilistic error cancellation mentioned above, decompose the non-physical map into a linear combination of some quantum channels, realizing the non-physical map only in a statistical manner. An effective and practical approach to implementing non-physical maps remains elusive. In this work, we address this crucial problem by proposing a systematic approach to efficiently realize the actions of all Hermitian-preserving maps.

\subsection{Overview and Structure of the Paper}
The implementation of non-physical maps poses a challenge mainly due to the requirement that the output of a valid quantum operation must be a density matrix. To overcome this restriction, our core idea is to change the carrier of density matrices. More concretely, although the output of a Hermitian-preserving map, $\mathcal{N}(\rho)$, might not necessarily be a density matrix but rather a general Hermitian matrix, Hermitian matrices still have physical meanings, such as Hamiltonians determining evolutions of physical systems. Therefore, by exponentiating a Hermitian-preserving map $\mathcal{N}(\cdot)$, we define a new map, $e^{-i\mathcal{N}(\cdot)t}$, which maps an input state to a unitary evolution. This new map contains all the information of the Hermitian-preserving map $\mathcal{N}$ and transforms this non-physical map from a purely mathematical object into a physical process. For simplicity, throughout this context, the phrase ``implement a non-physical map $\mathcal{N}$" generally denotes the realization of this new map $e^{-i\mathcal{N}(\cdot)t}$.

\begin{figure}[t]
\centering
\includegraphics[width=0.45\textwidth]{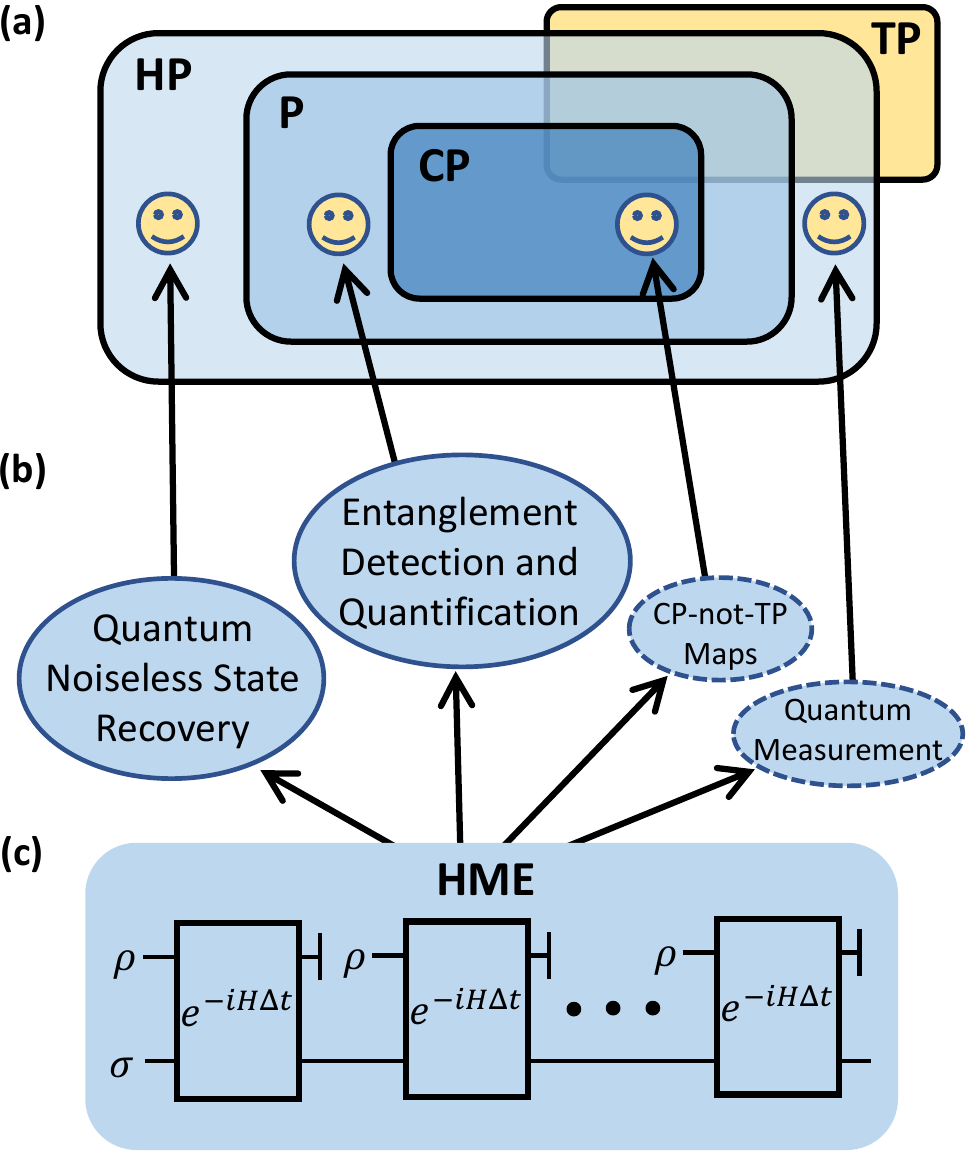}
\caption{(a) Diagrammatic representation of different types of maps, including CP (completely positive), P (positive), HP (Hermitian-preserving), and TP (trace-preserving) maps. Physical maps lie in the intersection of CP and TP. The smiling faces represent the maps used in the four applications listed in (b). These applications are all situated within the HP maps region, among which we extensively explore entanglement detection and quantification, as well as quantum noiseless state recovery. (c) Circuit diagram for HME, comprising three components: sequential input of identical states $\rho$, the evolved state $\sigma$ preserved using quantum memory, and joint Hamiltonian evolution.
}
\label{fig:overview}
\end{figure}

To realize the map $e^{-i\mathcal{N}(\cdot)t}$, we design a quantum algorithm called the \emph{Hermitian-preserving map exponentiation} (HME), as depicted in Fig.~\ref{fig:overview}(c). The quantum circuit of this algorithm involves sequentially preparing identical input states $\rho$ and repeatedly evolving the joint system under the evolution of $e^{-iH\Delta t}$, where the Hamitonian $H$ is determined by $\mathcal{N}$. The density matrix exponentiation algorithm \cite{Lloyd2014qpca,kjaergaard2022DME}, which has been proved to exhibit exponential speedup compared to single-copy strategies in certain tasks \cite{huang2022quantum}, can be regarded as a special case of HME by setting $\mathcal{N}$ to be the identity map. Given that Hermitian-preserving maps are significantly more general than CPTP maps and encompass a wide range of crucial non-physical maps, HME has the potential to be a key tool in various quantum information processing tasks. To assess the performance of HME, we conduct a comprehensive analysis, including its sample complexity for achieving the desired accuracy in realizing $e^{-i\mathcal{N}(\cdot)t}$ and its robustness against noises originating from the Hamiltonians and input states. We find that the infinite norm of the Hamiltonian $H$, the evolution time $t$, and the desired accuracy $\epsilon$ play pivotal roles in determining its performance. In addition, we prove that the HME algorithm achieves the lowest sample complexity for exponentiating certain Hermitian-preserving maps.

As mentioned earlier, one direct application of HME lies in entanglement detection and quantification, as positive maps are Hermitian-preserving. In entanglement detection, we incorporate HME into the quantum phase estimation algorithm \cite{kitaev1995quantum,nielsen2010quantum} to propose a new entanglement detection protocol. We demonstrate through an example that the HME-based protocol can offer exponential advantages compared to all single-copy approaches. Moreover, by combining the HME and Hadamard test algorithm \cite{knill1998power}, we develop a protocol to estimate entanglement negativity and compare it to conventional means, showcasing advantages in resource consumption. Another significant application is quantum noiseless state recovery, which relies on the ability of HME to implement the inverse map of a noise channel. We integrate HME into a simple quantum circuit to recover the noiseless state with arbitrary precision and analyze its performance. This protocol can handle any invertible noise when the description of the noise channel is known. In contrast to existing methods such as quantum error mitigation and quantum error correction, our protocol can recover the desired noiseless state from multiple noisy states, establishing a new approach for combating quantum noises. Furthermore, we explore the potential of HME in other scenarios, such as measuring expectation values and implementing completely positive but not trace-preserving maps, as listed in Fig.~\ref{fig:overview}(b). Notably, most of the quantum circuits used in these applications only require a small number of ancilla qubits, enhancing the feasibility of HME.

This paper is organized as follows. We introduce the HME algorithm in Sec.~\ref{sec:HME} and analyze its performances in Sec.~\ref{sec:performances}, including the upper bound on sample complexity in Sec.~\ref{subsec:complexity_upper_bound}, the robustness against different kinds of errors in Sec.~\ref{subsec:robustness}, and the optimality analysis in Sec.~\ref{subsec:optimality}. In Sec.~\ref{sec:ent_det}, we use two kinds of positive maps as examples to show how HME yields exponential speedups in the tasks of entanglement detection and quantification. In Sec.~\ref{sec:QNSR}, we show how to use HME to recover the noiseless quantum state and discuss its comparison with quantum error correction and mitigation. In Sec.~\ref{sec:other_app}, we discuss the applications of HME in the tasks of expectation value measurements and quantum algorithms including linear combination of unitaries and quantum imaginary time evolution. In Sec.~\ref{sec:conclusion}, we summarize the advantages of HME, discuss some possible generalizations, and pose some remaining open problems concerning the implementation of non-physical maps.

\subsection{Notations and Definitions}\label{sec:notation}
In this paper, we use letters $\mathcal{X}$, $\mathcal{Y}$, $\mathcal{H}$, $\mathcal{K}$, and $\mathcal{R}$ to represent Hilbert spaces. The symbol $\mathcal{N}$ is used to denote the  Hermitian-preserving map that we aim to implement. The symbols $\mathcal{U}$, $\mathcal{Q}$, and $\mathcal{I}$ are used to represent the CPTP maps that are directly implementable on physical systems. Scratch letters like $\mathscr{H}$ and $\mathscr{F}$ are used to represent sets. $\mathbb{I}$ denotes the identity matrix. We denote $L(\mathcal{H})$ as the set of all linear operators on the Hilbert space $\mathcal{H}$, and $ D(\mathcal{H})$ as the set of all density operators on $\mathcal{H}$. Furthermore, $ T(\mathcal{X},\mathcal{Y})$ represents the set of all linear maps from $L(\mathcal{X})$ to $L(\mathcal{Y})$. For a pure state $\ket{\psi}$, we sometimes omit the Dirac notation and denote the density matrix $\ketbra{\psi}{\psi}$ by $\psi$ for brevity. For a unitary operator, $U\in L(\mathcal{H})$, we denote the corresponding unitary channel as $\left[U\right]$, which satisfies $\left[U\right](\sigma)=U\sigma U^\dagger$ for $\sigma\in D(\mathcal{H})$. The notation of $\log(\cdot)$ in this work represents the natural logarithm. For a Hermitian matrix $A$ with spectral decomposition $A=\sum_{i}\lambda_i\ketbra{\alpha_i}{\alpha_i}$, we denote $A^+:=\sum_{i:\lambda_i>0}\lambda_i\ketbra{\alpha_i}{\alpha_i}$ as the positive part of $A$ and $A^-:=-\sum_{i:\lambda_i<0}\lambda_i\ketbra{\alpha_i}{\alpha_i}$ as the negative part of $A$.

When drawing tensor network diagrams, we use rounded rectangles to represent tensors and use lines of different colors to represent indices belonging to different systems. While drawing quantum circuits, we use regular rectangles to represent quantum operations and black lines to represent quantum systems.

We denote the operator norm as $\norm{\cdot}_{\infty}$, which represents the maximum singular value of a matrix, and the trace norm as $\norm{\cdot}_1$, which is the sum of singular values of a matrix. For two quantum states $\rho$ and $\sigma$, the trace distance and fidelity between them are $\norm{\rho-\sigma}_1$ and $F(\rho,\sigma):=\left(\tr\sqrt{\sqrt{\sigma}\rho\sqrt{\sigma}}\right)^2$, respectively. For a Hermitian-preserving map $\mathcal{Q}\in T(\mathcal{X},\mathcal{Y})$, its diamond norm is defined as $\norm{\mathcal{Q}}_{\diamond}
:=\sup_{\sigma_{\mathcal{X}\mathcal{R}}\in D(\mathcal{X}\mathcal{R}),\mathcal{R}}\norm{(\mathcal{Q}\otimes\mathcal{I}_{\mathcal{R}})(\sigma_{\mathcal{X}\mathcal{R}})}_1$,
where the symbol $\mathcal{X}\mathcal{R}$ represents the Hilbert space $\mathcal{X}\otimes\mathcal{R}$, $\mathcal{I}_\mathcal{R}$ is the identity channel on $\mathcal{R},$ and the supremum is taken over the reference system $\mathcal{R}$ of an arbitrary dimension and all the states $\sigma_{\mathcal{X}\mathcal{R}}\in D(\mathcal{X}\mathcal{R})$. The diamond distance of two Hermitian-preserving maps $\mathcal{Q}_1$ and $\mathcal{Q}_2$ is defined as $\norm{\mathcal{Q}_1-\mathcal{Q}_2}_{\diamond}$.

\section{Hermitian-preserving Map Exponentiation}\label{sec:HME}

The circuit for HME is shown in Fig.~\ref{fig:overview}(c). To implement the evolution of $e^{-i\mathcal{N}(\rho)t}$ on the state $\sigma$, the HME algorithm begins with preparing two quantum systems. The target state $\rho$ will be prepared on the first system, while the second system serves as a quantum memory to keep the state on which the evolution of $e^{-i\mathcal{N}(\rho)t}$ is applied. In the beginning, the quantum memory is prepared in the initial state $\sigma$. Based on the desired accuracy, the non-physical map $\mathcal{N}$, and the total evolution time $t$, one determines an appropriate Hamiltonian $H$ and a short time period $\Delta t$. These choices are guided by Theorem~\ref{theorem:maintheorem} and Theorem~\ref{theorem:cost}. Subsequently, one repeats the following steps for a total of $K=t/\Delta t$ times:
\begin{enumerate}
\item Prepare the target state $\rho$ on the first system.
\item Evolve the two systems jointly using $e^{-iH\Delta t}$.
\end{enumerate}

\begin{theorem}[Validation of HME]\label{theorem:maintheorem}
For a short time period $\Delta t$, we have
\begin{equation}\label{eq:maineq}
\Tr_1\left(e^{-iH\Delta t}(\rho\otimes\sigma)e^{iH\Delta t}\right)=e^{-i\mathcal{N}(\rho)\Delta t}\sigma e^{i\mathcal{N}(\rho)\Delta t}+\mathcal{O}(\Delta t^2).
\end{equation}
Here, $\Tr_1$ denotes the partial trace over the first system, $\mathcal{N}$ represents the target Hermitian-preserving map, $H=\Lambda_{\mathcal{N}}^{T_1}$ with $\Lambda_{\mathcal{N}}=(\mathcal{I}\otimes\mathcal{N})\ketbra{\Phi^+}{\Phi^+}$ being the Choi matrix for $\mathcal{N}$, $\ket{\Phi^+}=\sum_i\ket{ii}$ denotes the unnormalized maximally entangled state, and $T_1$ represents the partial transposition operation on the first system.
\end{theorem}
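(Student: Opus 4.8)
The plan is to establish the identity by a first-order Taylor expansion in $\Delta t$ on both sides, which reduces the entire claim to a single commutator identity that isolates the action of $\mc{N}$. First I would expand the left-hand side using $e^{-iH\Delta t}Xe^{iH\Delta t}=X-i\Delta t[H,X]+\mc{O}(\Delta t^2)$ with $X=\rho\otimes\sigma$, then apply $\Tr_1$ term by term. The zeroth-order contribution is $\Tr_1(\rho\otimes\sigma)=\Tr(\rho)\sigma=\sigma$ since $\rho$ is a normalized state. Expanding the right-hand side in the same way gives $\sigma-i\Delta t[\mc{N}(\rho),\sigma]+\mc{O}(\Delta t^2)$, so the zeroth-order terms match immediately and the whole theorem collapses to proving
\begin{equation}
\Tr_1\left([H,\rho\otimes\sigma]\right)=[\mc{N}(\rho),\sigma].
\end{equation}

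The heart of the argument is evaluating this commutator, and this is where the specific form $H=\Lambda_{\mc{N}}^{T_1}$ is essential. I would first write the Choi matrix explicitly as $\Lambda_{\mc{N}}=\sum_{i,j}\ketbra{i}{j}\otimes\mc{N}(\ketbra{i}{j})$ and apply the partial transpose on the first system to obtain $H=\sum_{i,j}\ketbra{j}{i}\otimes\mc{N}(\ketbra{i}{j})$. Computing $\Tr_1\left(H(\rho\otimes\sigma)\right)$ then produces the scalar factors $\Tr(\ketbra{j}{i}\rho)=\bra{i}\rho\ket{j}$, so that the sum reassembles, using the linearity of $\mc{N}$ and $\rho=\sum_{i,j}\bra{i}\rho\ket{j}\ketbra{i}{j}$, into $\mc{N}(\rho)\sigma$. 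The opposite ordering $\Tr_1\left((\rho\otimes\sigma)H\right)$ yields $\sigma\mc{N}(\rho)$ by the identical mechanism, and subtracting the two gives exactly $[\mc{N}(\rho),\sigma]$, as required.

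The conceptual crux — and the step I expect to be the main obstacle to get right — is recognizing that the partial transpose is precisely what converts the contraction structure of the Choi matrix into a partial trace that reconstructs $\mc{N}(\rho)$ rather than $\mc{N}(\rho^T)$; repeating the computation without the transpose shows that the indices would contract against $\rho$ in transposed order, confirming that $T_1$ is not merely cosmetic. I would also insert a brief remark verifying that $H$ is Hermitian, so that $e^{-iH\Delta t}$ is genuinely unitary: since $\mc{N}$ is Hermitian-preserving, $\Lambda_{\mc{N}}$ is Hermitian, and the partial transpose of a Hermitian matrix is again Hermitian. The remaining work — bounding and collecting the $\mc{O}(\Delta t^2)$ remainder terms from both expansions — is routine and requires only that $H$, $\rho$, and $\sigma$ have bounded norm.
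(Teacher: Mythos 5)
Your proposal is correct and follows essentially the same route as the paper: a first-order Taylor expansion reducing the claim to the commutator identity $\Tr_1\left([H,\rho\otimes\sigma]\right)=[\mathcal{N}(\rho),\sigma]$, which is then verified from the structure of the partially transposed Choi matrix. The only difference is presentational — you carry out the index contraction explicitly, whereas the paper invokes the Choi--Jamio{\l}kowski identity $\mathcal{N}(\rho)=\Tr_1\left(\Lambda_{\mathcal{N}}(\rho^T\otimes\mathbb{I})\right)$ and supplements it with a tensor-network diagram; both are the same computation.
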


\begin{proof}
Substituting the Taylor expansion $e^{-iH\Delta t}=\mathbb{I}-iH\Delta t+\mathcal{O}(\Delta t^2)$ into Eq.~\eqref{eq:maineq}, the left-hand side becomes
\begin{equation}\label{eq:proof_of_main_left_hand_side}
\sigma -i\Delta t\Tr_1\big(\left[H,\rho\otimes\sigma\right]\big) + \mathcal{O}(\Delta t^2).
\end{equation}
According to the Choi–Jamiołkowski isomorphism \cite{choi1975cp}, the resulting state of a map $\mathcal{N}$ can be represented using the Choi matrix $\Lambda_\mathcal{N}$, as $\mathcal{N}(\rho)=\Tr_1\left(\Lambda_\mathcal{N}(\rho^T\otimes\mathbb{I})\right)$. Following this definition, we can rewrite the coefficient of the first-order term in Eq.~\eqref{eq:proof_of_main_left_hand_side} as
\begin{equation}
\Tr_1\big(\left[H,\rho\otimes\sigma\right]\big)=\left[\Tr_1\left(\Lambda_{\mathcal{N}}(\rho^T\otimes\mathbb{I})\right),\sigma\right],
\end{equation}
which by definition equals to $[\mathcal{N}(\rho),\sigma]$. It can be verified that $[\mathcal{N}(\rho),\sigma]$ is also the coefficient of the first-order term of $e^{-i\mathcal{N}(\rho)\Delta t}\sigma e^{i\mathcal{N}(\rho)\Delta t}$. To provide a clearer derivation, we present a graphical demonstration of this proof based on tensor network representation in Fig.~\ref{fig:proof}(b).

Since $H=\Lambda_\mathcal{N}^{T_1}$ and $H$ is the Hamiltonian of the composite system, $\Lambda_{\mathcal{N}}$ must be a Hermitian operator. Therefore, the only restriction on $\mathcal{N}$ is that it should be Hermitian-preserving.
\end{proof}

\begin{figure}[t]
\centering
\includegraphics[width=0.48\textwidth]{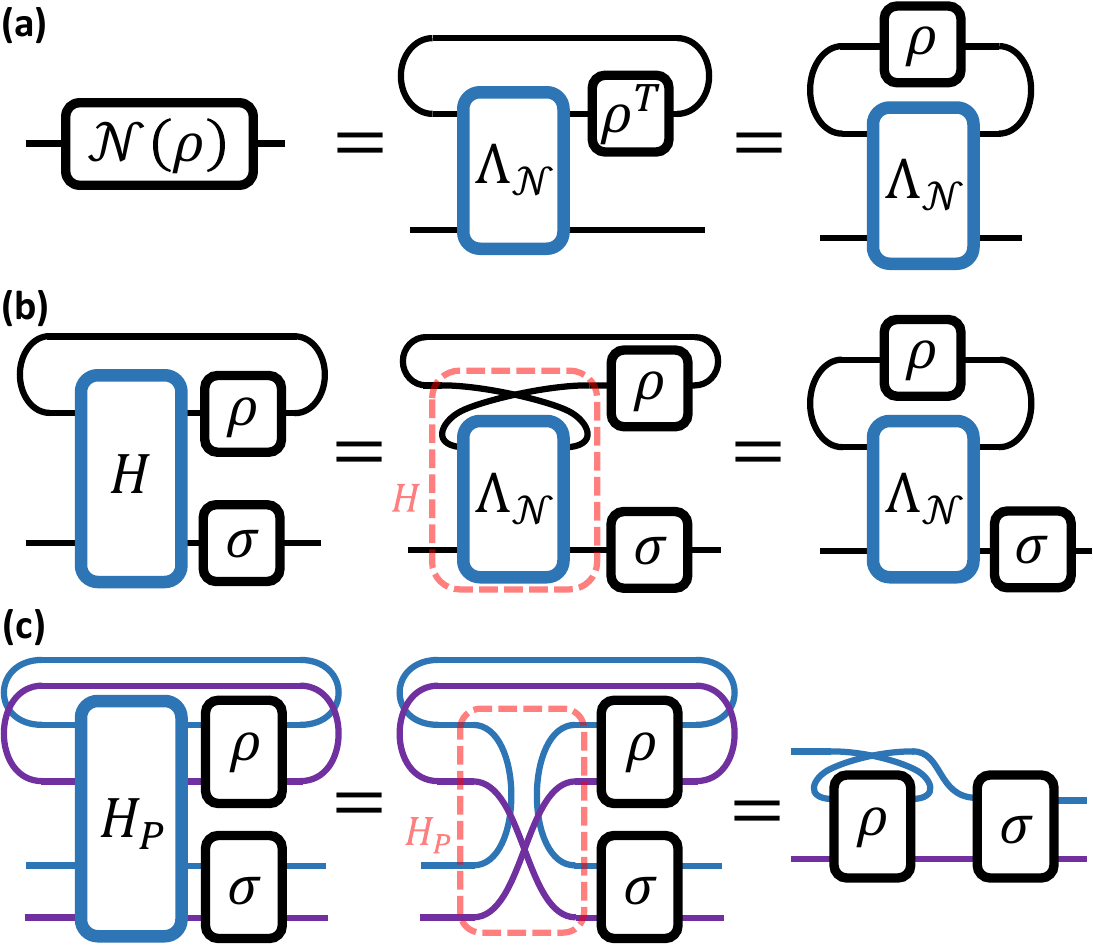}
\caption{Proof of Theorem~\ref{theorem:maintheorem} using tensor network calculations. A matrix is represented using a box with left and right legs, where the left legs represent row indices and the right legs represent column indices. The connection of legs indicates the contraction between indices. In (a), we illustrate the tensor network representation of the Choi matrix and how it can be used to represent the action of a linear map. In (b), we use the exchange of the left and right legs to represent the transposition operation. Consequently, it becomes evident that in order to ensure $\Tr_1[H(\rho\otimes\sigma)]=\mathcal{N}(\rho)\sigma$, we must have $H=\Lambda_\mathcal{N}^{T_1}$, which is highlighted by the red dashed box. In (c), we graphically demonstrate the validation of $H_P=\Phi^+_A\otimes S_B$ for realizing the evolution of $e^{-i\rho^{T_A}t}$. We use blue and purple legs to represent the indices of subsystems $A$ and $B$, respectively.}
\label{fig:proof}
\end{figure}

An illustrative example is the partial transposition map $e^{-i\rho^{T_A}t}$, where $\rho$ is a bipartite state with two subsystems $A$ and $B$. According to Theorem~\ref{theorem:maintheorem}, the corresponding Hamiltonian for HME is given by $H_P=\Phi^+_A\otimes S_B$, depicted within the red dashed box in Fig.~\ref{fig:proof}(c). the two blue half circles within the box represent the unnormalized maximally entangled state $\Phi^+_A$, while the cross of purple lines represents the SWAP operator $S_B$. We can deduce the validity of $\Tr_1\big(\left[H_P,\rho\otimes\sigma\right]\big)=[\rho^{T_A},\sigma]$ based on the connection rule of the legs. As a typical Hermitian-preserving map, the partial transposition map will be frequently used in the following discussion.

We note that the Hamiltonian used in HME is not the Choi matrix of $\mathcal{N}$, but the partial transposition of the Choi matrix. An intuitive way to grasp the role of the partial transposition operation in the Hamiltonian is by considering it as a result of concatenating the identity map. By rewriting $\mathcal{N}(\rho)=\mathcal{N}\circ\mathcal{I}(\rho)$, we can interpret HME as a process of first encoding $\rho$ into the evolution using the identity map $\mathcal{I}$, and then applying the map $\mathcal{N}$ to it. Since the state-encoding procedure is equivalent to the density matrix exponentiation algorithm, the Hamiltonian for implementing the identity map is the SWAP operator, which can also be derived using Theorem~\ref{theorem:maintheorem}. Then, to perform the map $\mathcal{N}$ on the encoded state, we concatenate the Choi matrix of $\mathcal{N}$ to the SWAP operator, resulting in the final Hamiltonian of HME, $\Lambda_\mathcal{N}^{T_1}$. The concatenation with the SWAP operator results in the partial transposition of $\Lambda_{\mathcal{N}}$, which can be visualized using the red dashed box in Fig.~\ref{fig:proof}(b), where the cross of the black lines above $\Lambda_{\mathcal{N}}$ represents the SWAP operator.

\section{Performance Analysis}\label{sec:performances}

As a quantum algorithm, we thoroughly analyze the performance of HME, including its sample complexity upper bound and robustness against various errors. Furthermore, we provide a fundamental lower bound of the sample complexity for exponentiating a Hermitian-preserving map and show the optimality of HME in certain cases.

\subsection{The Upper Bound of Sample Complexity}\label{subsec:complexity_upper_bound}
HME realizes the evolution of $e^{-i\mathcal{N}(\rho)t}$ by sequentially inputting the target state $\rho$. Thus, an essential indicator for analyzing the performance of HME is the number of copies needed for realizing the desired evolution with an error up to $\epsilon$. According to Theorem~\ref{theorem:maintheorem}, the difference between the ideal and real channels for a single step of the experiment is a second-order term. Thus the error could be suppressed by choosing a shorter time slice $\Delta t$, or equivalently, by using more copies of $\rho$. 

\begin{theorem}[Upper bound of sample complexity]\label{theorem:cost}
Let $\mathcal{N}$ be an arbitrary Hermitian-preserving map, the HME algorithm shown in Fig.~\ref{fig:overview}(c) requires at most $\mathcal{O}\left(\epsilon^{-1}\norm{H}_{\infty}^2t^2\right)$ copies of sequentially inputting state $\rho$ to ensure that $\norm{\mathcal{Q}_t-\mathcal{U}_t}_\diamond\le\epsilon$ holds for arbitrary $\rho$. Here, $H=\Lambda_{\mathcal{N}}^{T_1}$, $\mathcal{Q}_t=\mathcal{Q}_{\Delta t}^{\circ K}$ represents the real channel with $\mathcal{Q}_{\Delta t}(\sigma):=\Tr_1\left[e^{-iH\Delta t}(\rho\otimes\sigma)e^{iH\Delta t}\right]$ and $K=t/\Delta t$, $\mathcal{U}_t$ is the ideal evolution channel corresponding to $e^{-i\mathcal{N}(\rho)t}$, and $\norm{\cdot}_\diamond$ denotes the diamond norm.
\end{theorem}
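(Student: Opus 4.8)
The plan is to reduce the $K$-step diamond-norm error to $K$ copies of a single-step error and then to make the $\mathcal{O}(\Delta t^2)$ remainder of Theorem~\ref{theorem:maintheorem} quantitative, tracking the explicit dependence on $\norm{H}_{\infty}$. Since each of the $K=t/\Delta t$ rounds consumes exactly one copy of $\rho$, the number of copies equals $K$, so it suffices to show that $\norm{\mathcal{Q}_t-\mathcal{U}_t}_\diamond$ can be driven below $\epsilon$ with $K=\mathcal{O}(\epsilon^{-1}\norm{H}_{\infty}^2 t^2)$, i.e. that a single step contributes an error of order $\norm{H}_{\infty}^2\Delta t^2$ and that these errors add linearly.

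First I would bound the single-step diamond distance. Fixing a reference system $\mathcal{R}$ and an input $\sigma_{\mathcal{X}\mathcal{R}}$, I define $f(s):=\Tr_1[(e^{-iHs}\otimes\mathbb{I}_\mathcal{R})(\rho\otimes\sigma_{\mathcal{X}\mathcal{R}})(e^{iHs}\otimes\mathbb{I}_\mathcal{R})]$ and $g(s):=(e^{-i\mathcal{N}(\rho)s}\otimes\mathbb{I}_\mathcal{R})\sigma_{\mathcal{X}\mathcal{R}}(e^{i\mathcal{N}(\rho)s}\otimes\mathbb{I}_\mathcal{R})$, so that $(\mathcal{Q}_{\Delta t}\otimes\mathcal{I}_\mathcal{R})(\sigma_{\mathcal{X}\mathcal{R}})=f(\Delta t)$ and $(\mathcal{U}_{\Delta t}\otimes\mathcal{I}_\mathcal{R})(\sigma_{\mathcal{X}\mathcal{R}})=g(\Delta t)$. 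Theorem~\ref{theorem:maintheorem} gives $f(0)=g(0)$ and $f'(0)=g'(0)$, so the second-order Taylor remainder yields $f(\Delta t)-g(\Delta t)=\int_0^{\Delta t}(\Delta t-u)\,[f''(u)-g''(u)]\,du$. Differentiating the unitary conjugations twice expresses $f''(u)$ and $g''(u)$ as partial traces of nested commutators of density matrices with $H\otimes\mathbb{I}_\mathcal{R}$ and with $\mathcal{N}(\rho)\otimes\mathbb{I}_\mathcal{R}$ respectively; the elementary bound $\norm{[A,[A,X]]}_1\le 4\norm{A}_{\infty}^2\norm{X}_1$ together with the contractivity of the partial trace then gives $\norm{f''(u)}_1\le 4\norm{H}_{\infty}^2$ and $\norm{g''(u)}_1\le 4\norm{\mathcal{N}(\rho)}_{\infty}^2$.

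The one technical input I would isolate is the estimate $\norm{\mathcal{N}(\rho)}_{\infty}\le\norm{H}_{\infty}$; this is needed because the partial transposition $H=\Lambda_{\mathcal{N}}^{T_1}$ can change operator norms, so the inequality is not immediate. Using the trace-over-system-1 form of the Choi formula, $\mathcal{N}(\rho)=\Tr_1[H(\rho\otimes\mathbb{I})]$, for any unit vector $\ket{\psi}$ one has $\bra{\psi}\mathcal{N}(\rho)\ket{\psi}=\Tr[H(\rho\otimes\ketbra{\psi}{\psi})]$, and trace--operator-norm duality with $\norm{\rho\otimes\ketbra{\psi}{\psi}}_1=1$ gives $|\bra{\psi}\mathcal{N}(\rho)\ket{\psi}|\le\norm{H}_{\infty}$; since $\mathcal{N}(\rho)$ is Hermitian this controls its operator norm. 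Combining the three estimates yields $\norm{f''(u)-g''(u)}_1\le 8\norm{H}_{\infty}^2$ uniformly in $u$ and in $\sigma_{\mathcal{X}\mathcal{R}}$, hence $\norm{f(\Delta t)-g(\Delta t)}_1\le 4\norm{H}_{\infty}^2\Delta t^2$, and taking the supremum over $\mathcal{R}$ and $\sigma_{\mathcal{X}\mathcal{R}}$ gives $\norm{\mathcal{Q}_{\Delta t}-\mathcal{U}_{\Delta t}}_\diamond\le 4\norm{H}_{\infty}^2\Delta t^2$.

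Finally I would accumulate the error over the $K$ rounds. Both $\mathcal{Q}_{\Delta t}$ and $\mathcal{U}_{\Delta t}$ are CPTP, so the telescoping identity $\mathcal{Q}_{\Delta t}^{\circ K}-\mathcal{U}_{\Delta t}^{\circ K}=\sum_{j=0}^{K-1}\mathcal{Q}_{\Delta t}^{\circ(K-1-j)}\circ(\mathcal{Q}_{\Delta t}-\mathcal{U}_{\Delta t})\circ\mathcal{U}_{\Delta t}^{\circ j}$, combined with submultiplicativity of the diamond norm and the fact that CPTP maps have unit diamond norm, gives $\norm{\mathcal{Q}_t-\mathcal{U}_t}_\diamond\le K\norm{\mathcal{Q}_{\Delta t}-\mathcal{U}_{\Delta t}}_\diamond\le 4K\norm{H}_{\infty}^2\Delta t^2=4\norm{H}_{\infty}^2 t^2/K$. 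Requiring this to be at most $\epsilon$ yields $K\ge 4\norm{H}_{\infty}^2 t^2/\epsilon$, so $\mathcal{O}(\epsilon^{-1}\norm{H}_{\infty}^2 t^2)$ copies suffice. The main obstacle is the step upgrading the schematic $\mathcal{O}(\Delta t^2)$ of Theorem~\ref{theorem:maintheorem} into a bound that is simultaneously uniform over all reference extensions (so that it controls the diamond norm rather than merely the induced trace norm) and carries the correct $\norm{H}_{\infty}^2$ scaling, for which the Hermiticity-based estimate $\norm{\mathcal{N}(\rho)}_{\infty}\le\norm{H}_{\infty}$ is the crucial ingredient.
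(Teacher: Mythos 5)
Your proposal is correct and follows essentially the same route as the paper's proof in Appendix~\ref{app:Proof_of_Thm2}: subadditivity of the diamond distance reduces the problem to a single step, the per-step error is shown to be $\mathcal{O}(\norm{H}_{\infty}^2\Delta t^2)$ uniformly over reference extensions, and the crucial ingredient is the same inequality $\norm{\mathcal{N}(\rho)}_{\infty}\le\norm{H}_{\infty}$ (which the paper proves graphically in Fig.~\ref{fig:norm_inequality} and you prove algebraically via $\bra{\psi}\mathcal{N}(\rho)\ket{\psi}=\Tr[H(\rho\otimes\ketbra{\psi}{\psi})]$ and H\"older). The only technical difference is that you bound the second-order remainder via the integral form of Taylor's theorem with a uniform bound on $f''$ and $g''$, whereas the paper sums the tail of the adjoint-action series, which costs it the extra hypothesis $\norm{H}_{\infty}\Delta t\le 0.8$ that your variant avoids.
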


Here, we provide a sketch of the main idea for the proof of Theorem~\ref{theorem:cost}. The complete proof is technically involved and thus left to Appendix~\ref{app:Proof_of_Thm2}. We divide the ideal evolution channel $\mathcal{U}_t$ into $K$ slices $\mathcal{U}_t=\mathcal{U}_{\Delta t}^{\circ K}$, where $\mathcal{U}_{\Delta t}$ refers to the unitary evolution of $e^{-i\mathcal{N}(\rho)\Delta t}$. Intuitively, the closeness of $\mathcal{Q}_{\Delta t}$ and $\mathcal{U}_{\Delta t}$ ensures the closeness of $\mathcal{Q}_{t}$ and $\mathcal{U}_{t}$. According to the subadditivity property of the diamond distance (see Lemma~\ref{lemma:subadditivity} in Appendix~\ref{app:Auxiliary_Lemmas}), we obtain
\begin{equation}
\norm{\mathcal{Q}_t-\mathcal{U}_t}_\diamond\le K\norm{\mathcal{Q}_{\Delta t}-\mathcal{U}_{\Delta t}}_\diamond,
\end{equation}
which implies that the error accumulates linearly during the sequential operations. By employing certain matrix inequalities, we arrive at 
\begin{equation}
\norm{\mathcal{U}_{\Delta t}-\mathcal{Q}_{\Delta t}}_\diamond\le\mathcal{O}\left(\norm{H}_{\infty}^2\Delta t^2\right)=\mathcal{O}\left(\norm{H}_{\infty}^2t^2/K^2\right).
\end{equation}
Hence, the total diamond distance scales as $\mathcal{O}\left(\norm{H}_{\infty}^2t^2/K\right)$. Therefore, in order to ensure that the total diamond distance is less than $\epsilon$, the number of steps $K$, or equivalently, the number of copies of $\rho$, needs to be at most $\mathcal{O}\left(\epsilon^{-1}\norm{H}_{\infty}^2t^2\right)$.

According to the definition of the diamond norm, Theorem~\ref{theorem:cost} implies that for any state $\sigma_{\mathcal{K}\mathcal{R}}\in\mathcal{K}\otimes\mathcal{R}$, where $\mathcal{K}$ is the Hilbert space on which the channels $\mathcal{Q}_t$ and $\mathcal{U}_t$ are defined, and $\mathcal{R}$ is a reference system with arbitrary dimension, if the number of steps in the HME algorithm shown in Fig.~\ref{fig:overview}(c) satisfies $K=\Theta\left(\epsilon^{-1}\norm{H}_{\infty}^2t^2\right)$, then
\begin{equation}
\norm{\left(\mathcal{U}_t-\mathcal{Q}_t\right)\otimes \mathcal{I}_\mathcal{R}(\sigma_{\mathcal{K}\mathcal{R}})}_1\le \epsilon.
\end{equation}
This implies that states resulting from the real and ideal evolutions are close in terms of trace distance. Consequently, when measuring the real post-evolution state, the measurement outcomes will be close to those obtained from the ideal post-evolution state. This property will be frequently utilized in our analysis. Another crucial insight provided by Theorem~\ref{theorem:cost} is that the complexity of HME highly depends on the operator norm of the evolution Hamiltonian. This finding enables us to identify scenarios that are suitable for HME and is instrumental in analyzing the sample complexities for applications we will explore in subsequent discussions.

By specializing $\mathcal{N}=\mathcal{I}$ and $H=S$, the sample complexity upper bound provided by Theorem~\ref{theorem:cost} recovers the previously obtained upper bound of $\mathcal{O}\left(\epsilon^{-1}t^2\right)$ for density matrix exponentiation \cite{Lloyd2014qpca, Kimmel2017hamiltonian}, as $\norm{S}_{\infty}=1$. This upper bound has been proven to be tight \cite{Kimmel2017hamiltonian}. While the upper bound $\mathcal{O}\left(\epsilon^{-1}\norm{H}_{\infty}^2t^2\right)$ provided by Theorem~\ref{theorem:cost} holds for all Hermitian-preserving maps, it may not be a tight bound in certain cases. For instance, when considering the partial transposition map, we observe that the sample complexity of implementing it with HME can be lower than what is predicted by Theorem~\ref{theorem:cost}, based on the specific properties of the partial transposition map. We provide the detailed analysis in Appendix~\ref{app:proof_of_err:PT} and will employ this result in the application of entanglement quantification.

In practical quantum information processing tasks, the HME algorithm is often employed in conjunction with other quantum algorithms such as quantum phase estimation or Hadamard test. These algorithms typically require a controlled version of the evolution $e^{-i\mathcal{N}(\rho)t}$, as illustrated in the circuits of Fig.~\ref{fig:ent_det}, Fig.~\ref{fig:QNSR}, and Fig.~\ref{fig:obs_circuit}. Thus, to analyze the performance of HME in practical applications, it is crucial to determine the cost of realizing a controlled-$e^{-i\mathcal{N}(\rho)t}$ evolution.

It is indeed straightforward to generalize Theorem~\ref{theorem:cost} to the controlled evolution $\mathrm{C}\text{-}e^{-i\mathcal{N}(\rho)t}:=\ketbra{0}{0}_c\otimes \mathbb{I}+\ketbra{1}{1}_c\otimes e^{-i\mathcal{N}(\rho)t}$,
where the subscript $c$ denotes the control qubit. The core idea of this generalization is based on the equality
\begin{equation}
\mathrm{C}\text{-}e^{-i\mathcal{N}(\rho)t}=e^{-i\ketbra{1}{1}_c\otimes\mathcal{N}(\rho)t},
\end{equation}
by which we can define a new Hermitian-preserving map $\mathcal{N}^\prime(\rho)=\ketbra{1}{1}_c\otimes\mathcal{N}(\rho)$.
Therefore, in the circuit of HME depicted in Fig.~\ref{fig:overview}(c), we can add a control qubit and regard the control qubit together with $\sigma$ as the evolved state of a new HME circuit, with a new Hamiltonian
$H^\prime=\ketbra{1}{1}_c\otimes H$, according to Theorem~\ref{theorem:maintheorem}.
Since $\ketbra{1}{1}_c\otimes H$ has the same operator norm as $H$, according to Theorem~\ref{theorem:cost}, we have:

\begin{corollary}\label{corollary:cost(controlled_version)}
We need at most $\mathcal{O}\left(\epsilon^{-1}\norm{H}_{\infty}^2t^2\right)$ copies of $\rho$ to perform $\mathrm{C}\text{-}e^{-i\mathcal{N}(\rho)t}$ to precision $\epsilon$ in diamond distance, where $H=\Lambda_{\mathcal{N}}^{T_1}$.
\end{corollary}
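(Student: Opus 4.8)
The plan is to reduce the controlled case to the unitary case already settled by Theorem~\ref{theorem:cost}, by absorbing the control register into a single enlarged Hermitian-preserving map. The starting point is the identity
\begin{equation}
\mathrm{C}\text{-}e^{-i\mathcal{N}(\rho)t}=e^{-i\left(\ketbra{1}{1}_c\otimes\mathcal{N}(\rho)\right)t},
\end{equation}
which holds because $\ketbra{1}{1}_c\otimes\mathcal{N}(\rho)$ annihilates the $\ket{0}_c$ sector and acts as $\mathcal{N}(\rho)$ on the $\ket{1}_c$ sector, so its exponential splits into the identity on the former and $e^{-i\mathcal{N}(\rho)t}$ on the latter. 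This motivates defining the map $\mathcal{N}'(\rho):=\ketbra{1}{1}_c\otimes\mathcal{N}(\rho)$, and the first step is to check that $\mathcal{N}'$ is Hermitian-preserving: if $\rho$ is Hermitian then $\mathcal{N}(\rho)$ is Hermitian by hypothesis, and tensoring with the Hermitian projector $\ketbra{1}{1}_c$ keeps it Hermitian, so Theorems~\ref{theorem:maintheorem} and~\ref{theorem:cost} both apply to $\mathcal{N}'$.

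Next I would compute the Hamiltonian that Theorem~\ref{theorem:maintheorem} assigns to $\mathcal{N}'$, namely $H'=\Lambda_{\mathcal{N}'}^{T_1}$. Expanding the Choi matrix gives $\Lambda_{\mathcal{N}'}=(\mathcal{I}\otimes\mathcal{N}')\ketbra{\Phi^+}{\Phi^+}=\sum_{ij}\ketbra{i}{j}\otimes\ketbra{1}{1}_c\otimes\mathcal{N}(\ketbra{i}{j})$, which is simply $\Lambda_{\mathcal{N}}$ with the projector $\ketbra{1}{1}_c$ inserted on the control register. Since the partial transposition $T_1$ acts only on the input copy and leaves the control register untouched, this factor passes through the transposition unchanged, and we obtain $H'=\ketbra{1}{1}_c\otimes\Lambda_{\mathcal{N}}^{T_1}=\ketbra{1}{1}_c\otimes H$, exactly the enlarged Hamiltonian described before the statement.

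The key quantitative observation is that this enlargement does not inflate the operator norm: because the operator norm is multiplicative across tensor factors and $\norm{\ketbra{1}{1}_c}_\infty=1$, we have $\norm{H'}_\infty=\norm{\ketbra{1}{1}_c\otimes H}_\infty=\norm{H}_\infty$. Running HME for the map $\mathcal{N}'$ with Hamiltonian $H'$ therefore realizes the channel corresponding to $e^{-i\mathcal{N}'(\rho)t}=\mathrm{C}\text{-}e^{-i\mathcal{N}(\rho)t}$, and Theorem~\ref{theorem:cost} bounds the number of copies of $\rho$ needed to reach diamond distance $\epsilon$ by $\mathcal{O}\left(\epsilon^{-1}\norm{H'}_\infty^2t^2\right)=\mathcal{O}\left(\epsilon^{-1}\norm{H}_\infty^2t^2\right)$, which is the claim.

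I expect no genuine obstacle here, since the argument is a direct lift of Theorem~\ref{theorem:cost}; the one point demanding care is the bookkeeping of tensor-factor ordering in the Choi-matrix computation, ensuring that the control register threads through the partial transposition cleanly so that the norm-preservation identity $\norm{H'}_\infty=\norm{H}_\infty$ holds exactly rather than up to a dimension-dependent constant. Once that identity is confirmed, the sample-complexity bound transfers verbatim.
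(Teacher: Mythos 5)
Your proposal is correct and follows essentially the same route as the paper: rewriting the controlled evolution as $e^{-i\ketbra{1}{1}_c\otimes\mathcal{N}(\rho)t}$, defining the enlarged Hermitian-preserving map $\mathcal{N}'$ with Hamiltonian $H'=\ketbra{1}{1}_c\otimes H$, and invoking Theorem~\ref{theorem:cost} together with $\norm{H'}_\infty=\norm{H}_\infty$. Your explicit Choi-matrix check of $H'$ is a welcome addition the paper leaves implicit, but the argument is the same.
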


Based on the same reasoning, Corollary~\ref{corollary:cost(controlled_version)} can be straightforwardly extended to scenarios where the number of control qubits is greater than one.

\subsection{Robustness}\label{subsec:robustness}

In practical situations, the performance of HME can be affected by various factors. For example, quantum devices may exhibit unpredictable and unavoidable noise, and the realization of the Hamiltonian evolution $e^{-iH\Delta t}$ may be inaccurate due to approximation errors in Hamiltonian simulation \cite{suzuki1990fractal,suzuki1991fractal}. Hence, the robustness of HME is another important indicator that needs in-depth analysis. We mainly consider two kinds of errors, the input state error, and the Hamiltonian error. These errors can lead to biases and variations in the input states and the evolution Hamiltonians at each step. The error of HME caused by these factors can be bounded by:

\begin{theorem}[Robustness]\label{theorem:robustness}
Suppose the input states and Hamiltonians used in Fig.~\ref{fig:overview}(c) vary at each step, denoted as $\{\rho_k^\prime\}_{k=1}^K$ and $\{H_k^\prime\}_{k=1}^K$. If the number of steps satisfies $K\ge2t\norm{H}_{\infty}$ and $K\ge2t\norm{H^\prime_k}_{\infty}$ for $k=1,\cdots,K$,
then the diamond distance between the noisy channel $\mathcal{Q}_t^\prime$ constructed using $\{\rho_k^\prime\}_{k=1}^K$, and $\{H_k^\prime\}_{k=1}^K$ and the noiseless channel $\mathcal{Q}_t$ constructed using fixed $\rho$ and $H$, is bounded by
\begin{equation}\label{eq:robustness}
\norm{\mathcal{Q}^{\prime}_{t}-\mathcal{Q}_{t}}_{\diamond}\le 4t\left(D_H+\norm{H}_{\infty}D_S\right),
\end{equation}
where
$D_H:=\frac{1}{K}\sum_{k=1}^K\norm{H^{\prime}_k-H}_{\infty}$ and $D_S:=\frac{1}{K}\sum_{k=1}^K\norm{\rho^{\prime}_k-\rho}_1$ are the average divergences between the Hamiltonians and input states, respectively.
\end{theorem}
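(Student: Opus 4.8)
The plan is to convert the global diamond-distance bound into a sum of single-step estimates by a telescoping (hybrid) argument, and then to show that each single-step discrepancy is first order in $\Delta t=t/K$ and proportional to the local errors.

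First I would write both channels as compositions of $K$ maps on the memory system, $\mathcal{Q}_t=\mathcal{Q}_{\Delta t}^{\circ K}$ and $\mathcal{Q}_t^\prime=\mathcal{Q}_K^\prime\circ\cdots\circ\mathcal{Q}_1^\prime$, where $\mathcal{Q}_k^\prime(\sigma)=\Tr_1[e^{-iH_k^\prime\Delta t}(\rho_k^\prime\otimes\sigma)e^{iH_k^\prime\Delta t}]$ is the $k$-th noisy step. Each such single-step map is CPTP---prepare an ancilla in a fixed state, apply a unitary, and trace it out---hence a contraction under the diamond norm. Combining the telescoping identity with the subadditivity of the diamond distance (Lemma~\ref{lemma:subadditivity}) then yields
\[
\norm{\mathcal{Q}_t^\prime-\mathcal{Q}_t}_\diamond\le\sum_{k=1}^K\norm{\mathcal{Q}_k^\prime-\mathcal{Q}_{\Delta t}}_\diamond,
\]
so the error accumulates at most linearly and it suffices to control a single step.

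For one step I would insert the intermediate channel $\hat{\mathcal{Q}}_k(\sigma)=\Tr_1[e^{-iH\Delta t}(\rho_k^\prime\otimes\sigma)e^{iH\Delta t}]$, which carries the noisy state but the exact Hamiltonian, and split the single-step term into a Hamiltonian-error piece $\norm{\mathcal{Q}_k^\prime-\hat{\mathcal{Q}}_k}_\diamond$ and a state-error piece $\norm{\hat{\mathcal{Q}}_k-\mathcal{Q}_{\Delta t}}_\diamond$. The Hamiltonian piece I would handle by writing $V\tau V^\dagger-W\tau W^\dagger=(V-W)\tau V^\dagger+W\tau(V^\dagger-W^\dagger)$ with $V=e^{-iH_k^\prime\Delta t}$ and $W=e^{-iH\Delta t}$, using unitary invariance and the monotonicity of the trace norm under partial trace, and bounding $\norm{V-W}_\infty\le\Delta t\norm{H_k^\prime-H}_\infty$ through the integral representation of a difference of unitaries; this gives $\norm{\mathcal{Q}_k^\prime-\hat{\mathcal{Q}}_k}_\diamond\le 2\Delta t\norm{H_k^\prime-H}_\infty$. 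The state piece rests on a cancellation: the zeroth-order (identity) contribution is $\Tr_1[(\rho_k^\prime-\rho)\otimes\sigma]=\Tr[\rho_k^\prime-\rho]\,\sigma=0$ since both states are normalized. Differentiating $\Tr_1[e^{-iHs}((\rho_k^\prime-\rho)\otimes\sigma)e^{iHs}]$ in $s$ produces a commutator with $H$, and integrating over $s\in[0,\Delta t]$ together with $\norm{[A,B]}_1\le 2\norm{A}_\infty\norm{B}_1$ gives $\norm{\hat{\mathcal{Q}}_k-\mathcal{Q}_{\Delta t}}_\diamond\le 2\Delta t\norm{H}_\infty\norm{\rho_k^\prime-\rho}_1$.

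Substituting $\Delta t=t/K$ and summing collects the two pieces into the average divergences,
\[
\norm{\mathcal{Q}_t^\prime-\mathcal{Q}_t}_\diamond\le 2t\,D_H+2t\norm{H}_\infty D_S,
\]
which sits within the claimed $4t(D_H+\norm{H}_\infty D_S)$; the hypotheses $K\ge 2t\norm{H}_\infty$ and $K\ge 2t\norm{H_k^\prime}_\infty$ keep every step in a weak-evolution regime near the identity, as a cruder Taylor-with-remainder estimate of the single-step term would require in order to control its higher-order tail (and would account for the looser constant). I expect the main obstacle to be precisely this state-error step: the naive estimate $\norm{\hat{\mathcal{Q}}_k-\mathcal{Q}_{\Delta t}}_\diamond\le\norm{\rho_k^\prime-\rho}_1$ is only $\mathcal{O}(1)$ and, summed over $K=t/\Delta t$ steps, diverges. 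Recovering the necessary factor of $\Delta t$ forces one to exploit the trace cancellation $\Tr[\rho_k^\prime-\rho]=0$ in tandem with the integral representation, so that the leading identity part drops out and the commutator structure supplies the missing power of $\Delta t$.
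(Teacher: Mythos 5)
Your proposal is correct and follows essentially the same route as the paper's proof: subadditivity of the diamond distance over the $K$ steps, insertion of the hybrid single-step channel carrying the noisy state but the exact Hamiltonian, and the crucial cancellation $\Tr_1[(\rho_k^\prime-\rho)\otimes\sigma]=\Tr(\rho_k^\prime-\rho)\,\sigma=0$ that upgrades the state-error piece from $\mathcal{O}(1)$ to $\mathcal{O}(\Delta t)$. The only difference is technical: where the paper bounds the single-step pieces by full series expansions (hence the $2\sqrt{e}$ and $4$ constants and the need for $\Delta t\norm{H}_\infty\le 1/2$), you use first-order integral representations, which yield the slightly sharper per-step constants $2\Delta t\norm{H_k^\prime-H}_\infty$ and $2\Delta t\norm{H}_\infty\norm{\rho_k^\prime-\rho}_1$ and still land within the stated bound.
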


In practice, the difference between $\norm{H}_\infty$ and $\norm{H_k^\prime}_\infty$ is relatively small and the condition $\norm{H}_\infty t\ge 1$ is usually satisfied. Thus, if we choose $K=\Theta\left(\epsilon^{-1}\norm{H}_\infty^2t^2\right)$, as requires in Theorem~\ref{theorem:cost}, then the requirement of $K\ge2t\norm{H}_{\infty}$ and $K\ge2t\norm{H^\prime_k}_{\infty}$ for $k=1,\cdots,K$ 
can be naturally satisfied. The proof of the above theorem can be found in Appendix~\ref{app:proof_of_robustness}. In addition to the diversities of Hamiltonians and input states, the error also depends on the norm of the evolution Hamiltonian. According to Eq.~\eqref{eq:robustness}, $\norm{H}_\infty$ acts as a multiplier that amplifies the error caused by the diversity of input states. Similar to the sample complexity upper bound, we can derive a better estimation of the robustness of certain maps by leveraging their specific properties. For instance, in the case of the partial transposition map, the robustness is not related to $\norm{H_P}_\infty$. The proof of this specific case is provided in Appendix~\ref{app:proof_of_robustness:PT}.

\subsection{Analysis of Optimality}\label{subsec:optimality}

After analyzing the sample complexity upper bound and robustness of HME, another important question arises: is HME the optimal protocol for implementing a Hermitian-preserving map? Since $\mathcal{N}(\rho)$ is not always a physical state, the task of preparing $\mathcal{N}(\rho)$ is generally infeasible. Therefore, we aim to determine whether HME is the protocol for exponentiating specific Hermitian-preserving maps with asymptotically optimal sample complexity.

\begin{figure}[t]
\centering
\includegraphics[width=0.48\textwidth]{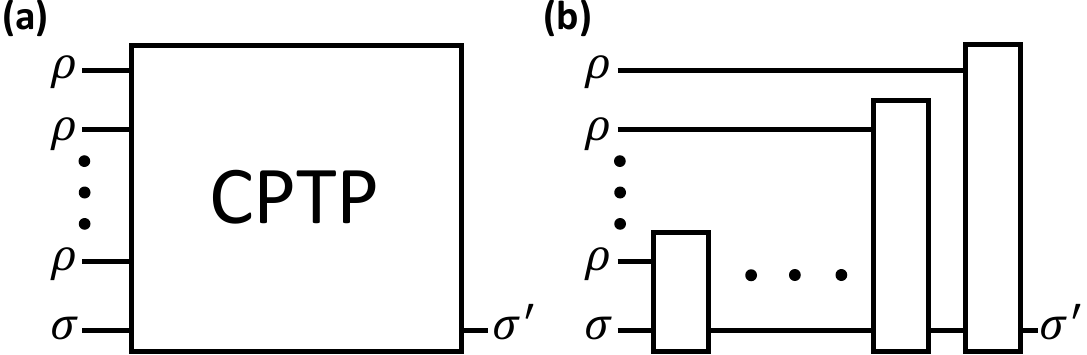}
\caption{General and sequential protocols to exponentiate a Hermitian-preserving map. (a) 
In the most general case, joint operations are performed on multiple copies of $\rho$ and the input state $\sigma$, which induces a channel $\sigma^{\prime}=\mathcal{Q}^{\rho}(\sigma)$ that acts only on $\sigma$. (b) A sequential protocol relies on sequential operations acting on single-copies of $\rho$ and the evolved state, where the black boxes represent distinct CPTP maps.}
\label{fig:LowerBoundModel}
\end{figure}

The most general approach to realize $e^{-i\mathcal{N}(\rho)t}$ is depicted in Fig.~\ref{fig:LowerBoundModel}(a). It involves the joint evolution of multiple copies of $\rho$ and a single copy of $\sigma$ under a quantum channel that is independent of $\rho$ and $\sigma$. The resulting output state $\sigma^\prime$ has the same dimension as $\sigma$. When focusing on the state $\sigma$, we can incorporate the multiple copies of $\rho$ into the quantum channel, resulting in a smaller channel that acts only on $\sigma$, denoted as $\mathcal{Q}^\rho$. When a sufficient number of $\rho$ is available, it is possible to select the original quantum channel in such a way that, for any choice of $\rho$, the induced evolution $\mathcal{Q}^\rho$ and the target evolution $e^{-i\mathcal{N}(\rho)t}$ are sufficiently close to each other. In the following analysis, we establish a fundamental lower bound for the sample complexity in this general setting.

In practice, performing joint operations on multiple copies of $\rho$ can be extremely challenging. Protocols that utilize sequential operations on each copy of $\rho$ along with a quantum memory, as depicted in Fig.~\ref{fig:LowerBoundModel}(b), can significantly alleviate the challenges faced by quantum devices. Since HME also adopts sequential operations, another essential question is whether HME is the most efficient sequential protocol to achieve the evolution of $e^{-i\mathcal{N}(\rho)t}$. Notice that sequential protocols are special cases of general protocols, a sample complexity lower bound derived for general protocols also serves as a lower bound for the complexity of sequential ones.

\begin{theorem}[Lower bound of sample complexity]\label{theorem:LowerBound}
Let $\mathcal{N}\in T(\mathcal{H},\mathcal{K})$ be a Hermitian-preserving map and set $0<\epsilon\le1/6$, and $t\ge \frac{15\pi\epsilon}{4R_*}$. The minimum number of $\rho$ needed to realize the evolution of $e^{-i\mathcal{N}(\rho)t}$ using protocols shown in Fig.~\ref{fig:LowerBoundModel}(a) with $\epsilon$ accuracy in diamond distance satisfies
\begin{equation}
f_{\mathcal{N}}(\epsilon,t)\ge\Omega\left(\epsilon^{-1}R_*^2t^2\right),
\end{equation}
where $R_*:=\underset{A\in \mathscr{F}}{\max}\;R\left[\mathcal{N}(A)\right]$.  $R[\cdot]=\lambda_{\text{max}}(\cdot)-\lambda_{\text{min}}(\cdot)$ denotes the spectral gap defined as the difference between the largest and the lowest eigenvalues of the processed matrix. The feasible region $\mathscr{F}$ is defined as
\begin{equation}\label{eq:feasible region}
\begin{aligned}
\mathscr{F}=\Big\{A\in L(\mathcal{H}):A^{\dagger}=A,\Tr(A)=0,\\
\norm{A}_1=1,\left[\mathcal{N}(A^+),\mathcal{N}(A^-)\right]=0\Big\},
\end{aligned}
\end{equation}
where $A^+$ and $A^-$ are the positive and negative parts of $A$.
\end{theorem}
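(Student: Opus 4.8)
The plan is to prove the bound by an information-theoretic distinguishing argument that uses the only structural feature of the general protocol in Fig.~\ref{fig:LowerBoundModel}(a): the evolution acting on $\sigma$ is induced by a \emph{single}, $\rho$-independent channel $\mathcal{W}$ fed with $\rho^{\otimes n}$, producing the induced channel $\mathcal{Q}^{\rho}(\cdot)=\mathcal{W}(\rho^{\otimes n}\otimes\,\cdot)$. First I would fix a maximizer $A_*\in\mathscr{F}$ with $R[\mathcal{N}(A_*)]=R_*$ and use the constraint $[\mathcal{N}(A_*^+),\mathcal{N}(A_*^-)]=0$ to diagonalize $\mathcal{N}(A_*^+)$ and $\mathcal{N}(A_*^-)$ in a common eigenbasis $\{\ket{e_i}\}$. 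Then $\mathcal{N}(A_*)$ is diagonal in this basis; let $\ket{u},\ket{v}$ be the eigenvectors attaining $\lambda_{\max}$ and $\lambda_{\min}$, whose eigenvalue gap is exactly $R_*$. The conditions $\Tr(A_*)=0$ and $\norm{A_*}_1=1$ ensure that $|A_*|=A_*^++A_*^-$ is a density matrix, so I would build the one-parameter family of physical inputs $\rho_s:=|A_*|+2sA_*=(1+2s)A_*^++(1-2s)A_*^-$ for $s\in[-\tfrac12,\tfrac12]$.

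The commuting condition is exactly what makes this family tractable: $\mathcal{N}(\rho_s)=(1+2s)\mathcal{N}(A_*^+)+(1-2s)\mathcal{N}(A_*^-)$ is diagonal in $\{\ket{e_i}\}$ for every $s$, so $\{e^{-i\mathcal{N}(\rho_s)t}\}$ is a commuting family of diagonal unitaries with no leakage out of $\mathrm{span}\{\ket{u},\ket{v}\}$, and $\mathcal{N}(\rho_s)-\mathcal{N}(\rho_{s'})=2(s-s')\mathcal{N}(A_*)$ has spectral gap $2|s-s'|R_*$. Two facts then bracket the distinguishability of the induced channels. On one side, the diamond distance between the \emph{ideal} unitary channels for $s$ and $s'$ equals $2\sin(|s-s'|R_*t)$ by the spectral-gap formula for unitary channels, so the assumed $\epsilon$-accuracy together with the triangle inequality forces $\norm{\mathcal{Q}^{\rho_s}-\mathcal{Q}^{\rho_{s'}}}_\diamond\ge 2\sin(|s-s'|R_*t)-2\epsilon$. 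On the other side, since both channels arise from one fixed $\mathcal{W}$ applied to $\rho_s^{\otimes n}$ and $\rho_{s'}^{\otimes n}$, monotonicity of the trace norm gives $\norm{\mathcal{Q}^{\rho_s}-\mathcal{Q}^{\rho_{s'}}}_\diamond\le\norm{\rho_s^{\otimes n}-\rho_{s'}^{\otimes n}}_1$. A direct computation shows the single-copy fidelity expands as $F(\rho_s,\rho_{s'})=1-(s-s')^2+\mathcal{O}((s-s')^4)$, \emph{independent of the dimension}, so this upper bound behaves like $2\sqrt{n}\,|s-s'|$ for small separations. Comparing the two brackets already delivers the correct $R_*^2t^2$ scaling.

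The hard part will be sharpening this to the optimal $\epsilon^{-1}$ prefactor. A crude two-point comparison, optimizing over a single separation $|s-s'|$, saturates at $\Omega(R_*^2t^2)$, and because the signal and the input-distinguishability both carry the same power of the separation, telescoping over a fine grid does not help either, as the number of grid points cancels; the factor $\epsilon^{-1}$ simply does not emerge from single-parameter estimation alone. To recover it I would observe that, restricted to $\mathrm{span}\{\ket{u},\ket{v}\}$, the construction converts exponentiating $\mathcal{N}$ on $\{\rho_s\}$ into a density-matrix-exponentiation task for an effective qubit state evolving at the rescaled time $\tau=\Theta(R_*t)$, and then transfer the established \emph{tight} lower bound $\Omega(\tau^2/\epsilon)$ for density matrix exponentiation~\cite{Kimmel2017hamiltonian} through a copy-count-preserving reduction. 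The principal obstacle is making this reduction faithful for the \emph{general} (non-sequential) model: I must check that one copy of $\rho_s$ supplies exactly one copy of the effective qubit state, that the $s$-independent offset $\mathcal{N}(|A_*|)$ and the complementary eigenspaces contribute only a harmless global phase, and that the reduced parameters land in the regime of validity of the density-matrix-exponentiation bound—which is precisely the purpose of the hypotheses $\epsilon\le 1/6$ and $t\ge 15\pi\epsilon/(4R_*)$, keeping the swept phase $\Theta(R_*t)$ large compared with $\epsilon$ so that the signal dominates the accuracy budget.
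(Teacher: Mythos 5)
Your two-state construction and the constant-accuracy part of the argument match the paper's proof almost exactly: the paper also takes $\rho_\pm=\abs{A}\pm\lambda A=(1\pm\lambda)A^++(1\mp\lambda)A^-$, computes $F(\rho_+,\rho_-)=1-\lambda^2$, uses the commutation condition in $\mathscr{F}$ to evaluate the diamond distance between the ideal unitary channels (choosing $t_\lambda=\pi/(2\lambda R)$ so that it equals $2$), and then combines the Holevo--Helstrom theorem for channels with the data-processing bound $\norm{\mathcal{Q}^{\rho_+}-\mathcal{Q}^{\rho_-}}_\diamond\le\norm{\rho_+^{\otimes K}-\rho_-^{\otimes K}}_1\le 2\sqrt{1-(1-\lambda^2)^K}$ to conclude $K\ge\Omega(R^2t_\lambda^2)$ at accuracy $1/3$. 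Your diamond-distance formula $2\sin(\abs{s-s'}R_*t)$ and your fidelity expansion are both consistent with this. You also correctly diagnose that a single two-point comparison cannot produce the $\epsilon^{-1}$ factor.

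The gap is in how you propose to recover $\epsilon^{-1}$. The paper's step here is a one-line concatenation argument: running an $\epsilon$-accurate, time-$t$ protocol $m$ times in sequence realizes $[e^{-i\mathcal{N}(\rho)mt}]$ to accuracy $m\epsilon$ by subadditivity of the diamond distance, so $f_{\mathcal{N}}(m\epsilon,mt)\le m f_{\mathcal{N}}(\epsilon,t)$; taking $m=\lceil 1/(6\epsilon)\rceil$ and applying the constant-accuracy bound at time $mt$ immediately yields $f_{\mathcal{N}}(\epsilon,t)\ge\Omega(\epsilon^{-1}R_*^2t^2)$ (the hypotheses $\epsilon\le 1/6$ and $t\ge 15\pi\epsilon/(4R_*)$ exist precisely to make $m\epsilon\le 1/3$ and $mt\ge 5\pi/(8R_*)$). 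Your alternative --- a copy-preserving reduction to the density-matrix-exponentiation lower bound of Kimmel et al. --- is not carried out, and it has real obstacles you only gesture at: the effective-qubit family you can reach through the channel $\sigma\mapsto\rho_s$ consists only of states diagonal in a fixed basis (for non-diagonal $\sigma$ the image $\mathcal{N}(\rho)$ need not be block-diagonal with respect to $\mathrm{span}\{\ket{u},\ket{v}\}$, so leakage is not controlled), so you would have to verify that the external lower bound already holds for that restricted commuting family and in the relevant parameter regime. It is also somewhat backwards in spirit, since the theorem is meant to subsume the DME bound rather than rest on it. Replacing that reduction with the concatenation inequality closes the argument cleanly and self-containedly.
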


We leave the in-depth proof of this theorem in Appendix~\ref{app:proof_of_LB}, and sketch the main idea here. The key principle for proving this theorem is the Holevo-Helstrom theorem \cite{Holevo1973decision,Helstrom1969detection,watrous2018theory}, which deals with the discriminations of states and channels. According to this theorem, the optimal success probability for discriminating two different states is $\frac{1}{2}+\frac{1}{4}\norm{\rho_0-\rho_1}_1$, and for discriminating two different channels is $\frac{1}{2}+\frac{1}{4}\norm{\mathcal{Q}_0-\mathcal{Q}_1}_\diamond$. When allowed to perform joint operations on $K$ copies of the unknown state $\rho$, discrimination becomes easier with more copies, as $\norm{\rho_0^{\otimes K}-\rho_1^{\otimes K}}_1$ increases with $K$.

For a given Hermitian-preserving map $\mathcal{N}$, there may exist two states $\rho_0$ and $\rho_1$ such that the trace distance between them can be much lower than the diamond distance between the two ideal evolution channels, $\norm{\rho_0-\rho_1}_1\ll\norm{[e^{-i\mathcal{N}(\rho_0) t}]-[e^{-i\mathcal{N}(\rho_1) t}]}_{\diamond}$, for some $t$. According to the Holevo-Helstrom theorem, this implies that the discrimination of these two states is more challenging compared to distinguishing the ideal evolution channels. Thus, if the induced channel of Fig.~\ref{fig:LowerBoundModel}(a), denoted as $\mathcal{Q}^\rho$, can approximate the ideal channel $[e^{-i\mathcal{N}(\rho)t}]$ for all states $\rho$, the discrimination between the channels $\mathcal{Q}^{\rho_0}$ and $\mathcal{Q}^{\rho_1}$ will also be easier. Since the CPTP map used in Fig.~\ref{fig:LowerBoundModel}(a) is independent of the input state $\rho$, the discrimination between the channels $\mathcal{Q}^{\rho_0}$ and $\mathcal{Q}^{\rho_1}$ guarantees the discrimination between the states $\rho_0$ and $\rho_1$. This implies that there exists a fundamental lower bound on the number of copies for $\rho$ required to realize $\mathcal{Q}^{\rho}$, given by $\norm{\rho_0^{\otimes K}-\rho_1^{\otimes K}}_1\ge \norm{\mathcal{Q}^{\rho_0}-\mathcal{Q}^{\rho_1}}_\diamond$, which serves as the key equation for deriving the lower bound on $K$. The choices of $\rho_0$ and $\rho_1$ are related to the feasible region $\mathscr{F}$, whose properties are discussed in Appendix~\ref{app:proof_of_LB}.

As mentioned earlier, Theorem~\ref{theorem:LowerBound} provides a lower bound on the sample complexities for general protocols, which might not be tight for sequential protocols. However, we find that for certain important Hermitian-preserving maps, the upper bound of HME, as given in Theorem~\ref{theorem:cost}, can indeed reach the lower bound derived for general protocols. This demonstrates the optimality of HME for specific tasks.

As an illustrative example, we consider the identity map $\mathcal{N}=\mathcal{I}$ again. According to the definition of the feasible region in Eq.~\eqref{eq:feasible region}, it is straightforward to verify that $\frac{1}{2}(\ketbra{\alpha}{\alpha}-\ketbra{\beta}{\beta})$, where $\ket{\alpha}$ and $\ket{\beta}$ are arbitrary orthonormal pure states, is a feasible point that maximizes $R[\mathcal{N}(A)]=R[A]$ with $R_*=1$. Thus, based on Theorem~\ref{theorem:LowerBound}, the lower bound for realizing the evolution of $e^{-i\rho t}$ is $\Omega(\epsilon^{-1}t^2)$, which meets the upper bound of the complexity of HME in Theorem~\ref{theorem:cost}. This fact demonstrates that Theorem~\ref{theorem:LowerBound} covers the sample complexity lower bound given in Ref.~\cite{Kimmel2017hamiltonian}. Another example is closely related to the quantum noiseless state recovery discussed in Sec.~\ref{sec:QNSR}, where we consider inverting the local amplitude damping noise. In this case, the non-physical map we aim to implement is $\mathcal{N}_{\gamma,n}=(\mathcal{E}_\gamma^{\otimes n})^{-1}$, where $n$ denotes the number of qubits and $\gamma$ is the damping rate. We show in Appendix~\ref{app:LowerBound_AD} that the lower bound of the sample complexity for implementing $\mathcal{N}_{\gamma,n}$ matches the upper bound of HME obtained in Theorem~\ref{theorem:cost}.

It is worth mentioning that the analysis of the upper and lower bounds for realizing $e^{-i\mathcal{N}(\rho)t}$ also partly answers the open problem raised in Ref.~\cite{anshu2023survey}. This problem seeks to determine the sample complexity required for implementing the evolution of $e^{-if(\rho)t}$ for a given function $f(\cdot)$.

\section{Entanglement Detection and Quantification}\label{sec:ent_det}

As mentioned before, positive but not completely positive maps are important tools for entanglement detection and quantification \cite{gunhe2009entanglement}. In this section, we will demonstrate how the HME algorithm enhances these tasks via reduction and transposition maps, which are Hermitian-preserving but non-physical.

\subsection{Exponential Advantage in Entanglement Detection}\label{subsec:ent_det}

Entanglement is a distinctive feature of quantum mechanics \cite{horodechi2009entanglement}, and the detection and quantification of entanglement play vital roles in quantum information processing tasks and theoretical studies \cite{pan2012multiphoton,Pasquale2004qft}. A variety of protocols have been proposed for entanglement detection \cite{gunhe2009entanglement}. In the literature, the positive map criteria, which include the partial transposition \cite{peres1996ppt} and reduction maps \cite{Horodecki1999reduction}, are considered to be among the most powerful entanglement detection criteria. Compared to widely used entanglement witness protocols \cite{gunhe2009entanglement}, positive map criteria typically exhibit exponential enhancements in detection capability \cite{liu2022fundamental,collins2015random}. Positive map criteria state that if $\mathcal{P}_A\otimes\mathcal{I}_B(\rho)$ is not a positive matrix, then the state $\rho$ is entangled. Here, $\mathcal{P}$ represents the positive map being used, and $A$ and $B$ label the two subsystems of $\rho$.  However, the main challenges in achieving positive map detection protocols lie in the non-physical nature of positive maps as well as the difficulty of performing spectral decomposition for the exponentially large matrix $\mathcal{P}_A\otimes\mathcal{I}_B(\rho)$. These two obstacles can be overcome using the algorithms of HME and quantum phase estimation.

\begin{figure*}[t]
\centering
\includegraphics[width=0.99\textwidth]{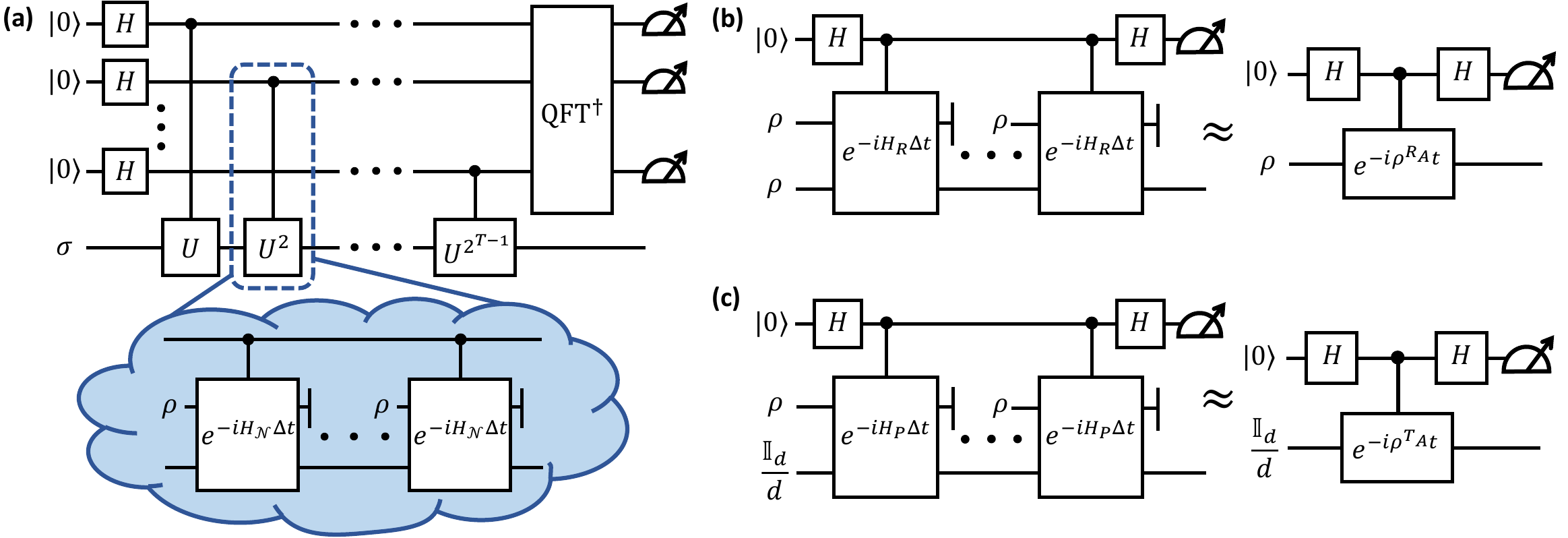}
\caption{(a) represents the circuit of the HME-based entanglement detection protocol, which is constructed by combining the HME and quantum phase estimation algorithms. The quantum phase estimation consists of the input state $\sigma$, $T$ ancilla qubits, controlled unitary operations, inverse quantum Fourier transformation, and final measurements on the ancilla qubits. The controlled unitary evolutions are approximately realized by HME, where $H_{\mathcal{N}}$ is the Hamiltonian for implementing $\mathcal{N}=\mathcal{P}_A\otimes\mathcal{I}_B$. (b) is a special case of (a) where only one ancilla qubit remains, the inverse quantum Fourier transformation is reduced to the Hadamard gate, and $H_R$ is the Hamiltonian for implementing the partial reduction map. The evolved state and sequentially inputted states are set to the target state of the entanglement detection task. (c) is the circuit for estimating entanglement negativity. Compared with (b), we change the evolved state to the maximally mixed state and the Hamiltonian to $H_P$.}
\label{fig:ent_det}
\end{figure*}

To facilitate our subsequent discussions, we now give a brief introduction of the functions and properties of the quantum phase estimation algorithm, which is employed to analyze the eigenvalue, $e^{i\varphi}$, and eigenstate of a given unitary $U$ \cite{kitaev1995quantum,nielsen2010quantum}. As depicted in Fig.~\ref{fig:ent_det}(a), the number of ancilla qubits is determined by the desired estimation accuracy of target eigenvalues. While in some special cases when the binary representation of $\varphi/2\pi$ has a finite length, a finite number of ancilla qubits can estimate $\varphi$ accurately. If a sufficient number of ancilla qubits is employed and the input state $\sigma$ is one of the eigenstates of $U$, measuring the ancilla qubits will yield the corresponding eigenvalue of $\sigma$. While, if $\sigma$ is not an eigenstate of $U$, the measurement of the ancilla qubits will randomly output one of the eigenvalues of $U$ and $\sigma$ will collapse into the corresponding eigenstate, assuming no degeneracy. The probability of obtaining some eigenvalue is proportional to the fidelity between $\sigma$ and the corresponding eigenstate.

By utilizing HME to implement controlled unitary operations, $e^{-i\mathcal{P}_A\otimes\mathcal{I}_B(\rho)t}$, in quantum phase estimation, we propose an HME-based entanglement detection protocol, as illustrated in Fig.~\ref{fig:ent_det}(a). Exploiting the properties of quantum phase estimation algorithm, the measurements of the ancilla qubits directly provide information about the spectrum of $U=e^{-i\mathcal{P}_A\otimes\mathcal{I}_B(\rho)t}$, which is equivalent to the spectrum of $\mathcal{P}_A\otimes\mathcal{I}_B(\rho)$. If $\mathcal{P}_A\otimes\mathcal{I}_B(\rho)$ is not positive, we can select $\sigma$ to be a state that exhibits significant overlap with the negative subspace of $\mathcal{P}_A\otimes\mathcal{I}_B(\rho)$. Consequently, there is a significant probability of detecting the negative eigenvalues of $\mathcal{P}_A\otimes\mathcal{I}_B(\rho)$ and thus the entanglement of $\rho$. In the following, we will demonstrate that certain positive maps facilitate the selection of an appropriate $\sigma$ due to their properties.

Although we have many tools to detect entanglement, it has been proven that when lacking the prior knowledge of the target states, an exponential amount of resources is still required for entanglement detection \cite{liu2022fundamental}, even in the case of pure states \cite{2023complexity}. The following presents a typical example \cite{2023complexity}.

\begin{fact}\label{fact:ent_det}
Consider a bipartite system with subsystem dimensions $d_A=d_B=\sqrt{d}$. Let $\rho$ be a pure state, which can be either a global Haar random state $\psi_{AB}$ or a tensor product of local Haar random pure states $\psi_A\otimes\psi_B$, each with equal probabilities. If we are limited to single-copy operations and measurements on $\rho$, determining whether $\rho$ is entangled requires $\Theta(d^{1/4})$ experiments, with a success probability of at least $2/3$.
\end{fact}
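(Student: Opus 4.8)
The plan is to recast Fact~\ref{fact:ent_det} as a hypothesis test and to control the statistical distinguishability of the single-copy measurement records. Write $\mathcal{E}_0$ for the product ensemble $\psi_A\otimes\psi_B$ and $\mathcal{E}_1$ for the global ensemble $\psi_{AB}$, with $d_A=d_B=\sqrt{d}$. Since a global Haar state is entangled with probability one while $\mathcal{E}_0$ is unentangled by construction, deciding entanglement is equivalent to deciding which ensemble produced $\rho$. By the Holevo--Helstrom/Le~Cam two-point method the optimal success probability is $\tfrac{1}{2}+\tfrac{1}{2}\norm{p_0-p_1}_{\mathrm{TV}}$, where $p_b$ is the distribution of the length-$N$ adaptive single-copy transcript averaged over $\mathcal{E}_b$. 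So the task reduces to locating the threshold $N$ at which $\norm{p_0-p_1}_{\mathrm{TV}}$ jumps from $o(1)$ to $\Omega(1)$. The structural fact driving everything is moment matching: $\mathbb{E}_{\mathcal{E}_0}[\psi]=\mathbb{E}_{\mathcal{E}_1}[\psi]=\mathbb{I}/d$, so a single copy carries no signal and all distinguishing information lives in correlations between outcomes of different copies.

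\textbf{Lower bound $N=\Omega(d^{1/4})$.} I would compare both transcripts to the reference $q=\prod_k \Tr(M^{(k)}_{o_k}\mathbb{I}/d)$ obtained from the maximally mixed state, and expand each per-copy likelihood in the fluctuation $\psi-\mathbb{I}/d$. The first-order term averages to zero by moment matching, so the leading contribution to $p_b-q$ is governed by the second-moment tensor $\Delta^{(2)}_b=\mathbb{E}_{\mathcal{E}_b}[(\psi-\mathbb{I}/d)^{\otimes 2}]$. A short Schur--Weyl computation gives $\Delta^{(2)}_1=\tfrac{1}{d(d+1)}(S-\mathbb{I}/d)$, with $S$ the swap between the two copies of the full system, and an analogous expression for $\Delta^{(2)}_0$ built from the factor swaps $S_A,S_B$ on the $\sqrt{d}$-dimensional subsystems. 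Thus $\Delta^{(2)}_0-\Delta^{(2)}_1$ is carried by a subsystem swap with coefficient of order $1/d$. Feeding this into $\chi^2(p_0\Vert q)$ and $\chi^2(p_1\Vert q)$, each of the $\binom{N}{2}$ pairs of copies contributes the overlap of the measurement operators with this swap term; since a single-copy POVM on a $\sqrt{d}$-dimensional factor can amplify such a swap by at most a factor $\sqrt{d}$, the per-pair contribution is $O(1/\sqrt{d})$ and one obtains $\chi^2\lesssim N^2/\sqrt{d}$. Hence $\norm{p_0-p_1}_{\mathrm{TV}}=o(1)$ whenever $N=o(d^{1/4})$, which is the claimed hardness.

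\textbf{Matching upper bound $N=O(d^{1/4})$.} I would exhibit an explicit protocol: discard $B$ and perform a randomized (Haar or Clifford) single-copy measurement on the $\sqrt{d}$-dimensional subsystem $A$ of each copy, then form the standard unbiased U-statistic estimator $\hat{p}$ of the purity $\Tr(\rho_A^2)$ from the classical shadows. The two ensembles are separated by a constant gap, $\Tr(\rho_A^2)=1$ in the product case versus $\Tr(\rho_A^2)=\tfrac{d_A+d_B}{d_A d_B+1}\approx 2/\sqrt{d}$ in the global case, so a fixed threshold (say $1/2$) suffices. The per-pair estimator has variance $\mathrm{Var}(X_{kl})=O(d_A)=O(\sqrt{d})$ -- the dimension-dependent cost of probing a swap by single-copy randomized measurements -- so the U-statistic obeys $\mathrm{Var}(\hat{p})=O(\sqrt{d}/N^2)$, giving standard deviation $O(d^{1/4}/N)$. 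This drops below the constant purity gap once $N=\Omega(d^{1/4})$, and a Chebyshev bound certifies the correct decision with probability at least $2/3$. Combining the two directions yields $N=\Theta(d^{1/4})$.

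\textbf{Main obstacle.} The hard part is the lower bound: making the informal estimate $\chi^2\lesssim N^2/\sqrt{d}$ rigorous in the presence of adaptivity and arbitrary POVMs. The clean second-moment computation only controls the leading term of the likelihood expansion, so one must (i) verify that the neglected higher-order moments are genuinely lower order, and (ii) handle the dependence of each measurement $M^{(k)}$ on earlier outcomes $o_{<k}$, which breaks the naive product expansion. I expect this to require a martingale/chain-rule control of the step-wise likelihood-ratio increments, bounding each increment by the operator weight of $\Delta^{(2)}_0-\Delta^{(2)}_1$ against the current outcome-dependent POVM rather than against a fixed one.
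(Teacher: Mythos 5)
First, a point of reference: the paper does not actually prove Fact~\ref{fact:ent_det}. It is imported from the unpublished reference \cite{2023complexity}, and no proof appears in the main text or appendices (Appendix~\ref{app:ED_HME+Reduction} proves Theorem~\ref{theorem:ED_HME+Reduction}, which \emph{uses} Fact~\ref{fact:ent_det} as a benchmark). So there is no in-paper proof to compare yours against; your proposal has to stand on its own.

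On its own terms, your architecture is the standard and essentially correct one: reduce to discriminating the two ensembles, note the first-moment matching $\mathbb{E}[\psi]=\mathbb{I}/d$, extract the signal from the second-moment operators via Schur--Weyl, and match with a purity-estimation upper bound on the $\sqrt{d}$-dimensional factor (per-pair variance $O(d_A)=O(\sqrt{d})$ gives $N=O(d^{1/4})$). The exponents all check out. Two caveats, one minor and one substantive. Minor: the swap terms in $\Delta^{(2)}_0-\Delta^{(2)}_1$ carry coefficient $\Theta(1/d^2)$, not $\Theta(1/d)$; your per-pair bound $O(1/\sqrt{d})$ nevertheless survives because $\sum_{o,o'}\Tr\left[(M_o\otimes M_{o'})(S_A\otimes\mathbb{I})\right]=\Tr\left[(d_B\mathbb{I}_A)^2\right]=d_Ad_B^2=d^{3/2}$, and $d^{3/2}\cdot d^{-2}=d^{-1/2}$ --- you should run this as a direct $L^1$ sum over outcomes rather than through the $\chi^2$ heuristic, which is lossy here (a naive Cauchy--Schwarz on adversarial rank-one POVMs gives only an $O(1)$ per-pair bound). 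Substantive: everything you defer to the ``main obstacle'' paragraph is where the actual proof lives. The expansion $\prod_k(1+\epsilon_k)$ produces contributions from all higher Haar moments $\mathbb{E}[\psi^{\otimes m}]$, i.e.\ all permutation operators summed over $\binom{N}{m}$ tuples, and adaptivity destroys the product structure of the transcript likelihood entirely; the known way through is the martingale/tree analysis of Chen--Cotler--Huang--Li \cite{chen2022memory} (or a reduction WLOG to rank-one nonadaptive POVMs plus a bound on the full likelihood-ratio series). Without executing one of these, the $\Omega(d^{1/4})$ direction is a plausibility argument with the right scaling, not a proof.
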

Here we employ the reduction map to demonstrate the advantage of HME in entanglement detection. The reduction map $R$ is defined as $\rho^R=\Tr(\rho)\mathbb{I}_d-\rho$, where $d=d_A\times d_B$ represents the dimension of the Hilbert space of $\rho$. The reduction criterion involves testing whether
\begin{equation}
\rho^{R_A}=\mathbb{I}_{d_A}\otimes\rho_B-\rho
\end{equation}
is a positive matrix or not, where $\rho_B=\Tr_A(\rho)$ denotes the reduced density matrix for subsystem $B$. While the reduction criterion may not be the strongest positive map criterion \cite{Horodecki1999reduction}, we find that it is well-suited for the HME-based entanglement detection protocol. This is because $\rho^{R_A}$ effectively captures the principal components of $\rho$. Through the implementation of the partial reduction map using HME, we find that the entanglement detection task implied in Fact~\ref{fact:ent_det} can be solved with a constant sample complexity, showing an exponential advantage compared with all single-copy entanglement detection protocols.

\begin{theorem}\label{theorem:ED_HME+Reduction}
Let $H_R=\mathbb{I}_A\otimes S_B-S_{AB}$, where $\mathbb{I}_A$, $S_B$, and $S_{AB}$ stand for the identity operator acting on two copies of system $A$, SWAP operator acting on two copies of system $B$, and SWAP operator acting on two copies of the joint system $AB$, respectively. Choosing $t=\pi$, the circuit depicted in Fig.~\ref{fig:ent_det}(b) requires at most $\mathcal{O}(1)$ copies of $\rho$ to accomplish the task described in Fact~\ref{fact:ent_det} with a success probability of at least $2/3$.
\end{theorem}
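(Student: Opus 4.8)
The plan is to show that the circuit in Fig.~\ref{fig:ent_det}(b), which is the single-ancilla quantum phase estimation built on top of HME implementing the partial reduction map $\mathcal{N}=R_A\otimes\mathcal{I}_B$, distinguishes the two cases of Fact~\ref{fact:ent_det} using only $\mathcal{O}(1)$ copies of $\rho$. First I would verify that the stated Hamiltonian $H_R=\mathbb{I}_A\otimes S_B-S_{AB}$ is indeed the correct HME Hamiltonian for the partial reduction map. By Theorem~\ref{theorem:maintheorem} this amounts to checking $H_R=\Lambda_{\mathcal{N}}^{T_1}$, where $\Lambda_{\mathcal{N}}$ is the Choi matrix of $R_A\otimes\mathcal{I}_B$; equivalently, it suffices to confirm the first-order identity $\Tr_1\big([H_R,\rho\otimes\sigma]\big)=[\rho^{R_A},\sigma]$. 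Since the reduction map decomposes as $\rho^{R_A}=\mathbb{I}_A\otimes\rho_B-\rho$, I expect $H_R$ to split correspondingly: the SWAP $S_{AB}$ produces the identity-map (density-matrix-exponentiation) term $-\rho$, while $\mathbb{I}_A\otimes S_B$ produces $\mathbb{I}_A\otimes\rho_B$ via a partial SWAP that only traces out $B$-indices. This is a short tensor-network check analogous to Fig.~\ref{fig:proof}(c).

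Next I would set up the measurement statistics. Taking the evolved state $\sigma$ to be the same target state $\rho$ and choosing $t=\pi$, the single-ancilla phase estimation (Hadamard--controlled-$U$--Hadamard) outputs a bit whose bias encodes $\mathrm{Tr}\big(\rho\, e^{-i\rho^{R_A}\pi}\big)$ or its real part. The key structural point is that $\rho=\psi_{AB}$ interacts cleanly with the spectrum of $\rho^{R_A}$: for a pure bipartite state with Schmidt decomposition, $\rho^{R_A}=\mathbb{I}_A\otimes\rho_B-\rho$ has an eigenvalue exactly $-$(largest Schmidt coefficient)$^2+$ (something) on the support of $\rho$, and more importantly its eigenvalues are integers or simple combinations of Schmidt coefficients in a way that, at $t=\pi$, yields a sharply different overlap phase for entangled versus product states. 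For the product case $\psi_A\otimes\psi_B$ the reduction criterion is saturated (the output matrix is positive), so the relevant eigenvalue on $\rho$'s support is nonnegative and the phase-estimation bit is deterministically biased one way; for a global Haar-random $\psi_{AB}$ the entanglement is near-maximal with high probability, forcing a negative eigenvalue with $|\lambda|=\Theta(1)$ and flipping the bias. I would make this quantitative by computing the fidelity between $\sigma=\rho$ and the negative-eigenvalue eigenspace of $\rho^{R_A}$, showing it is $\Omega(1)$ in the entangled case and $0$ in the product case.

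I would then bound the cost. By Corollary~\ref{corollary:cost(controlled_version)}, realizing the controlled evolution $\mathrm{C}\text{-}e^{-i\mathcal{N}(\rho)\pi}$ to diamond precision $\epsilon$ costs $\mathcal{O}\big(\epsilon^{-1}\norm{H_R}_\infty^2\,\pi^2\big)$ copies of $\rho$; since $\norm{H_R}_\infty=\mathcal{O}(1)$ and the gap in phase-estimation bias between the two cases is a fixed constant, a fixed target accuracy $\epsilon=\Theta(1)$ suffices, giving $\mathcal{O}(1)$ copies. Finally, a constant number of repetitions of the whole procedure amplifies the success probability above $2/3$ by a Chernoff bound. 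The main obstacle I anticipate is the second step: cleanly establishing the $\Omega(1)$ separation in the measured bias between the global-Haar-random and product cases. This requires a spectral analysis of $\rho^{R_A}$ restricted to the support of $\rho$ together with a concentration argument that a Haar-random $\psi_{AB}$ has Schmidt coefficients bounded away from those of a product state with high probability; controlling the tail so that the overall success probability clears $2/3$ with only $\mathcal{O}(1)$ samples is where the real work lies.
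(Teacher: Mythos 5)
Your proposal follows essentially the same route as the paper: verify $H_R$ by a tensor-network check, use $\norm{H_R}_\infty=2$ together with Corollary~\ref{corollary:cost(controlled_version)} to get an $\mathcal{O}(1)$ cost for the controlled evolution at constant diamond-norm precision, observe that $(\psi_A\otimes\psi_B)^{R_A}=(\mathbb{I}_A-\psi_A)\otimes\psi_B$ is a projector annihilating $\rho$ so the product case yields outcome $\ket{0}$ essentially deterministically, and argue that the Haar-random case yields outcome $\ket{1}$ with constant probability. The one step you flag as ``where the real work lies'' --- the quantitative separation in the entangled case --- is closed in the paper not by a spectral analysis of $\rho^{R_A}$ on the support of $\rho$ but by an operator-norm perturbation that avoids eigen-decomposition entirely: writing $(\psi_{AB})^{R_A}=-\psi_{AB}+\mathbb{I}_A\otimes\Tr_A(\psi_{AB})$, Lemma~\ref{lemma:DiamondDistance_HamiltonianEvolution} bounds the diamond distance between $\left[\mathrm{C}\text{-}e^{-i(\psi_{AB})^{R_A}\pi}\right]$ and $\left[\mathrm{C}\text{-}e^{i\psi_{AB}\pi}\right]$ by $2\pi e^{\pi}\norm{\Tr_A(\psi_{AB})}_\infty$, and the latter channel maps $\ketbra{+}{+}\otimes\psi_{AB}$ to a state giving outcome $\ket{1}$ with probability one. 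The concentration you anticipate is then just $\mathbb{E}\norm{\Tr_A(\psi_{AB})}_\infty\le\sqrt{\mathbb{E}\Tr\left[\Tr_A(\psi_{AB})^2\right]}=\sqrt{2\sqrt{d}/(d+1)}\rightarrow0$ plus Markov's inequality, which gives the $\tfrac{4}{5}\cdot\tfrac{5}{6}=\tfrac{2}{3}$ accounting in a single shot --- no repetition or Chernoff amplification is needed (and note that amplification cannot help against the Haar-typicality failure event anyway, since the state is fixed across repetitions).
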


We leave the formal proof in Appendix~\ref{app:ED_HME+Reduction}, and discuss the intuition here. In the HME-based entanglement detection protocol depicted in Fig.~\ref{fig:ent_det}(a), the sample complexity is determined by three main factors: the number of copies of $\rho$ required to realize a single controlled unitary using HME, the number of repetitions of the controlled unitary operations, and the number of repetitions of the entire circuit to obtain the desired measurement outcomes. The constant sample complexity given in Theorem~\ref{theorem:ED_HME+Reduction} can be explained based on the following reasons. First, since $\norm{H_R}_\infty=2$, which remains constant as the system size grows, the cost of implementing the controlled-$e^{-i\rho^{R_A} t}$ evolution does not scale with system dimension, as indicated by Corollary~\ref{corollary:cost(controlled_version)}.
Second, a single controlled unitary evolution is sufficient. When $\rho=\psi_A\otimes\psi_B$, the lowest eigenvalue of $\rho^{R_A}=\mathbb{I}_{d_A}\otimes\psi_B-\psi_A\otimes\psi_B$ is zero. On the other hand, when $\rho=\psi_{AB}$ is a global random pure state, then the lowest eigenvalue of $\rho^{R_A}=\mathbb{I}_{d_A}\otimes\rho_B-\psi_{AB}\sim\frac{1}{d_B}\mathbb{I}_d-\psi_{AB}$ is approximately $-1$, especially for large $d$. Therefore, by setting $t=\pi$, the eigenvalues of $e^{-i(\psi_{AB})^{R_A}t}$ and $e^{-i(\psi_{A}\otimes\psi_{B})^{R_A}t}$ corresponding to the lowest eigenvalues of $(\psi_{AB})^{R_A}$ and $(\psi_{A}\otimes\psi_{B})^{R_A}$ are approximately $e^{i\pi}$ and $e^{i0}$, respectively. As the binary representations of $\pi/2\pi$ and $0/2\pi$ are $0.1$ and $0$, a single ancilla qubit, which is equivalent to a single controlled unitary operation in quantum phase estimation, is sufficient for telling these two different eigenvalues. Finally, one round of the experiment is sufficient. This is because $\rho$ itself has a significant overlap with the eigenspace corresponding to the lowest eigenvalue of $\rho^{R_A}$ in both cases. Therefore, by setting the evolved state $\sigma$ to be $\rho$, as depicted in Fig.~\ref{fig:ent_det}(b), one round of experiment can detect the entanglement with a significantly high probability.

\subsection{Negativity Estimation}\label{subsec:neg}

Among all positive map criteria, the positive partial transposition criterion \cite{peres1996ppt} is of fundamental importance due to its strong detection capability \cite{aubrun2012partial}, its connection to entanglement distillation \cite{Horodecki1998distillation}, and its concise mathematical formulation. The positive partial transposition criterion states that if the target state $\rho$ is separable, then the matrix $\rho^{T_A}$ has no negative eigenvalues, where $T_A$ denotes the partial transposition map on a subsystem $A$. The entanglement negativity serves as an entanglement measure by quantifying the violation of the positive partial transposition criterion,
\begin{equation}
N(\rho)=\frac{\norm{\rho^{T_A}}_1-1}{2},
\end{equation}
which equals to the sum of the absolute values of all negative eigenvalues of $\rho^{T_A}$. As a widely-used quantifier for mix-state entanglement, negativity plays a vital role in both quantum information science \cite{vidal2002negativity} and theoretical physics \cite{calabrese2012cft,lu2020negativity}.

However, due to the non-physical nature of the transposition map and the highly nonlinear behavior of negativity, there is still no efficient protocol for unbiasedly estimating negativity. Some approaches, such as directly measuring the spectral values of $\rho^{T_A}$ \cite{keyl2001spectrum,horodecki2002direct}, require highly complex joint operations that are extremely challenging for practical quantum devices. Tomography-based protocols estimate negativity by reconstructing the density matrix, which requires a large sample complexity and significant classical computational resources \cite{kueng2017low,chen2022tight}. Other attempts primarily utilize the moments of the partially transposed density matrix to infer properties of negativity \cite{gray2018machine,yu2021optimal}, rather than providing an unbiased estimation. Therefore, these methods cannot be directly employed for quantifying entanglement.

HME can also alleviate the difficulties in measuring negativity, from a similar perspective as in entanglement detection. First, as illustrated in Fig.~\ref{fig:proof}(c), HME can realize the evolution of $e^{-i\rho^{T_A}t}$ by setting the evolution Hamiltonian to $H_P=\Phi^+_A\otimes S_B$. A more complete calculation in Appendix~\ref{app:proof_of_err:PT} reveals that the complexity of achieving the evolution of $e^{-i\rho^{T_A}t}$ is $\mathcal{O}(\epsilon^{-1}d_At^2)$, which is lower than predicted by Theorem~\ref{theorem:cost} and scales only with the dimension of subsystem $A$. Second, as $e^{-i\rho t}$ is a nonlinear function of $\rho$, density matrix exponentiation enables the measurement of certain nonlinear quantities $f(\rho)=f(\{\lambda_i\}_i)$, which is a function of the eigenvalues of $\rho$, such as quantum entropy \cite{pichler2016cold,wang2023entropy}. HME further extends the measurement capability to a wider range of functions, $f[\mathcal{N}(\rho)]$, as the exponentiation is applied to $\mathcal{N}(\rho)$ rather than $\rho$.

We propose a negativity estimation protocol inspired by the Hadamard-test-like circuit in Ref.~\cite{knill1998power}, as depicted in Fig.~\ref{fig:ent_det}(c), which is originally designed for estimating the trace of a unitary. Following a similar approach, by setting the evolved state as the maximally mixed state and performing the controlled-$e^{-i\rho^{T_A}t}$ operation on the ancilla and evolved states using HME, we can estimate quantities like $\Tr[\cos(\rho^{T_A}t)]$ or $\Tr[\sin(\rho^{T_A}t)]$ depending on the measurement basis of the ancilla qubit. For instance, measuring the expectation value of the Pauli-$X$ operator on the ancilla qubit would yield
\begin{equation}
\begin{aligned}
&\Tr\left[(X\otimes\mathbb{I}_d) \mathrm{C}\text{-}e^{-i\rho^{T_A}t}\left(\ketbra{+}{+}\otimes\frac{\mathbb{I}_d}{d}\right)\mathrm{C}\text{-}e^{i\rho^{T_A}t}\right]\\
=&\frac{1}{2d}\Tr\left[e^{-i\rho^{T_A}t}+e^{i\rho^{T_A}t}\right]=\frac{1}{d}\Tr\left[\cos(\rho^{T_A}t)\right],
\end{aligned}
\end{equation}
where $\mathrm{C}\text{-}e^{-\rho^{T_A}t}=\ketbra{0}{0}\otimes\mathbb{I}_d+\ketbra{1}{1}\otimes e^{-\rho^{T_A}t}$.
The measurement of $\Tr\left[\cos(\rho^{T_A}t)\right]$ with different values of $t$ enables us to estimate the negativity, which can be expressed as a convergent Fourier expansion
\begin{equation}\label{eq:talor}
\norm{\rho^{T_A}}_1=\frac{\pi}{2}d-\sum_{l=1}^\infty\frac{4}{\pi(2l-1)^2}\Tr[\cos((2l-1)\rho^{T_A})].
\end{equation}

In particular, this HME-based negativity estimation algorithm repeats the following procedures for a total of $M$ times:
\begin{enumerate}
\item Randomly sample an integer $l$ with a probability distribution according to the coefficients of Eq.~\eqref{eq:talor}.
\item Run the circuit of Fig.~\ref{fig:ent_det}(c) for one shot with evolution time $t=(2l-1)$ and $K(l)$ copies of inputting states $\rho$ to produce an estimation of $\Tr\left[\cos\left((2l-1)\rho^{T_A}\right)\right]$ using the measurement result of the ancilla qubit.
\end{enumerate}
By taking the median of means of these sequentially generated estimations, we can get the final estimation of $N(\rho)$. The details of this algorithm can be found in Algorithm~\ref{algo:nega}. Thus, the sampling times, $M$, together with the number of states in a single-shot experiment, $K(l)$, constitute the total sample complexity, which is proved to be:

\begin{theorem}\label{theorem:cost_Ent_Neg}
Using Algorithm~\ref{algo:nega}, one needs an expected number of $\widetilde{\mathcal{O}}\left(\log(\delta^{-1})\epsilon^{-3}d^2d_A\norm{\rho^{T_A}}_1\right)$ copies of $\rho$ to ensure $\abs{\hat{N}(\rho)-N(\rho)}\le\epsilon$ with a probability of at least $1-\delta$. Here, the $\widetilde{\mathcal{O}}$ notation suppresses logarithmic expressions for $d$ and $\epsilon$. Besides, the sampling times scales as $M=\mathcal{O}\left(\log(\delta^{-1})\epsilon^{-2}d\norm{\rho^{T_A}}_1\right)$, and the copies of $\rho$ needed in a single run of circuit in Fig.~\ref{fig:ent_det}(c) when setting $t=(2l-1)$ scales as $K(l)=\widetilde{\mathcal{O}}(\epsilon^{-1}dd_Al)$.
\end{theorem}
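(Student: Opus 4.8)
The plan is to decompose the error of Algorithm~\ref{algo:nega} into three independent contributions—the importance-sampling variance, the Fourier-truncation bias, and the systematic error of the HME subroutine—bound each, and then combine them. First I would fix the single-round estimator. Writing $p_l=\frac{8}{\pi^2(2l-1)^2}$ so that $\sum_l p_l=1$ and $\frac{4}{\pi(2l-1)^2}=\frac{\pi}{2}p_l$, Eq.~\eqref{eq:talor} reads $\norm{\rho^{T_A}}_1=\frac{\pi}{2}d-\frac{\pi}{2}\,\mathbb{E}_{l\sim p_l}\Tr[\cos((2l-1)\rho^{T_A})]$. Since one shot of the circuit in Fig.~\ref{fig:ent_det}(c) at $t=2l-1$ returns an outcome $b\in\{\pm1\}$ with $\mathbb{E}[b\mid l]=\frac{1}{d}\Tr[\cos((2l-1)\rho^{T_A})]$, the variable $Y=\frac{\pi}{2}d-\frac{\pi d}{2}b$ is unbiased for $\norm{\rho^{T_A}}_1$, so $\frac{Y-1}{2}$ estimates $N(\rho)$ up to constant-factor rescaling of the target accuracy.

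The crucial step is the variance bound, which is what forces $M$ to scale with $\norm{\rho^{T_A}}_1$ rather than $d$. Because $Y\in\{0,\pi d\}$, we have $\mathbb{E}[Y^2]=(\pi d)^2\Pr[b=-1]$, and a direct computation gives $\Pr[b=-1\mid l]=\frac{1}{d}\sum_j\sin^2(\tfrac{(2l-1)\mu_j}{2})$, where $\{\mu_j\}$ are the eigenvalues of $\rho^{T_A}$. I would then prove the identity $\mathbb{E}_{l\sim p_l}\sin^2(\tfrac{(2l-1)\mu}{2})=\frac{|\mu|}{\pi}$, which follows from $\sin^2=\frac{1-\cos}{2}$ together with the closed form $\sum_l\frac{\cos((2l-1)\mu)}{(2l-1)^2}=\frac{\pi^2}{8}-\frac{\pi|\mu|}{4}$ (the very series underlying Eq.~\eqref{eq:talor}). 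This yields $\mathbb{E}[Y^2]=\pi d\sum_j|\mu_j|=\pi d\norm{\rho^{T_A}}_1$, hence $\mathrm{Var}(Y)\le\pi d\norm{\rho^{T_A}}_1$. Plugging this into a median-of-means estimator with $\mathcal{O}(\log\delta^{-1})$ batches (Chebyshev on each batch, median boosting across batches) produces the claimed $M=\mathcal{O}(\log(\delta^{-1})\epsilon^{-2}d\norm{\rho^{T_A}}_1)$.

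Next I would control the two biases so they remain $\mathcal{O}(\epsilon)$. Truncating the sum/sampling at $L$ leaves a tail bounded by $\frac{4d}{\pi}\sum_{l>L}(2l-1)^{-2}=\mathcal{O}(d/L)$, so $L=\Theta(d/\epsilon)$ suffices, with $\log L=\mathcal{O}(\log(d/\epsilon))$ absorbed into $\widetilde{\mathcal{O}}$. For the HME bias, a diamond-distance error $\eta_l$ in realizing $\mathrm{C}\text{-}e^{-i\rho^{T_A}t}$ perturbs $\mathbb{E}[b\mid l]$ by $\mathcal{O}(\eta_l)$ (since $\norm{X}_\infty=1$), contributing $\frac{\pi d}{2}\sum_l p_l\eta_l$ to the bias of $Y$. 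Choosing $\eta_l=\widetilde{\Theta}(\epsilon l/d)$ keeps this sum at $\widetilde{\mathcal{O}}(\epsilon)$ because $\sum_{l\le L}\frac{l}{(2l-1)^2}=\mathcal{O}(\log L)$; substituting this $\eta_l$ into the refined partial-transposition cost $\mathcal{O}(\eta^{-1}d_A t^2)$ from Appendix~\ref{app:proof_of_err:PT} with $t=2l-1$ gives exactly $K(l)=\widetilde{\mathcal{O}}(\epsilon^{-1}d d_A l)$.

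Finally I would assemble the pieces: the expected cost of a single round is $\mathbb{E}_{l\sim p_l}K(l)=\widetilde{\mathcal{O}}(\epsilon^{-1}d d_A)\sum_{l\le L}p_l\,l=\widetilde{\mathcal{O}}(\epsilon^{-1}d d_A)$, again using $\sum_{l\le L}\frac{l}{(2l-1)^2}=\mathcal{O}(\log L)$, and multiplying by $M$ rounds yields the expected total $\widetilde{\mathcal{O}}(\log(\delta^{-1})\epsilon^{-3}d^2 d_A\norm{\rho^{T_A}}_1)$. The main obstacle is the variance step: the naive estimate $\mathrm{Var}(Y)=\mathcal{O}(d^2)$ would lose a factor of $d/\norm{\rho^{T_A}}_1$, and it is precisely the matching between the sampling weights $p_l\propto(2l-1)^{-2}$ and the Fourier coefficients that collapses $\mathbb{E}_l\sin^2$ to $|\mu|/\pi$ and recovers the sharp $\norm{\rho^{T_A}}_1$ dependence. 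The secondary subtlety is co-designing $\eta_l$ and $L$ so that the per-round cost stays linear in $l$ while both biases remain below $\epsilon$.
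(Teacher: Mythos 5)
Your proposal is correct and follows essentially the same route as the paper's proof: the same Fourier expansion with importance-sampling distribution $p_l\propto(2l-1)^{-2}$, the same truncation at $L=\Theta(d/\epsilon)$, the same per-$l$ precision allocation $\eta_l\propto \epsilon l/(d\log L)$ fed into the refined partial-transposition cost $\mathcal{O}(\eta_l^{-1}d_A(2l-1)^2)$, and the same median-of-means analysis with variance $\mathcal{O}(d\norm{\rho^{T_A}}_1)$. The only local difference is the variance step, where you compute $\mathbb{E}[Y^2]=\pi d\norm{\rho^{T_A}}_1$ exactly via the identity $\mathbb{E}_{l\sim p}\sin^2\bigl((2l-1)\mu/2\bigr)=\abs{\mu}/\pi$, whereas the paper bounds $\operatorname{Var}[\widetilde{\mathbb{X}}]$ for the truncated, noisy variable by comparing $\mathbb{E}[\widetilde{\mathbb{X}}^2]$ and $\mathbb{E}[\widetilde{\mathbb{X}}]^2$ to $\pi^2d^2/4$; since your identity holds for the ideal untruncated estimator, one should still verify (as the paper implicitly does) that truncation and the HME error perturb the second moment only by $\mathcal{O}(\epsilon_1 d+\epsilon d)$, which is dominated by $d\norm{\rho^{T_A}}_1$ because $\norm{\rho^{T_A}}_1\ge 1$.
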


We leave the technical proof of Theorem~\ref{theorem:cost_Ent_Neg} in Appendix~\ref{app:nega_cost}. Due to the property of entanglement negativity, $\norm{\rho^{T_A}}_1=\norm{\rho^{T_B}}_1$, we can always assume $d_A\le d_B$. As $\norm{\rho^{T_A}}_1\le d_A$, the worst case sample complexity is at most $\widetilde{\mathcal{O}}\left(\epsilon^{-3}d^3\right)$. As mentioned before, a straightforward negativity estimation protocol is to first generate an estimation of $\rho$ using quantum state tomography and then compute the negativity based on it. Quantum state tomography uses multiple copies of $\rho$ to learn a classical description $\hat{\rho}$ such that $\norm{\hat{\rho}-\rho}_1\le\epsilon$. The sample complexity of quantum state tomography using single-copy measurements is proved to be $\Theta(\epsilon^{-2}d^3)$ \cite{kueng2017low,chen2022tight}. It seems that this single-copy strategy has a lower sample complexity than the HME-based protocol.

However, accurately learning a quantum state in trace distance is not equivalent to accurately estimating negativity. One can show that 
\begin{equation}
\abs{N(\rho_1)-N(\rho_2)}\le\frac{\sqrt{d}}{2}\norm{\rho_1-\rho_2}_1
\end{equation}
holds for all states $\rho_1$ and $\rho_2$, where the equal sign can be asymptotically satisfied. Therefore, if one wants to estimate the negativity of any state to $\epsilon$ accuracy, an accuracy of $\Theta(\epsilon/\sqrt{d})$ in trace distance for state tomography is required. Substituting this into the complexity of quantum state tomography, the complexity of this tomography-based negativity estimation protocol scales as $\Theta(\epsilon^{-2}d^4)$, which is exponentially higher than that of the HME-based protocol. 

Note that, the exponential sample complexity of estimating negativity is unavoidable, even with highly joint operations. Based on the Holevo-Helstrom theorem, we prove the following proposition in Appendix~\ref{app:nega_LB}:
\begin{proposition}\label{prop:nega_LB}
If there exists a protocol that can estimate $N(\rho)$ to $\epsilon$ accuracy with a probability of at
least $2/3$ for any state $\rho$, the worst case sample complexity is lower bounded by $\Omega(\epsilon^{-2}d)$, where $d$ is the dimension of the Hilbert space for $\rho$.
\end{proposition}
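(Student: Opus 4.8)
The plan is to establish the lower bound $\Omega(\epsilon^{-2}d)$ by a reduction from a state-discrimination problem, following the same Holevo-Helstrom strategy that underpins Theorem~\ref{theorem:LowerBound}. The idea is to construct two families of states $\rho_0$ and $\rho_1$ whose negativities differ by more than $2\epsilon$, so that any protocol estimating $N(\rho)$ to accuracy $\epsilon$ with success probability $\ge 2/3$ must be able to distinguish $\rho_0$ from $\rho_1$ with the same probability. By the Holevo-Helstrom theorem, distinguishing two states from $K$ copies each succeeds with probability at most $\frac{1}{2}+\frac{1}{4}\norm{\rho_0^{\otimes K}-\rho_1^{\otimes K}}_1$, so if this quantity stays below the threshold required for success probability $2/3$ unless $K=\Omega(\epsilon^{-2}d)$, the bound follows.

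\emph{First} I would choose the two candidate states explicitly. A natural choice is to take $\rho_0$ to be a fixed reference state (for instance the maximally mixed state $\mathbb{I}_d/d$ on the bipartite system, which is separable and has $N(\rho_0)=0$) and let $\rho_1$ be a small perturbation of $\rho_0$ along a direction that changes the negativity as efficiently as possible. Concretely, writing $\rho_1=\rho_0+\eta\,\Delta$ for a traceless Hermitian perturbation $\Delta$ with $\norm{\Delta}_1=1$ and a small parameter $\eta$, the change in negativity is governed by the sensitivity bound $\abs{N(\rho_1)-N(\rho_0)}\le\frac{\sqrt{d}}{2}\norm{\rho_1-\rho_0}_1=\frac{\sqrt{d}}{2}\eta$, and I would pick $\Delta$ so that this bound is saturated (its asymptotic saturation is already asserted in the main text). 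To force a negativity gap of $2\epsilon$ one then needs $\eta=\Theta(\epsilon/\sqrt{d})$.

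\emph{Next} I would bound the copy-distinguishability. Since the two states are $\eta$-close in trace distance, a single-copy bound gives $\norm{\rho_0-\rho_1}_1=\eta$, and for $K$ copies the relevant quantity is controlled via the fidelity: using $1-F(\rho_0,\rho_1)=\mathcal{O}(\eta^2)$ (valid because the perturbation is first order and $\rho_0$ can be taken full-rank and well-conditioned) together with the multiplicativity $F(\rho_0^{\otimes K},\rho_1^{\otimes K})=F(\rho_0,\rho_1)^K$, one gets $F(\rho_0^{\otimes K},\rho_1^{\otimes K})=(1-\mathcal{O}(\eta^2))^K$. The trace distance between the $K$-copy states stays bounded away from $1$—say below the value corresponding to success probability $2/3$—as long as $K\eta^2=\mathcal{O}(1)$, i.e.\ $K=\mathcal{O}(\eta^{-2})=\mathcal{O}(\epsilon^{-2}d)$. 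Turning this around, to exceed the distinguishing threshold one must have $K=\Omega(\epsilon^{-2}d)$, which yields the claimed lower bound on the worst-case sample complexity.

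\emph{The main obstacle} I expect is the quantitative control of the negativity gap in the perturbative regime: I must exhibit an explicit $\Delta$ for which $N(\rho_0+\eta\Delta)-N(\rho_0)$ genuinely grows like $\frac{\sqrt{d}}{2}\eta$ rather than being washed out by the non-smoothness of $\norm{\cdot}_1$ near degenerate spectra. This requires choosing $\rho_0$ and $\Delta$ so that $\rho_0^{T_A}$ has an eigenvalue crossing zero (or a cluster of small eigenvalues) and that the perturbation $\eta\Delta^{T_A}$ pushes $\Theta(d)$ eigenvalues across zero coherently, making the trace-norm response first-order and of size $\Theta(\sqrt{d}\,\eta)$; constructing such a configuration and verifying both the negativity gap and the fidelity estimate $1-F=\mathcal{O}(\eta^2)$ simultaneously is the delicate part of the argument.
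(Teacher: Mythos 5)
Your overall strategy---pick two states whose negativities differ by more than $2\epsilon$, reduce estimation to discrimination via the Holevo--Helstrom theorem, and control the $K$-copy distinguishability through $\norm{\rho_0^{\otimes K}-\rho_1^{\otimes K}}_1\le2\sqrt{1-F(\rho_0,\rho_1)^K}$---is exactly the paper's strategy. The problem is that the step you defer as ``the delicate part'' is the entire content of the proof, and the one concrete choice you do commit to fails. Taking $\rho_0=\mathbb{I}_d/d$ does not work: the maximally mixed state lies in the \emph{interior} of the PPT set, so the negativity is identically zero in a whole neighborhood of it. Along the isotropic direction, for instance, $\frac{x}{\sqrt d}\Phi^++\frac{1-x}{d}\mathbb{I}_d$ has zero negativity for all $x\le\frac{1}{1+\sqrt d}$, i.e.\ out to trace distance $\approx2/\sqrt d$ from $\mathbb{I}_d/d$; a perturbation of size $\eta=\Theta(\epsilon/\sqrt d)$ with small $\epsilon$ stays inside this region and produces zero negativity gap rather than $\Theta(\sqrt d\,\eta)$. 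The inequality $\abs{N(\rho_1)-N(\rho_0)}\le\frac{\sqrt d}{2}\norm{\rho_1-\rho_0}_1$ is only an upper bound and is nowhere near saturated at the maximally mixed state. There is also a second, independent problem with your fidelity estimate: for a commuting perturbation of a full-rank state, $1-F\approx\frac{\eta^2}{4}\sum_i\delta_i^2/p_i$, so at $\rho_0=\mathbb{I}_d/d$ a low-rank $\Delta$ (which is what you would want to maximize the negativity response) gives $1-F=\Theta(d\eta^2)$, not $\mathcal{O}(\eta^2)$; this alone would degrade your final bound to $\Omega(\epsilon^{-2})$, losing the factor of $d$. These two requirements pull in opposite directions at your chosen base point, and your sketch does not resolve the tension.

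The paper resolves it by choosing a different base point: the isotropic family $\rho(x)=\frac{x}{\sqrt d}\Phi^++\frac{1-x}{d}\mathbb{I}_d$ perturbed around $x=1/2$, deep inside the NPT region. There $\rho(x)^{T_A}=\frac{x}{\sqrt d}S+\frac{1-x}{d}\mathbb{I}_d$ has $\Theta(d)$ negative eigenvalues, each moving linearly in $x$, so the negativity is exactly $N[\rho(x)]=\frac{x\sqrt d}{2}-\frac12+\frac{1-x}{2\sqrt d}$ and a shift $\lambda=6\epsilon/\sqrt d$ gives a gap exceeding $2.5\epsilon$---no eigenvalue-crossing or non-smoothness issues arise because the relevant eigenvalues are already strictly negative. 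Simultaneously, because $\rho(1/2)$ places weight $\approx1/2$ on the very direction being perturbed, the Bures penalty is dimension-free: $F(\rho(x_1),\rho(x_2))\ge\left[\sqrt{x_1x_2}+\sqrt{(1-x_1)(1-x_2)}\right]^2\ge1-2\lambda^2$. From there the Holevo--Helstrom argument you outline goes through verbatim and yields $K\ge\Omega(\lambda^{-2})=\Omega(\epsilon^{-2}d)$. Until you exhibit a pair achieving both the $\Theta(\sqrt d)$ negativity sensitivity and the dimension-free fidelity constant, your argument does not close.
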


Furthermore, calculating the negativity from the classical description of $\rho$ requires to conduct spectral decomposition of an exponentially large matrix, leading to requirements for exponentially large computational time and classical memory. While in the HME-based protocol, the number of classical calculations, $\mathcal{O}(M)$, is much lower than that of the spectral decomposition. Besides, the classical memory requirement of Algorithm~\ref{algo:nega} mainly comes from the subroutine $\operatorname{\mathbf{MedianOfMeans}}$, which scales logarithmically with $d$.

\section{Quantum Noiseless State Recovery}\label{sec:QNSR}

\subsection{Background and Protocol}
With the rapid advancement of quantum hardware and the increasing number of controllable qubits, noises in quantum circuits act as a significant obstacle to implementing quantum algorithms \cite{stilck2021limitations,chen2022complexity}. To enhance the capability of processing quantum information, it is crucial to not only optimize quantum hardware but also develop algorithms that are robust against various types of noises. Researchers have dedicated significant efforts in this direction, resulting in various protocols that can be categorized into two main types: quantum error correction and mitigation.

Proposed thirty years ago, quantum error correction is widely recognized as the cornerstone of fault-tolerant universal quantum computing \cite{shor1995scheme,steane1996error,preskill1998fault}. By encoding the target state into a larger Hilbert space, quantum error correction protects the target state against a large variety of errors \cite{terhal2015qec}. From the perspective of channel reversibility, the noise influence acting on the code space is reversible, although the physical noises themselves are normally irreversible. In principle, when the physical noise rate is below a certain threshold \cite{aharonov1997FaultTolerant}, quantum error correction can suppress the logical error rate to an arbitrarily small value by consuming a large number of ancilla qubits \cite{fowler2012surface}.

To address the challenges of realizing quantum error correction and mitigating errors with limited capabilities for current quantum devices, the concept of quantum error mitigation was proposed \cite{endo2021hybrid,cai2022quantum}. In contrast to quantum error correction, which aims to protect the noiseless state, quantum error mitigation focuses on a simpler task of recovering the noiseless expectation value. Specifically, let the noiseless pure state be $\psi$ and the real state be $\mathcal{E}(\psi)$, where $\mathcal{E}$ stands for some noise channel, quantum error mitigation aims to recover the value of $\Tr(O\psi)$ through performing operations on $\mathcal{E}(\psi)$ for some observable $O$.

Following such a routine, a number of quantum error mitigation protocols have been developed. Some protocols are tailored to handle specific types of noises, such as virtual distillation \cite{William2021virtual,koczor2021exponential} for incoherent errors and subspace expansion for coherent errors \cite{McClean2017hybrid}. At the same time, there are some protocols designed to handle general forms of noises, such as probabilistic error cancellation \cite{temme2017mitigation, endo2018practical}, which is one of the leading approaches in quantum error mitigation. Similar to quantum error correction, the probabilistic error cancellation method also tries to reverse the influence of noises, while in a statistical manner. As the inverse map of the noise channel, $\mathcal{E}^{-1}$, is non-physical except for unitary noises, one needs to decompose $\mathcal{E}^{-1}$ into a sum of experimentally feasible physical channels with positive or negative coefficients. Then, to obtain the value of $\Tr(O\psi)=\Tr[O\mathcal{E}^{-1}\circ\mathcal{E}(\psi)]$, one randomly applies the decomposed channels to the noisy states $\mathcal{E}(\psi)$ according to some probability distribution and measures the resulting states. Due to the statistical nature of this approach, quantum error mitigation typically requires an exponential number of experiments to accurately estimate the noiseless expectation value \cite{Takagi2021optimal,Takagi2022fundamental}.

Building upon HME, we propose a new protocol to handle general noises, which is named \emph{quantum noiseless state recovery}. Similar to entanglement detection, the quantum noiseless state recovery protocol relies on the capability of HME to implement Hermitian-preserving maps. We consider a similar setting as quantum error mitigation, where the ideal noiseless state $\psi$ is generated by an ideal noiseless circuit, $\psi=\mathcal{U}(\ketbra{0}{0})$, while the existence of noise will change it to $\mathcal{E}(\psi)=\mathcal{E}\circ\mathcal{U}(\ketbra{0}{0})$. Note that $\mathcal{E}^{-1}$ is always Hermitian-preserving when it exists \cite{Jiang2021implement}, thus the implementation of $\mathcal{E}^{-1}$ becomes possible with HME. By sequentially inputting multiple copies of $\mathcal{E}(\psi)$ into the circuit depicted in Fig.~\ref{fig:overview}(c) and choosing an appropriate Hamiltonian, we can approximately realize the evolution of
\begin{equation}
e^{-i\mathcal{E}^{-1}\circ\mathcal{E}(\psi)t}=e^{-i\psi t}
\end{equation}
with arbitrary accuracy.

It is evident that $e^{-i\psi t}$ has only one nontrivial eigenvalue, $e^{-it}$, with the corresponding eigenstate $\ket{\psi}$. 
As discussed in Sec.~\ref{subsec:ent_det}, when setting $t=\pi$, the quantum phase estimation circuit only needs one ancilla qubit to prepare $\psi$ from the evolution of $e^{-i\psi \pi}$. Therefore, a simple circuit shown in Fig.~\ref{fig:QNSR} can be used to prepare the noiseless state and thus serves as the circuit for quantum noiseless state recovery. The correctness of our protocol is ensured by the following proposition:
\begin{proposition}\label{prop:QEM_IdealPerformance}
When measuring the ancilla qubit in the ideal circuit depicted on the right side of Fig.~\ref{fig:QNSR}, the outcome $\ket{1}$ occurs with a probability of $\bra{\psi}\sigma\ket{\psi}$. If the outcome $\ket{1}$ is observed, $\sigma$ will evolve into the desired noiseless pure state $\ket{\psi}$.
\end{proposition}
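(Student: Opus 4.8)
The plan is to treat the one-ancilla circuit on the right of Fig.~\ref{fig:QNSR} as a phase-estimation (Hadamard-test) primitive and compute its Kraus operators on the system directly. The first move is to exhibit the controlled unitary in closed form. At $t=\pi$ the evolution operator is $U=e^{-i\psi\pi}$, and since $\psi=\ketbra{\psi}{\psi}$ is a rank-one projector, functional calculus gives $U=e^{-i\pi}\ketbra{\psi}{\psi}+(\mathbb{I}-\ketbra{\psi}{\psi})=\mathbb{I}-2\ketbra{\psi}{\psi}$, the reflection about $\ket{\psi}$. This is the key structural fact: the nontrivial eigenphase $-1$ is attached only to $\ket{\psi}$, while every vector orthogonal to $\ket{\psi}$ is left invariant.

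Next I would propagate the joint state through Hadamard, controlled-$U$, Hadamard, and measurement. Starting from $\ketbra{0}{0}_c\otimes\sigma$, a short calculation shows that reading the ancilla in the computational basis is equivalent to applying to $\sigma$ the Kraus pair $M_0=\frac{1}{2}(\mathbb{I}+U)$ for outcome $\ket{0}$ and $M_1=\frac{1}{2}(\mathbb{I}-U)$ for outcome $\ket{1}$. Substituting the reflection form of $U$ collapses these to $M_0=\mathbb{I}-\ketbra{\psi}{\psi}$ and $M_1=\ketbra{\psi}{\psi}$, so the circuit realizes nothing but the projective measurement $\{\ketbra{\psi}{\psi},\,\mathbb{I}-\ketbra{\psi}{\psi}\}$ on $\sigma$, with the outcome recorded in the ancilla.

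Both assertions of the proposition then follow by routine substitution. The probability of outcome $\ket{1}$ is $\Tr\left(M_1\sigma M_1^\dagger\right)=\Tr\left(\ketbra{\psi}{\psi}\sigma\ketbra{\psi}{\psi}\right)=\bra{\psi}\sigma\ket{\psi}$, and the renormalized post-measurement state is $M_1\sigma M_1^\dagger/\bra{\psi}\sigma\ket{\psi}=\ketbra{\psi}{\psi}$, i.e.\ the target noiseless pure state $\ket{\psi}$. To make the collapse transparent for a genuinely mixed $\sigma$, I would expand $\sigma=\sum_j p_j\ketbra{\phi_j}{\phi_j}$ and note that each branch maps to $\braket{\psi}{\phi_j}\ket{\psi}$, so every surviving branch points along $\ket{\psi}$.

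There is no genuinely hard step; the statement is elementary once the reflection structure of $U$ is exposed. The only points demanding care are the bookkeeping in the mixed-state formalism --- ensuring the post-measurement state is the properly normalized $M_1\sigma M_1^\dagger$ rather than silently assuming $\sigma$ is pure --- and confirming that the final Hadamard (the single-qubit inverse quantum Fourier transform) is precisely what converts the controlled phase into these projective Kraus operators, so that the circuit drawn in Fig.~\ref{fig:QNSR} indeed implements the claimed measurement.
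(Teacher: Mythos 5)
Your proposal is correct and follows essentially the same route as the paper: both hinge on the identity $e^{-i\psi\pi}=\mathbb{I}-2\ketbra{\psi}{\psi}$ and reduce the circuit to the projective measurement $\{\ketbra{\psi}{\psi},\,\mathbb{I}-\ketbra{\psi}{\psi}\}$ on $\sigma$. The paper writes out the joint post-circuit density matrix in block form and projects the ancilla onto $\ket{-}$, which is just your Kraus-operator calculation $M_1=\tfrac{1}{2}(\mathbb{I}-U)=\ketbra{\psi}{\psi}$ in different notation.
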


\begin{figure}[t]
\centering
\includegraphics[width=0.48\textwidth]{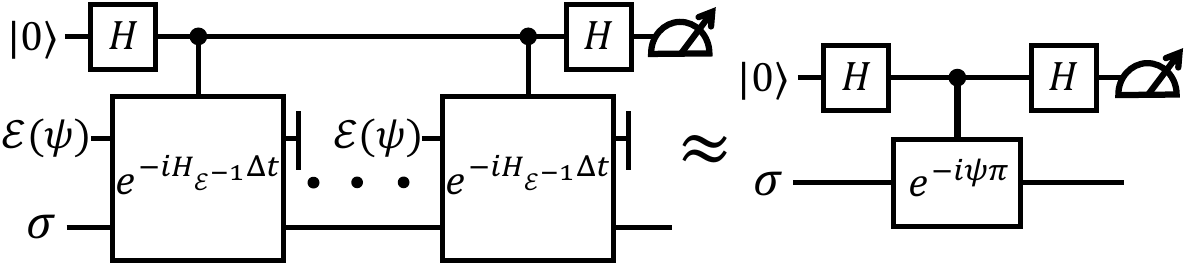}
\caption{The circuit of our error mitigation protocol. The Hamiltonian is set to be $H_{\mathcal{E}^{-1}}=\Lambda_{\mathcal{E}^{-1}}^{T_1}$, the partial transposition of the Choi matrix of the inverse map $\mathcal{E}^{-1}$. The evolved state $\sigma$ serves as a guiding state which needs to have a high fidelity with the noiseless state $\psi$.}
\label{fig:QNSR}
\end{figure}

In addition to the logic of quantum phase estimation, we can prove this proposition by directly analyzing the circuit. Observe that $e^{-i\psi\pi}=\mathbb{I}-2\psi$, when the controlled-$e^{-i\psi\pi}$ gate is applied to the ancilla qubit and the evolved state $\sigma$, the whole state will evolve into
\begin{equation}
\Sigma=\frac{1}{2}
\begin{bmatrix}
\sigma & \sigma-2\sigma\psi\\
\sigma-2\psi\sigma & \sigma-2\sigma\psi-2\psi\sigma+4\psi\sigma\psi
\end{bmatrix},
\end{equation}
whose row and column indices labeling the matrix blocks correspond to the indices of the ancilla qubit. When the measurement outcome of the ancilla qubit is $\ket{1}$, which corresponds to measuring $\ket{-}\otimes\mathbb{I}_d$ on $\Sigma$, the resulting state will become
\begin{equation}\label{eq:qnsr_ideal_output_state}
\begin{aligned}
&\Tr_c\left(\frac{(\ketbra{-}{-}\otimes\mathbb{I}_d)\Sigma(\ketbra{-}{-}\otimes\mathbb{I}_d)}{\Tr\big[(\ketbra{-}{-}\otimes\mathbb{I}_d)\Sigma\big]}\right)\\
=&\frac{1}{\bra{\psi}\sigma\ket{\psi}}\Tr_c\left(\frac{1}{2}
\begin{bmatrix}
\psi\sigma\psi & -\psi\sigma\psi \\
-\psi\sigma\psi & \psi\sigma\psi
\end{bmatrix}\right)=\psi,
\end{aligned}
\end{equation}
where $\Tr_c$ denotes tracing out the control qubit, and $\bra{\psi}\sigma\ket{\psi}$ is the corresponding probability.

The sample complexity of quantum noiseless state recovery is mainly influenced by two factors. First, the realization of the controlled-$e^{-i\psi \pi}$ evolution requires multiple input states $\mathcal{E}(\psi)$, the number of which is upper bounded by $\mathcal{O}(\epsilon^{-1}\norm{H_{\mathcal{E}^{-1}}}_\infty^2 \pi^2)=\mathcal{O}(\epsilon^{-1}\norm{H_{\mathcal{E}^{-1}}}_\infty^2)$, as predicted in Corollary~\ref{corollary:cost(controlled_version)}. Yet, after this evolution, one performs measurements and post-selection to prepare the target state. This post-selection amplifies a block in the whole state $\Sigma$ by $\bra{\psi}\sigma\ket{\psi}^{-1}$ times. Intuitively, to keep the deviation of the prepared state from the ideal state smaller than $\epsilon$, one needs to realize the controlled-$e^{-i\psi \pi}$ evolution with an accuracy of $\bra{\psi}\sigma\ket{\psi}\epsilon$, which increases the sample complexity by a factor of $\bra{\psi}\sigma\ket{\psi}^{-1}$. Second, only when the measurement outcome of the ancilla qubit is $\ket{1}$, the resultant state will be $\psi$. Thus, on average, a number of $\bra{\psi}\sigma\ket{\psi}^{-1}$ experiments are needed to obtain one successful experiment. Combining these two factors, we arrive at the following theorem:
\begin{theorem}\label{theorem:HME-QEM}
Let $\mathcal{E}$ be an invertible noise map. Denote $H_{\mathcal{E}^{-1}}:=\Lambda_{\mathcal{E}^{-1}}^{T_1}$ as the Hamiltonian corresponding to the inverse noise channel, and let $F:=\bra{\psi}\sigma\ket{\psi}$ be the fidelity between $\ketbra{\psi}{\psi}$ and $\sigma$. Then we need at most $\mathcal{O}\left(\log(\delta^{-1})\epsilon^{-1}F^{-2}\norm{H_{\mathcal{E}^{-1}}}_{\infty}^2\right)$ copies of $\mathcal{E}(\ketbra{\psi}{\psi})$ to approximately produce $\ketbra{\psi}{\psi}$ with trace distance smaller than $\epsilon$ and a success probability at least $1-\delta$.
\end{theorem}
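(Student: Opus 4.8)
The plan is to separate the total cost into two multiplicative contributions, exactly as foreshadowed before the statement: the number of copies of $\mathcal{E}(\psi)$ consumed to realize a single approximate controlled-$e^{-i\psi\pi}$ gate, and the number of independent runs of the circuit of Fig.~\ref{fig:QNSR} needed until the ancilla measurement returns $\ket{1}$. I would fix the ideal channel $\mathcal{U}$ to be the exact controlled evolution $\mathrm{C}\text{-}e^{-i\psi\pi}$ and let $\mathcal{Q}$ denote the channel that HME actually implements, so that Corollary~\ref{corollary:cost(controlled_version)} (with $t=\pi$ and $\mathcal{N}=\mathcal{E}^{-1}$, using $\mathcal{E}^{-1}\circ\mathcal{E}(\psi)=\psi$) guarantees $\norm{\mathcal{Q}-\mathcal{U}}_\diamond\le\epsilon'$ at a cost of $\mathcal{O}(\epsilon'^{-1}\norm{H_{\mathcal{E}^{-1}}}_\infty^2)$ copies. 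Proposition~\ref{prop:QEM_IdealPerformance} already settles the ideal circuit: the outcome $\ket{1}$ has probability $F=\bra{\psi}\sigma\ket{\psi}$ and yields exactly $\psi$. The task is therefore to control how the single-gate error $\epsilon'$ propagates to the post-selected output, and to choose $\epsilon'$ and the number of runs so that both the accuracy and success-probability targets are met.

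For the accuracy, I would first use the definition of the diamond norm to convert the gate error into a trace-distance bound on the joint ancilla-system state: writing $\omega=\ketbra{+}{+}\otimes\sigma$, one has $\norm{(\mathcal{Q}-\mathcal{U})(\omega)}_1\le\epsilon'$. The key step is then a post-selection lemma: for two states at trace distance at most $\epsilon'$ and a projector $\Pi=\ketbra{-}{-}\otimes\mathbb{I}_d$ of ideal success probability $p=\Tr(\Pi\,\mathcal{U}(\omega))=F$, the normalized conditional states differ in trace norm by at most $2\epsilon'/F$; tracing out the ancilla only contracts this distance. I would prove this by splitting $\frac{\Pi\rho\Pi}{p}-\frac{\Pi\rho'\Pi}{p'}$ into a numerator-difference term and a probability-difference term, each bounded by $\epsilon'/F$ using $\norm{\Pi(\cdot)\Pi}_1\le\norm{\cdot}_1$ and $\abs{p'-p}\le\epsilon'$. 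Since the ideal conditional state is exactly $\psi$, this gives $\norm{\tau_{\mathrm{real}}-\psi}_1\le 2\epsilon'/F$, so choosing $\epsilon'=F\epsilon/2$ makes each successful run $\epsilon$-accurate at a per-run cost of $\mathcal{O}(\epsilon^{-1}F^{-1}\norm{H_{\mathcal{E}^{-1}}}_\infty^2)$ copies, contributing the first factor of $F^{-1}$.

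For the success probability, the same stability bound $\abs{p'-F}\le\epsilon'\le F\epsilon/2\le F/2$ (for $\epsilon\le1$) forces the real single-run success probability $p'$ into $[F/2,\,3F/2]$, i.e. $p'=\Theta(F)$. Running the circuit independently up to $m$ times on fresh copies, the probability that no run returns $\ket{1}$ is at most $(1-F/2)^m\le e^{-mF/2}$, which falls below $\delta$ once $m=\mathcal{O}(F^{-1}\log\delta^{-1})$; this supplies the second factor of $F^{-1}$ together with $\log\delta^{-1}$. Multiplying the per-run copy count by $m$ yields the claimed total $\mathcal{O}(\log(\delta^{-1})\epsilon^{-1}F^{-2}\norm{H_{\mathcal{E}^{-1}}}_\infty^2)$.

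I expect the post-selection lemma to be the main obstacle, since it is where the $F^{-1}$ amplification of the gate error genuinely originates and where one must control the \emph{difference of normalizations}, not merely the difference of the unnormalized projected states; a naive estimate that drops the $\abs{p'-p}$ term would lose a factor or produce the wrong $F$-dependence. The only other delicate point is verifying that using the \emph{approximate} channel does not spoil the retry analysis, which is handled by the elementary bound $\abs{p'-F}\le\epsilon'$ above; everything else reduces to the triangle inequality and the cited corollary.
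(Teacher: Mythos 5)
Your proposal is correct and follows essentially the same route as the paper's proof in Appendix~\ref{app:HME_QNSR_thm}: invoke Corollary~\ref{corollary:cost(controlled_version)} for the controlled gate, bound the post-selected state's deviation by splitting the difference of normalized conditional states into a numerator-difference term and a normalization-difference term (each of order $\tilde{\epsilon}/F$), and then repeat the circuit $\mathcal{O}(F^{-1}\log\delta^{-1})$ times for the success probability. The only differences are immaterial constants (the paper takes $\tilde{\epsilon}=F\epsilon/3$ and bounds the conditional distance by $2\tilde{\epsilon}/(F-\tilde{\epsilon})$ rather than your $2\epsilon'/F$).
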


We leave the technical proof of this theorem in Appendix~\ref{app:HME_QNSR_thm}. As $F$ is a key quantity determining the sample complexity, in a practical situation where the noise rate is relatively small, one can choose the noisy state $\mathcal{E}(\psi)$ as the evolved state $\sigma$ to reduce the sample complexity.

\subsection{Comparison}

We should acknowledge that, in comparison to the well-developed quantum error correction and mitigation protocols, quantum noiseless state recovery still faces significant challenges. The implementation of quantum noiseless state recovery relies on rigorous assumptions, such as having precise knowledge of the noise channel to determine the Hamiltonian and an efficient method for implementing controlled Hamiltonian evolution. To apply quantum noiseless state recovery in practical scenarios, many efforts should be made to enhance its practicality. However, quantum noiseless state recovery exhibits fundamental differences from quantum error correction and mitigation, which may provide inspiration for alternative error management protocols. Therefore, a deeper understanding of the characteristics of quantum noiseless state recovery is meaningful.

Quantum noiseless state recovery combines the starting point of quantum error mitigation and the target of quantum error correction. When considering the state of processing, quantum noiseless state recovery is closer to quantum error mitigation, as it deals with states that have already been affected by noises. In contrast, quantum error correction begins with noiseless states and protects them from the influence of noises. Regarding the target, quantum noiseless state recovery aligns more closely with quantum error correction since both aim to recover the noiseless state rather than solely the noiseless expectation values. For a clearer comparison, we present a summary of the distinctive features of these methods in Table~\ref{tab:QEC&QEM&HME}.

\begin{table}[htbp]
\centering
\resizebox{0.48\textwidth}{!}{
\begin{tabular}{c|c|c|c}
\hline
\hline
& QEC & QEM & \makecell{QNSR} \\
\hline
\hline
Main Technique & \makecell{Encoding and\\Decoding} & \makecell{Statistical\\Methods} & \makecell{HME and Quantum\\Phase Estimation} \\
\hline 
State of Processing & Noiseless & Noisy & Noisy\\
\hline
Target & \makecell{State\\ Recovery} & \makecell{Expectation\\ Value Recovery} & \makecell{State\\ Recovery} \\
\hline 
Qubit Overhead & High & Low & Moderate \\
\hline
Sample Complexity & Polynomial & Exponential & Exponential\\
\hline
General Noise & No & Yes & Yes\\
\hline
\hline
\end{tabular}
}
\caption{The comparison of QEC (quantum error correction), QEM (quantum error mitigation), and QNSR (quantum noiseless state recovery).}
\label{tab:QEC&QEM&HME}
\end{table}

In terms of qubit overhead, quantum noiseless state recovery also falls between quantum error correction and mitigation. For the circuit of quantum noiseless state recovery, as shown in Fig.~\ref{fig:QNSR}, a system with $(2n+1)$ qubits can be utilized to recover a $n$-qubit noiseless state. In comparison, typical error correction codes such as the surface code or concatenated code \cite{fowler2012surface,aharonov1997FaultTolerant} may require a much larger number of ancilla qubits to protect a single logical qubit. Whereas, quantum error mitigation only requires a small number of ancilla qubits, but is thus limited to recovering the noiseless expectation values. It is worth noting that the ability to recover quantum states, rather than solely retrieving expectation values, offers several advantages \cite{cai2022quantum}. For instance, in applications such as quantum storage, the primary goal is to obtain noise-free quantum states. Additionally, in practical scenarios such as the Shor algorithm \cite{shor1994factoring} and quantum key distribution \cite{bennett1984quantum,ekert1991quantum}, noiseless states are required to obtain single-shot measurement results instead of just expectation values.

Both quantum error mitigation and quantum noiseless state recovery suffer from higher sample complexities compared to quantum error correction. The presence of single-qubit independent noise poses a significant challenge for quantum devices by greatly limiting their ability to demonstrate quantum advantages and solve practical problems \cite{stilck2021limitations}. Studies have revealed that for independent single-qubit noises, the sample complexity of quantum error mitigation has an exponential lower bound \cite{Takagi2022fundamental,quek2022exponentially}. Unfortunately, quantum noiseless state recovery faces the same difficulty in this regard. Based on Theorem~\ref{theorem:LowerBound}, we prove in Appendix~\ref{app:hardness_of_indnependnet_noise} that:

\begin{corollary}\label{corollary:hardness_of_indnependnet_noise}
Let $\mathcal{E}^{\otimes n}$ be an $n$-qubit noise channel, which is the $n$-fold tensor product of single-qubit invertible and non-unitary noise channel $\mathcal{E}$. Then, the sample complexity of realizing controlled-$e^{-i(\mathcal{E}^{-1})^{\otimes n}(\rho)t}$ evolution to an accuracy of $\epsilon$ in diamond distance has an exponential lower bound, $K \ge \epsilon^{-1}2^{\Omega(n)}t^2$, even with the joint operations depicted in Fig.~\ref{fig:LowerBoundModel}(a).
\end{corollary}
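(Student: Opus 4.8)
The plan is to reduce the statement to Theorem~\ref{theorem:LowerBound} applied to the Hermitian-preserving map $\mathcal{N}=(\mathcal{E}^{-1})^{\otimes n}=(\mathcal{E}^{\otimes n})^{-1}$, so that the entire problem collapses to producing a good lower bound on the quantity $R_*=\max_{A\in\mathscr{F}}R[\mathcal{N}(A)]$. Since realizing the controlled evolution $\mathrm{C}\text{-}e^{-i\mathcal{N}(\rho)t}$ is at least as costly as realizing $e^{-i\mathcal{N}(\rho)t}$ itself (fixing the control register to $\ket{1}$ recovers the uncontrolled channel using the same number of copies), it suffices to lower bound the uncontrolled sample complexity $f_{\mathcal{N}}(\epsilon,t)$. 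Theorem~\ref{theorem:LowerBound} then gives $f_{\mathcal{N}}(\epsilon,t)\ge\Omega(\epsilon^{-1}R_*^2t^2)$, and because $R_*$ will turn out to be exponentially large the side condition $t\ge\frac{15\pi\epsilon}{4R_*}$ is automatically met for any fixed $t$. Thus the whole corollary follows once I exhibit a feasible $A\in\mathscr{F}$ with $R[\mathcal{N}(A)]\ge 2^{\Omega(n)}$.

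The heart of the argument is a single-qubit lemma: because $\mathcal{E}$ is invertible and \emph{non-unitary}, it is a strict contraction of the Bloch ball, so its image is a proper sub-ellipsoid and there is a pure state $\psi_0$ lying outside this image. Its preimage $\xi_0:=\mathcal{E}^{-1}(\psi_0)$ is then a trace-one Hermitian pseudo-state whose Bloch vector has length $L_0>1$, hence $\lambda_{\max}(\xi_0)=a:=(1+L_0)/2>1$. Concretely, writing the Bloch action of $\mathcal{E}$ as $\vec r\mapsto M\vec r+\vec t$ with $\sigma_{\max}(M)\le 1$, I would take $\psi_0$ with Bloch vector $-\hat t$ in the non-unital case, giving $L_0=(1+|\vec t|)\,|M^{-1}\hat t|>1$, and take $\psi_0$ along a strictly contracted axis of $M$ in the unital case. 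I then pair $\psi_0$ with the antipodal pure state $\rho_1$ (Bloch vector $+\hat t$, respectively the opposite contracted axis), chosen so that $\xi_1:=\mathcal{E}^{-1}(\rho_1)$ commutes with $\xi_0$; a short Bloch computation shows that $\xi_0$ and $\xi_1$ have parallel Bloch vectors precisely under this alignment, and that the eigenvalue $c_1=\bra{g_+}\xi_1\ket{g_+}$ of $\xi_1$ on the top eigenvector $\ket{g_+}$ of $\xi_0$ satisfies $|c_1|<a$.

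With these ingredients I would set $A=\tfrac{1}{c}\big(\psi_0^{\otimes n}-\rho_1^{\otimes n}\big)$, where $c=\norm{\psi_0^{\otimes n}-\rho_1^{\otimes n}}_1\in(0,2]$ is the normalizing constant. Since $\psi_0\perp\rho_1$ are orthogonal pure states, $A$ is Hermitian, traceless, and obeys $\norm{A}_1=1$ with $A^+=\tfrac{1}{c}\psi_0^{\otimes n}$ and $A^-=\tfrac{1}{c}\rho_1^{\otimes n}$; because $[\xi_0,\xi_1]=0$ their tensor powers commute, so $[\mathcal{N}(A^+),\mathcal{N}(A^-)]=\tfrac{1}{c^2}[\xi_0^{\otimes n},\xi_1^{\otimes n}]=0$ and hence $A\in\mathscr{F}$. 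Diagonalizing $\xi_0,\xi_1$ in their common eigenbasis, the component of $\mathcal{N}(A)=\tfrac1c(\xi_0^{\otimes n}-\xi_1^{\otimes n})$ on $\ket{g_+}^{\otimes n}$ equals $\tfrac1c(a^n-c_1^n)$ with $|c_1|<a$, and since $\mathcal{N}(A)$ is traceless this forces $R[\mathcal{N}(A)]\ge\lambda_{\max}(\mathcal{N}(A))\ge\tfrac{1}{2}a^n\big(1-(|c_1|/a)^n\big)=2^{\Omega(n)}$. Therefore $R_*=2^{\Omega(n)}$, and $f_{\mathcal{N}}(\epsilon,t)\ge\Omega(\epsilon^{-1}R_*^2t^2)=\epsilon^{-1}2^{\Omega(n)}t^2$.

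I expect the main obstacle to be the single-qubit lemma, specifically reconciling the feasibility constraint $[\mathcal{N}(A^+),\mathcal{N}(A^-)]=0$ with the amplification requirement in the non-unital case: the commuting condition pins the usable axis to the direction of the translation $\vec t$, and one must verify that this \emph{same} axis still produces an eigenvalue $a>1$. The Bloch-geometry estimate $L_0=(1+|\vec t|)\,|M^{-1}\hat t|\ge(1+|\vec t|)/\sigma_{\max}(M)>1$ is what makes the two requirements compatible, using only $\sigma_{\max}(M)\le 1$ together with non-unitarity; the remaining steps (normalization, tracelessness, and the tensor-power eigenvalue bound) are then routine.
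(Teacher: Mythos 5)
Your proposal is correct and follows essentially the same route as the paper's proof: both apply Theorem~\ref{theorem:LowerBound} to $\mathcal{N}=(\mathcal{E}^{-1})^{\otimes n}$ with a feasible element of the form $\tfrac12(\psi_0^{\otimes n}-\rho_1^{\otimes n})$ built from two antipodal (hence orthogonal and commuting-preimage) pure states, one of whose preimages under $\mathcal{E}$ has an eigenvalue exceeding $1$, yielding $R_*\ge 2^{\Omega(n)}$; your explicit $\vec r\mapsto M\vec r+\vec t$ computation with $\sigma_{\max}(M)\le 1$ is just an algebraic rendering of the paper's ellipsoid-geometry argument, including the same unital/non-unital case split. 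The only cosmetic difference is the controlled evolution: you reduce it to the uncontrolled case by fixing the control to $\ket{1}$, whereas the paper applies the lower bound directly to $\ketbra{1}{1}\otimes\mathcal{N}(\rho)$ using that this preserves the spectrum — both are valid.
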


It is important to emphasize that the higher sample complexity of quantum noiseless state recovery compared to quantum error correction does not solely stem from the fact that quantum noiseless state recovery uses fewer ancilla qubits. While both methods aim to obtain the noiseless state, quantum noiseless state recovery faces a more challenging task: recovering the noiseless quantum state from noisy states instead of protecting noiseless states from noises. As local noises will exponentially reduce the distinguishability of quantum states, recovery from noisy states will encounter exponential sample complexities, as suggested by the spirit of the Holevo-Helstrom theorem.

In practice, quantum noiseless state recovery excels in handling certain types of noises that pose challenges for quantum error correction. For instance, leakage error is a significant challenge for some quantum platforms \cite{Motzoi2009leakage}, which occurs when the target system possesses a hidden space, and the quantum state may leak out of the computational space. Detecting and operating on the hidden space is typically challenging, making it difficult for quantum error correction to handle leakage errors effectively. In contrast, quantum noiseless state recovery can naturally handle such non-trace-preserving errors, as its inverse map is also Hermitian-preserving.

\section{Other Applications}\label{sec:other_app}
Although not as evident as in entanglement detection and noiseless state recovery, Hermitian-preserving maps actually emerge in a wide range of quantum information tasks. In this section, we will explore several other examples, such as expectation value measurements, quantum imaginary time evolution, and linear combinations of unitaries, where HME can be effectively applied.

\subsection{Expectation Value Measurements}

Except for protocols such as randomized measurements \cite{Elben2023toolbox, Brydges2019Probing, huang2020predicting} and shadow tomography \cite{Aaronson2018shadow}, which either require exponential repetition times or joint operations, we typically employ two methods to efficiently estimate the expectation value of a single observable, $\Tr(O\rho)$. The first method is based on the fundamental principles of quantum mechanics. Suppose $O$ can be diagonalized as $O=V_O\Lambda_OV_O^\dagger$, where $V_O$ and $\Lambda_O$ are unitary and diagonal matrices, respectively. In this case, we can evolve the state $\rho$ using $V_O^\dagger$, measure it in the computational basis multiple times, and use these measurement results to estimate $\Tr(O\rho)=\Tr(\Lambda_OV_O^\dagger\rho V_O)$. The circuit for this method is shown in Fig.~\ref{fig:obs_circuit}(a). The second method is based on embedding the observable into a unitary $U_O$. A common embedding technique is based on the textbook result $O=\frac{\norm{O}_{\infty}}{2}(U_O+U_O^\dagger)$. With this embedding, we can perform the Hadamard test circuit shown in Fig.~\ref{fig:obs_circuit}(b) using a controlled-$U_O$ operation to acquire the value of $\Tr[(U_O+U_O^\dagger)\rho]$. This method is particularly efficient when $O$ is not only Hermitian but also unitary in certain scenarios.

In certain practical scenarios, $O$ may correspond to a sparse or intrinsic Hamiltonian of physical systems. In such cases, implementing $V_O$ and $U_O$ may still require deep quantum circuits \cite{Ueda2020eth} as they do not exploit the structure information of $O$. While, the implementation of $e^{-iOt}$ and controlled-$e^{-iOt}$ evolutions can be quite efficient, with methods such as Hamiltonian simulation protocols \cite{suzuki1990fractal, suzuki1991fractal, childs2012LCU} or Hilbert space enlargement techniques \cite{Zhou2011control, Zhou2013eigenvalues}.

\begin{figure}
\centering
\includegraphics[width=0.48\textwidth]{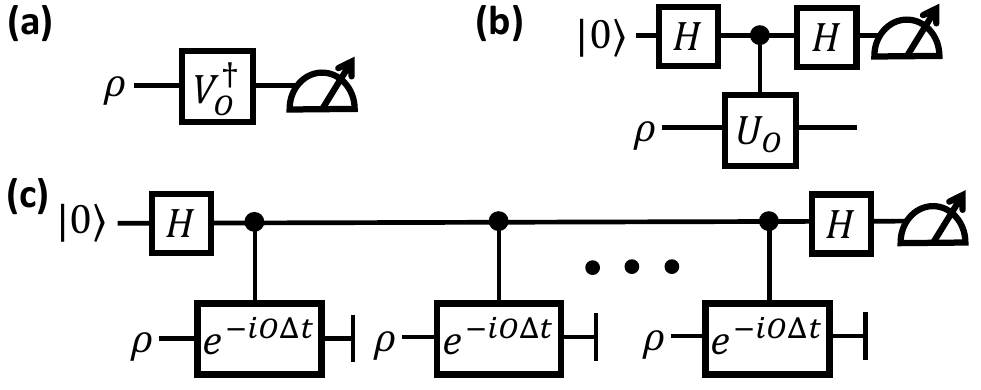}
\caption{The circuits of three direct expectation value measurement methods, where all the measurements are conducted in the $Z$ basis. (a) $V_O$ is the unitary that diagonalizes $O$. (b) $U_O$ is obtained by decomposing $O$ as $O=\frac{\norm{O}_{\infty}}{2}(U_O+U_O^\dagger)$, and $H$ represents the Hadamard gate. (c) The circuit utilizes HME to perform the observable measurement, utilizing controlled-$e^{-iO\Delta t}$ operations.}
\label{fig:obs_circuit}
\end{figure}

An intriguing observation is that $\Tr(O\rho)$ is also a Hermitian-preserving map that acts on $\rho$ since $\Tr(O\rho)$ is a real number. This suggests the possibility of conducting observable measurements through HME. To facilitate this application, we can define a new map $\mathcal{N}_O(\rho) = \Tr(O\rho)\ketbra{1}{1}$, where the output space is a single-qubit Hilbert space and $\ket{1}$ is one of the basis states in this space. According to Theorem~\ref{theorem:maintheorem}, the Hamiltonian for implementing $\mathcal{N}_O$ is $H_O=O\otimes\ketbra{1}{1}$, and the evolution of $e^{-iH_Ot}$ is actually the controlled-$e^{-iOt}$ operation. With this Hamiltonian, HME enables the encoding of expectation values into relative phases of reference states. By setting the evolved state as $\ket{+}=\frac{1}{\sqrt{2}}(\ket{0}+\ket{1})$, as shown in Fig.~\ref{fig:obs_circuit}(c), the evolved state will approximately become
\begin{equation}
e^{-i\Tr(O\rho)t\ketbra{1}{1}}\ket{+}=\frac{1}{\sqrt{2}}\left(\ket{0}+e^{-i\Tr(O\rho)t}\ket{1}\right).
\end{equation}
Subsequently, one can measure the evolved state in the Pauli basis to extract the value of $\tr(O\rho)$. By defining a new map as $\mathcal{N}(\rho)=\sum_i\Tr(O_i\rho)\ketbra{i}{i}$ and setting the evolved state to be a higher-dimensional coherent state, HME can encode multiple expectation values simultaneously. This HME-based phase encoding operation may also exhibit some advantages for other applications such as gradient estimation \cite{Andras2019gradient, Huggins2022Optimal}.

\subsection{Completely Positive Maps without Trace Preservation}

Until now, most of the non-physical maps we have discussed, including positive maps, inverse maps of noise channels, and phase encoding maps, violate the complete positivity requirements for quantum channels. However, there also exist some completely positive but not trace-preserving maps that play important roles in quantum information processing tasks. One important example is the map $\mathcal{N}(\rho)=P\rho P^\dagger$, where $P$ can be an arbitrary rectangular matrix. In the task of linear combination of unitaries \cite{childs2012LCU}, the matrix $P$ is chosen as the sum of several unitaries, allowing for producing an arbitrary pure state or realizing a desired Hamiltonian evolution. Additionally, in quantum imaginary time evolution \cite{lu2021finite, zhang2021thermal, motta2020determining}, $P$ is set as $e^{-\beta H}$, where $\beta$ represents the inverse temperature and $H$ is a Hamiltonian. By increasing $\beta$, one can prepare a pure state that approximates the ground state of $H$ to arbitrary precision. 

It is worth noting that linear combination of unitaries and quantum imaginary time evolution share a common goal of preparing pure states, which is similar to quantum noiseless state recovery. To achieve this goal, one can adopt a similar circuit as shown in Fig.~\ref{fig:QNSR}. In particular, by carefully selecting a Hamiltonian, HME facilitates the realization of the controlled-$e^{-iP\ketbra{0}{0}P^\dagger t}$ evolution, where $\ketbra{0}{0}$ represents some initial pure state. If one has sufficient prior knowledge of $P$, the pure state corresponding to $P\ketbra{0}{0}P^\dagger$ can also be generated using the quantum phase estimation circuit with a small number of ancilla qubits.

\section{Discussion and outlook}\label{sec:conclusion}

The Hermitian-preserving map exponentiation algorithm relies on quantum memory, joint operations, and the ability to reset states. Based on these resources, the advantages of HME manifest in three aspects. First, quantum memory is known to provide exponential speedups in certain tasks such as state discrimination and property testing \cite{chen2022memory}. Consequently, HME exhibits exponential speedups compared to single-copy protocols, as demonstrated in tasks of entanglement detection and quantification. Second, by employing ancilla qubits and joint operations, HME enables the transformation of incoherent operations into coherent ones, thereby realizing tasks that are infeasible for incoherent operations. An important example is quantum error mitigation, where only single-copy operations are allowed due to hardware limitations, one can only recover the noiseless expectation values. While using HME, it becomes possible to recover the noiseless state from noisy states, thereby enabling tasks such as quantum storage. Additionally, HME offers a means to utilize quantum resources to replace classical resources, especially in state benchmarking protocols. Leveraging state-resetting operations, HME extracts information from multiple copies of states, reducing the classical resources for estimating nonlinear quantities. A direct example is the negativity estimation where HME exponentially reduces the requirements for classical computation and memory compared to tomography-based methods. Similarly, in energy estimation, HME utilizes the ability to perform Hamiltonian evolution as a substitute for the computational cost of the spectral decomposition of the Hamiltonian operator.

As a building block for quantum algorithms, HME holds the potential for improvement, application, and extension to various areas. One possible avenue is to combine HME with algorithms such as amplitude amplification, which could enhance the performance of tasks discussed in this work, including noiseless state recovery. Furthermore, while density matrix exponentiation was initially proposed for processing classical data, such as covariance matrices \cite{Lloyd2014qpca}, it has been shown to lack exponential advantages compared to classical strategies without state preparation assumptions \cite{tang2021only}. However, HME might restore the possibility of achieving quantum advantages by allowing more general operations on classical data. Moreover, by replacing the Hamiltonian evolution in the HME circuit with non-Hermitian evolution, HME can even circumvent the Hermitian preservation requirement and showcases potential advantages for a wider range of tasks involving Hermitian-violating maps. Finally, instead of adjusting the HME algorithm itself, another direction worth exploring is to search for alternative methods to implement more general non-physical maps, such as nonlinear maps.

Some open problems remain. As demonstrated in Appendix~\ref{app:freedom}, there is freedom in choosing the Hamiltonian for HME. Could this freedom lead to an improved upper bound on the sample complexity for certain non-physical maps compared to what is predicted in Theorem~\ref{theorem:cost}? As discussed in Sec.~\ref{subsec:optimality}, is it possible to obtain a better lower bound on the sample complexity for implementing non-physical maps by taking into account hardware limitations, such as the size of the quantum memory? In Sec.~\ref{subsec:ent_det}, we demonstrate the advantages of HME through a specific entanglement detection task. An open problem that arises is whether we can combine HME with other positive maps to solve more general entanglement detection tasks with provable advantages. In Sec.~\ref{subsec:neg}, we compare our HME-based negativity estimation protocol with the tomography-based one and observe an exponential speedup. This leads to the natural question of whether the HME-based protocol possesses quantum advantages compared to all single-copy protocols. Furthermore, can we design more efficient single-copy negativity measurement protocols that outperform the tomography-based approach?

In Sec.~\ref{sec:QNSR}, we introduced the quantum noiseless state recovery protocol, which utilizes multiple noisy states to generate noiseless ones. Essentially, the HME algorithm provides the ability for undoing a quantum channel with arbitrarily low error. In contrast, conventional methods such as the Petz recovery map \cite{Petz1986,gilyen2022petz} can only realize a quantum channel that is close to the inverse map. It is worth exploring other potential applications of this undoing operation in addition to managing quantum errors. Besides, with the ability to recover noiseless states, the protocol holds potential for applications in quantum communication, particularly in the recovery of entangled states shared between distant clients. Therefore, an important future direction is to adapt the quantum noiseless state recovery protocol into a distributed version, allowing experimenters from different nodes of a quantum network to implement it effectively. In addition, the quantum noiseless state recovery protocol still requires prior knowledge of the noise channel. Can we eliminate this requirement by sacrificing other resources?

\begin{acknowledgments}
We appreciate insightful discussions with Zhenyu Cai, Jinzhao Sun, Zhaohui Wei, and Huixuan He. Fuchuan Wei and Zhengwei Liu are supported by BMSTC and ACZSP (Grant No. Z221100002722017). Zhengwei Liu is supported by NKPs (Grant No. 2020YFA0713000), Beijing Natural Science Foundation Key Program (Grant No. Z220002). Zhenhuan Liu, Guoding Liu, and Xiongfeng Ma are supported by the National Natural Science Foundation of China (Grant No. 12174216) and the Innovation Program for Quantum Science and Technology (Grant No.2021ZD0300804).  Zizhao Han and Dong-Ling Deng are supported by the National Natural Science Foundation of China (Grants No. 12075128 and T2225008), Tsinghua University Dushi Program, and Shanghai Qi Zhi Institute.
\end{acknowledgments}

\appendix

\section{Auxiliary Lemmas}\label{app:Auxiliary_Lemmas}

The essential property of the diamond distance is its subadditivity:

\begin{lemma}[Proposition 3.48 in \cite{watrous2018theory}]\label{lemma:subadditivity}
Suppose $\mathcal{Q}_0,\mathcal{Q}_0^{\prime}\in T(\mathcal{X},\mathcal{Y})$ and $\mathcal{Q}_1,\mathcal{Q}_1^{\prime}\in T(\mathcal{Y},\mathcal{R})$ are CPTP maps, then the following inequality holds:
\begin{equation}\label{eq:subadd}
\norm{\mathcal{Q}_1\circ\mathcal{Q}_0-\mathcal{Q}_1^{\prime}\circ\mathcal{Q}_0^{\prime}}_{\diamond}\le\norm{\mathcal{Q}_1-\mathcal{Q}_1^{\prime}}_{\diamond}+\norm{\mathcal{Q}_0-\mathcal{Q}_0^{\prime}}_{\diamond}.
\end{equation}
\end{lemma}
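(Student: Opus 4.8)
The plan is to prove the inequality via the standard ``add-and-subtract a mixed term'' technique, followed by two contraction properties of CPTP maps. First I would rewrite the difference of compositions by inserting the intermediate map $\mathcal{Q}_1^\prime\circ\mathcal{Q}_0$:
\begin{equation}
\mathcal{Q}_1\circ\mathcal{Q}_0-\mathcal{Q}_1^{\prime}\circ\mathcal{Q}_0^{\prime}
=(\mathcal{Q}_1-\mathcal{Q}_1^{\prime})\circ\mathcal{Q}_0+\mathcal{Q}_1^{\prime}\circ(\mathcal{Q}_0-\mathcal{Q}_0^{\prime}),
\end{equation}
and then apply the triangle inequality for the diamond norm, reducing the problem to bounding the two summands separately by $\norm{\mathcal{Q}_1-\mathcal{Q}_1^\prime}_\diamond$ and $\norm{\mathcal{Q}_0-\mathcal{Q}_0^\prime}_\diamond$, respectively.

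For the first summand $(\mathcal{Q}_1-\mathcal{Q}_1^{\prime})\circ\mathcal{Q}_0$, I would unfold the definition of the diamond norm: for any reference $\mathcal{R}$ and any state $\rho\in D(\mathcal{X}\otimes\mathcal{R})$, the operator $(\mathcal{Q}_0\otimes\mathcal{I}_\mathcal{R})(\rho)$ is again a density operator because $\mathcal{Q}_0$ is CPTP. Hence computing the diamond norm of this summand amounts to a supremum of $\norm{((\mathcal{Q}_1-\mathcal{Q}_1^\prime)\otimes\mathcal{I}_\mathcal{R})(\sigma)}_1$ over states $\sigma$ lying in the image of $\mathcal{Q}_0\otimes\mathcal{I}_\mathcal{R}$, which is only a subset of all states. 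Since $\norm{\mathcal{Q}_1-\mathcal{Q}_1^\prime}_\diamond$ is the supremum of the same quantity over \emph{all} states, this immediately gives $\norm{(\mathcal{Q}_1-\mathcal{Q}_1^\prime)\circ\mathcal{Q}_0}_\diamond\le\norm{\mathcal{Q}_1-\mathcal{Q}_1^\prime}_\diamond$.

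For the second summand $\mathcal{Q}_1^\prime\circ(\mathcal{Q}_0-\mathcal{Q}_0^\prime)$, I would again fix $\mathcal{R}$ and a state $\rho$ and write $(\mathcal{Q}_1^\prime\circ(\mathcal{Q}_0-\mathcal{Q}_0^\prime)\otimes\mathcal{I}_\mathcal{R})(\rho)=(\mathcal{Q}_1^\prime\otimes\mathcal{I}_\mathcal{R})\big(((\mathcal{Q}_0-\mathcal{Q}_0^\prime)\otimes\mathcal{I}_\mathcal{R})(\rho)\big)$. Here I invoke that CPTP maps are contractive with respect to the trace norm (monotonicity of trace distance under quantum channels), applied to the channel $\mathcal{Q}_1^\prime\otimes\mathcal{I}_\mathcal{R}$. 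This yields $\norm{(\mathcal{Q}_1^\prime\circ(\mathcal{Q}_0-\mathcal{Q}_0^\prime)\otimes\mathcal{I}_\mathcal{R})(\rho)}_1\le\norm{((\mathcal{Q}_0-\mathcal{Q}_0^\prime)\otimes\mathcal{I}_\mathcal{R})(\rho)}_1$, and taking the supremum over $\rho$ and $\mathcal{R}$ bounds the summand by $\norm{\mathcal{Q}_0-\mathcal{Q}_0^\prime}_\diamond$. Combining the two bounds with the triangle inequality finishes the argument.

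There is no genuine obstacle, as this is a standard stability estimate, but the two properties doing the real work deserve care. The right-composition bound rests on $\mathcal{Q}_0$ sending states to states, so that the supremum set only shrinks; the left-composition bound rests on the trace-norm contractivity of CPTP maps applied to the (generally non-positive, trace-zero) Hermitian operator $((\mathcal{Q}_0-\mathcal{Q}_0^\prime)\otimes\mathcal{I}_\mathcal{R})(\rho)$. The one point worth stating explicitly is that tensoring with the identity channel $\mathcal{I}_\mathcal{R}$ preserves both complete positivity and trace preservation, so $\mathcal{Q}_1^\prime\otimes\mathcal{I}_\mathcal{R}$ remains CPTP and the contraction genuinely applies on the enlarged space.
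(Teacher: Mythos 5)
Your proof is correct; the paper does not prove this lemma itself but simply cites Proposition 3.48 of Watrous, and your argument (insert the mixed term $\mathcal{Q}_1^{\prime}\circ\mathcal{Q}_0$, apply the triangle inequality, bound right-composition by the image of states under $\mathcal{Q}_0\otimes\mathcal{I}_{\mathcal{R}}$ being a subset of all states, and bound left-composition by trace-norm contractivity of the CPTP map $\mathcal{Q}_1^{\prime}\otimes\mathcal{I}_{\mathcal{R}}$ on Hermitian operators) is exactly the standard proof given there. You also correctly flag the two subtleties that matter, namely that the contractivity must be applied to a non-positive trace-zero Hermitian operator and that tensoring with $\mathcal{I}_{\mathcal{R}}$ preserves the CPTP property.
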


The following lemma provides an upper bound on the diamond distance between two channels corresponding to two Hamiltonian evolutions:

\begin{lemma}\label{lemma:DiamondDistance_HamiltonianEvolution}
Let $H_1$ and $H_2$ be two Hermitian operators, and $t>0$. Then, it holds that:
\begin{equation}\label{eq:DiamondDistance_HamiltonianEvolution}
\begin{aligned}
&\norm{\left[e^{-iH_1t}\right]-\left[e^{-iH_2t}\right]}_{\diamond}\\
\le&2t\norm{H_1-H_2}_{\infty}\exp\Big(t\max\left\{\norm{H_1}_{\infty},\norm{H_2}_{\infty}\right\}\Big).
\end{aligned}
\end{equation}
\end{lemma}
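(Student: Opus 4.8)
The plan is to prove the bound in two stages. First I would reduce the diamond distance between the two unitary channels to the ordinary operator-norm distance between the underlying unitaries, thereby eliminating the awkward supremum over reference systems that the diamond norm carries. Second I would control the operator-norm distance between the two matrix exponentials by a non-commutative calculus estimate. Writing $U_j = e^{-iH_jt}$, the target then follows by chaining the two bounds.

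For the first stage I would use the algebraic identity
\[
U_1\rho U_1^\dagger - U_2\rho U_2^\dagger = (U_1-U_2)\rho U_1^\dagger + U_2\rho (U_1-U_2)^\dagger .
\]
Taking the trace norm, applying $\norm{AB}_1\le\norm{A}_\infty\norm{B}_1$, and using the unitarity of $U_1,U_2$ together with $\norm{\rho}_1=1$ gives $\norm{U_1\rho U_1^\dagger - U_2\rho U_2^\dagger}_1\le 2\norm{U_1-U_2}_\infty$ for every input state $\rho$. The key observation is that this estimate is untouched by tensoring with an arbitrary reference: since $\norm{(U_1\otimes\mathbb{I})-(U_2\otimes\mathbb{I})}_\infty=\norm{U_1-U_2}_\infty$, the same argument applied to $U_j\otimes\mathbb{I}$ bounds $\norm{([U_1]\otimes\mathcal{I}_\mathcal{R})(\sigma)-([U_2]\otimes\mathcal{I}_\mathcal{R})(\sigma)}_1$ uniformly. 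Taking the supremum in the definition of the diamond norm yields $\norm{[U_1]-[U_2]}_\diamond\le 2\norm{U_1-U_2}_\infty$.

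For the second stage I would bound $\norm{e^{-iH_1t}-e^{-iH_2t}}_\infty$ via Duhamel's formula. Interpolating through $G(s):=e^{-iH_1s}e^{-iH_2(t-s)}$ and integrating $G'(s)=-i\,e^{-iH_1s}(H_1-H_2)e^{-iH_2(t-s)}$ over $[0,t]$ gives
\[
e^{-iH_1t}-e^{-iH_2t}=-i\int_0^t e^{-iH_1s}(H_1-H_2)e^{-iH_2(t-s)}\,ds .
\]
Bounding the integrand by $\norm{e^{-iH_1s}}_\infty\,\norm{H_1-H_2}_\infty\,\norm{e^{-iH_2(t-s)}}_\infty\le e^{tM}\norm{H_1-H_2}_\infty$ with $M=\max\{\norm{H_1}_\infty,\norm{H_2}_\infty\}$ produces $\norm{e^{-iH_1t}-e^{-iH_2t}}_\infty\le t\norm{H_1-H_2}_\infty e^{tM}$, and combining with the first stage gives the claimed inequality. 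An equivalent route expands both exponentials as power series and uses the telescoping identity $H_1^n-H_2^n=\sum_{k=0}^{n-1}H_1^k(H_1-H_2)H_2^{n-1-k}$, which reproduces the same factor $e^{tM}$ after resumming.

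The only conceptual obstacle lies in the first stage: the diamond norm involves a supremum over all reference systems and all input states, which is generally intractable. The resolution exploits that unitary channels are special, namely $\norm{U\otimes\mathbb{I}}_\infty=\norm{U}_\infty$, so the optimization over the reference collapses and the estimate reduces to a single-system norm bound. The second stage is then a routine calculation; the point worth flagging is that $H_1$ and $H_2$ need not commute, so one cannot subtract exponents, and Duhamel's formula (or the telescoping series) is exactly the device that absorbs this non-commutativity. I would also remark that, because the $H_j$ are Hermitian, the partial exponentials have unit operator norm, so bounding them by $e^{tM}$ is wasteful and a tighter bound $2t\norm{H_1-H_2}_\infty$ is actually available; the stated exponential factor only loosens the constant and is harmless, presumably retained for robustness under possible non-Hermitian generalizations.
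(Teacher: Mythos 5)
Your proposal is correct, and it in fact covers the paper's argument as a special case. The paper handles your first stage by citing the bound $\norm{[U_1]-[U_2]}_\diamond\le 2\norm{U_1-U_2}_\infty$ from an external reference (Lemma 5 of Caro et al.), whereas you prove it directly from the identity $U_1\rho U_1^\dagger-U_2\rho U_2^\dagger=(U_1-U_2)\rho U_1^\dagger+U_2\rho(U_1-U_2)^\dagger$ together with stability of the operator norm under tensoring with the identity; this is a clean, self-contained substitute for the citation. For the second stage the paper takes exactly the route you mention as an ``equivalent'' alternative: it expands both exponentials as power series and uses the telescoping identity $H_1^m-H_2^m=\sum_{l=0}^{m-1}H_1^{m-1-l}(H_1-H_2)H_2^l$, which, after bounding each factor by $\max\{\norm{H_1}_\infty,\norm{H_2}_\infty\}$ and resumming, produces precisely the stated factor $\exp\bigl(t\max\{\norm{H_1}_\infty,\norm{H_2}_\infty\}\bigr)$. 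Your primary route via Duhamel's formula is genuinely different and, as you correctly observe, sharper: since $H_1$ and $H_2$ are Hermitian the interpolating factors $e^{-iH_1s}$ and $e^{-iH_2(t-s)}$ are unitary, so the integrand is bounded by $\norm{H_1-H_2}_\infty$ alone and one obtains $\norm{[e^{-iH_1t}]-[e^{-iH_2t}]}_\diamond\le 2t\norm{H_1-H_2}_\infty$ with no exponential factor at all. The paper's series bound does not exploit unitarity of the partial sums and therefore pays the exponential; your remark that the weaker stated form is what survives a non-Hermitian generalization is a fair account of the trade-off. No gaps.
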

\begin{proof}
By Lemma 5 in the appendix of \cite{Caro2022generalization}, we have
\begin{equation}
\norm{\left[e^{-iH_1t}\right]-\left[e^{-iH_2t}\right]}_{\diamond}\le2\norm{e^{-iH_1t}-e^{-iH_2t}}_{\infty}.
\end{equation}
Moreover,
\begin{equation}
\begin{aligned}
&\norm{e^{-iH_1t}-e^{-iH_2t}}_{\infty}\\
=&\norm{\sum_{m=0}^{\infty}\frac{1}{m!}(-iH_1t)^m-\sum_{m=0}^{\infty}\frac{1}{m!}(-iH_2t)^m}_{\infty}\\
\le&\sum_{m=1}^{\infty}\frac{1}{m!}t^m\norm{H_1^m-H_2^m}_{\infty}\\
=&\sum_{m=1}^{\infty}\frac{1}{m!}t^m\norm{\sum_{l=0}^{m-1}H_1^{m-1-l}(H_1-H_2)H_2^l}_{\infty}\\
\le&\sum_{m=1}^{\infty}\frac{1}{m!}t^m\sum_{l=0}^{m-1}\norm{H_1}_{\infty}^{m-1-l}\norm{H_1-H_2}_{\infty}\norm{H_2}_{\infty}^l\\
\le&t\norm{H_1-H_2}_{\infty}\sum_{m=1}^{\infty}\frac{t^{m-1}}{(m-1)!}\max\left\{\norm{H_1}_{\infty},\norm{H_2}_{\infty}\right\}^{m-1}\\
=&t\norm{H_1-H_2}_{\infty}\exp\Big(t\max\left\{\norm{H_1}_{\infty},\norm{H_2}_{\infty}\right\}\Big).
\end{aligned}
\end{equation}
\end{proof}

When measuring an observable over two quantum states, the difference in the expectation values can be upper bounded by applying the \textit{Hölder inequality} \cite{watrous2018theory} to $\abs{\Tr\big(O(\sigma_1-\sigma_2)\big)}$:

\begin{lemma}\label{lemma:expectation_tracenorm}
Let $\sigma_1,\sigma_2\in D(\mathcal{H})$ be two quantum states, then the difference in the expectation values when measuring $O$ satisfies
\begin{equation}
\abs{\Tr(O\sigma_1)-\Tr(O\sigma_2)}\le\norm{O}_{\infty}\norm{\sigma_1-\sigma_2}_{1}.
\end{equation}
\end{lemma}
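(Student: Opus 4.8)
The plan is to reduce the statement to a single application of the Hölder inequality for Schatten norms. First I would use the linearity of the trace to combine the two expectation values into one, writing the difference as a single trace against the operator $\sigma_1 - \sigma_2$,
\[
\Tr(O\sigma_1) - \Tr(O\sigma_2) = \Tr\big(O(\sigma_1 - \sigma_2)\big).
\]
This isolates the object $\sigma_1 - \sigma_2$, whose trace norm is precisely the quantity appearing on the right-hand side of the claimed bound.

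Next I would invoke the Hölder inequality for the trace pairing, which for conjugate Schatten exponents $p,q$ with $1/p + 1/q = 1$ asserts $\abs{\Tr(AB)} \le \norm{A}_p\,\norm{B}_q$. Choosing the endpoint conjugate pair $(p,q) = (\infty,1)$ and setting $A = O$, $B = \sigma_1 - \sigma_2$ immediately yields
\[
\big|\Tr\big(O(\sigma_1-\sigma_2)\big)\big| \le \norm{O}_{\infty}\,\norm{\sigma_1-\sigma_2}_{1},
\]
which is exactly the desired inequality once combined with the identity from the first step.

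There is essentially no obstacle here, since the result is a direct specialization of a standard norm duality; the only point requiring care is to match the roles of the operator norm $\norm{\cdot}_{\infty}$ (largest singular value) and the trace norm $\norm{\cdot}_{1}$ (sum of singular values) to the correct endpoints $p=\infty$ and $q=1$ of the Hölder pair, exactly as these norms are defined in Sec.~\ref{sec:notation}. I would not need any special structure of density matrices beyond the fact that $\sigma_1 - \sigma_2 \in L(\mathcal{H})$, so the same bound holds verbatim for the difference of any two linear operators, not just quantum states.
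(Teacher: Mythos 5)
Your proposal is correct and matches the paper's own argument exactly: the paper likewise rewrites the difference as $\Tr\big(O(\sigma_1-\sigma_2)\big)$ by linearity and then applies the Hölder inequality with the conjugate pair $(p,q)=(\infty,1)$. Your closing remark that the bound holds for arbitrary operators, not just density matrices, is a valid (and harmless) observation beyond what the paper states.
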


\section{Freedom of Hamiltonian in HME}\label{app:freedom}

As shown in Theorem~\ref{theorem:maintheorem}, $\Lambda_{\mathcal{N}}^{T_1}$ is a Hamiltonian that can be used to implement the Hermitian-preserving map $\mathcal{N}\in T(\mathcal{H},\mathcal{K})$. In this section we point out that, $\Lambda_{\mathcal{N}}^{T_1}$ is not the unique choice. Specifically, we prove that all the Hamiltonians that can be used to implement $\mathcal{N}$ form the set
\begin{equation}
\mathscr{H}_{\mathcal{N}}=\left\{\Lambda_{\mathcal{N}}^{T_1}+M\otimes \mathbb{I}_{\mathcal{K}}:M\in\mathcal{L}(\mathcal{H}),M^{\dagger}=M\right\}.
\end{equation}

Note that the equation $\Tr_1\left(e^{-iH\Delta t}(\rho\otimes\sigma)e^{iH\Delta t}\right)=e^{-i\mathcal{N}(\rho)\Delta t}\sigma e^{i\mathcal{N}(\rho)\Delta t}+\mathcal{O}(\Delta t^2)$ is equivalent to
\begin{equation}\label{eq:first_order_alignment}
\begin{aligned}
\Tr_1\big([H,\rho\otimes\sigma]\big)=[\mathcal{N}(\rho),\sigma].
\end{aligned}
\end{equation}
Thus, any $H$ satisfying Eq.~\eqref{eq:first_order_alignment} for all $\rho\in D(\mathcal{H})$ and $\sigma\in D(\mathcal{K})$ is a valid choice for implementing $\mathcal{N}$.

It can be easily verified that any Hamiltonian in $\mathscr{H}_{\mathcal{N}}$ satisfies Eq.~\eqref{eq:first_order_alignment} for all $\rho$ and $\sigma$, as $\Tr_1\big([M\otimes\mathbb{I}_{\mathcal{K}},\rho\otimes\sigma]\big)=0$ always holds.

On the other hand, if $H$ satisfies Eq.~\eqref{eq:first_order_alignment} for all $\rho$ and $\sigma$, we have
\begin{equation}\label{eq:tr_1_commute_rho_sigma}
\Tr_1\left(\left[H-\Lambda_{\mathcal{N}}^{T_1},\rho\otimes\sigma\right]\right)=0,~\forall\rho\in D(\mathcal{H}),~\sigma\in D(\mathcal{K}).
\end{equation}
Since any matrix $A\in L({\mathcal{H}\otimes\mathcal{K}})$ can be expressed as a complex linear combination of matrices of the form $\rho\otimes\sigma$, Eq.~\eqref{eq:tr_1_commute_rho_sigma} implies that
\begin{equation}
\Tr_1\left(\left[H-\Lambda_{\mathcal{N}}^{T_1},A\right]\right)=0,~\forall A\in L({\mathcal{H}\otimes\mathcal{K}}).
\end{equation}
Based on Lemma~\ref{lemma:tr_1_commute_implies_some_structure} presented below, we conclude that $H-\Lambda_{\mathcal{N}}^{T_1}=M\otimes \mathbb{I}_{\mathcal{K}}$ for some matrix $M\in L({\mathcal{H}})$. As $H$ should be a valid Hamiltonian, $M$ must be a Hermitian matrix, which implies $H\in\mathscr{H}_{\mathcal{N}}$.

\begin{lemma}\label{lemma:tr_1_commute_implies_some_structure}
Suppose a matrix $B\in L({\mathcal{H}\otimes\mathcal{K}})$ satisfies
\begin{equation}
\Tr_1\big([B,A]\big)=0,~\forall A\in L({\mathcal{H}\otimes\mathcal{K}}),
\end{equation}
where $\Tr_1$ denotes the partial trace over the first system, $\mathcal{H}$. Then $B=M\otimes \mathbb{I}_{\mathcal{K}}$ for some matrix $M\in L({\mathcal{H}})$.
\end{lemma}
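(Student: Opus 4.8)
The plan is to reduce the operator identity to a purely algebraic statement about the blocks of $B$ relative to a fixed orthonormal basis $\{\ket{i}\}$ of $\mathcal{H}$. First I would write $B=\sum_{i,j}\ketbra{i}{j}\otimes B_{ij}$ with blocks $B_{ij}\in L(\mathcal{K})$, and likewise expand an arbitrary test operator as $A=\sum_{k,l}\ketbra{k}{l}\otimes A_{kl}$, where the blocks $A_{kl}\in L(\mathcal{K})$ are completely free and independent. The goal is to convert the hypothesis $\Tr_1\big([B,A]\big)=0$ into a constraint that forces each block $B_{ij}$ to be a scalar multiple of $\mathbb{I}_{\mathcal{K}}$.

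The computation I have in mind uses only the elementary identity $\Tr_1(\ketbra{i}{l}\otimes M)=\delta_{il}M$. Multiplying out the block expansions and applying this gives $\Tr_1(BA)=\sum_{i,j}B_{ij}A_{ji}$ and $\Tr_1(AB)=\sum_{i,j}A_{ij}B_{ji}$, so the hypothesis becomes $\sum_{i,j}\left(B_{ij}A_{ji}-A_{ij}B_{ji}\right)=0$, required to hold for every choice of the blocks $A_{kl}$. The key move is then to test this with operators $A$ having a single nonzero block: fixing indices $p,q$ and setting $A_{pq}=Y$ with all other blocks zero, the first sum contributes $B_{qp}Y$ (the term with $(j,i)=(p,q)$) and the second contributes $YB_{qp}$ (the term with $(i,j)=(p,q)$), so the condition collapses to $[B_{qp},Y]=0$ for every $Y\in L(\mathcal{K})$ and every pair $p,q$.

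To finish I would invoke the fact that the centralizer of the full matrix algebra $L(\mathcal{K})$ is exactly $\mathbb{C}\,\mathbb{I}_{\mathcal{K}}$: an operator commuting with all of $L(\mathcal{K})$ must be a scalar multiple of the identity. Hence $B_{qp}=c_{qp}\,\mathbb{I}_{\mathcal{K}}$ for scalars $c_{qp}$, and reassembling the blocks yields $B=\big(\sum_{i,j}c_{ij}\ketbra{i}{j}\big)\otimes\mathbb{I}_{\mathcal{K}}=M\otimes\mathbb{I}_{\mathcal{K}}$ with $M:=\sum_{i,j}c_{ij}\ketbra{i}{j}\in L(\mathcal{H})$, as claimed. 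The proof is not deep, so the only real care is bookkeeping: correctly tracking the index transposition that the partial trace induces in the two orderings $BA$ and $AB$, and justifying that single-block test operators suffice (which is immediate since the $A_{kl}$ are arbitrary and independent). The one conceptual ingredient on which everything rests is the centralizer characterization in the final step.
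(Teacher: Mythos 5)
Your proof is correct and follows essentially the same route as the paper: both test the hypothesis against operators supported on a single $\mathcal{H}$-block (the paper uses $A=\ketbra{i}{j}\otimes\ketbra{k}{l}$, you use $\ketbra{p}{q}\otimes Y$ for general $Y$), extract the commutation relation $[B_{qp},Y]=0$ for the corresponding block of $B$, and conclude via the fact that the centralizer of $L(\mathcal{K})$ is $\mathbb{C}\,\mathbb{I}_{\mathcal{K}}$. The block bookkeeping in your computation of $\Tr_1(BA)$ and $\Tr_1(AB)$ checks out.
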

\begin{proof}
For all $\ketbra{i}{j}\otimes\ketbra{k}{l}\in L({\mathcal{H}\otimes\mathcal{K}})$, we have
\begin{equation}
\Tr_1\left(\big[B,\ketbra{i}{j}\otimes\ketbra{k}{l}\big]\right)=0.
\end{equation}
Thus, for all $i$ and $j$, we have
\begin{equation}
\big[\left(\bra{j}\otimes \mathbb{I}_{\mathcal{K}}\right)B\left(\ket{i}\otimes \mathbb{I}_{\mathcal{K}}\right),\ketbra{k}{l}\big]=0,~\forall k,l,
\end{equation}
which implies that $\left(\bra{j}\otimes \mathbb{I}_{\mathcal{K}}\right)B\left(\ket{i}\otimes \mathbb{I}_{\mathcal{K}}\right)=M_{ji}\mathbb{I}_{\mathcal{K}}$ for some constant $M_{ji}\in\mathbb{C}$. 
By defining $M\in L(\mathcal{H})$ as a matrix with elements in the $j$-th row and $i$-th column given by $M_{ji}$, we have $B=M\otimes \mathbb{I}_{\mathcal{K}}$.
\end{proof}
An open problem is whether the freedom in choosing the Hamiltonian can help to reduce the sample complexity of HME. While for the Hermitian-preserving maps discussed in this paper, $\Lambda_{\mathcal{N}}^{T_1}$ is a suitable choice, it does not exclude the possibility that for certain Hermitian-preserving maps $\mathcal{N}$, there exist better choices within the set $\mathscr{H}_{\mathcal{N}}$ that could potentially reduce the sample complexity of realizing $e^{-i\mathcal{N}(\cdot)t}$.

\section{Sample Complexity Upper Bound of HME}\label{app:proof_of_err}

\subsection{Proof of Theorem~\ref{theorem:cost}}\label{app:Proof_of_Thm2}

As $\mathcal{Q}_t=\mathcal{Q}_{\Delta t}^{\circ K}$ and $\mathcal{U}_t=\mathcal{U}_{\Delta t}^{\circ K}$, we can use Lemma~\ref{lemma:subadditivity} iteratively to obtain
\begin{equation}\label{eq:linearly}
\norm{\mathcal{Q}_t-\mathcal{U}_t}_{\diamond}=\norm{\mathcal{Q}_{\Delta t}^{\circ K}-\mathcal{U}_{\Delta t}^{\circ K}}_{\diamond}\le K\norm{\mathcal{Q}_{\Delta t}-\mathcal{U}_{\Delta t}}_{\diamond}.
\end{equation}
Therefore, the error only accumulates additively, and it is sufficient to analyze the error of a single step: $\norm{\mathcal{Q}_{\Delta t}-\mathcal{U}_{\Delta t}}_{\diamond}$. Let $\mathcal{N}\in T(\mathcal{H},\mathcal{K})$ be a Hermitian-preserving map from $L(\mathcal{H})$ to $L(\mathcal{K})$. Take arbitrary reference system $\mathcal{R}$ and arbitary $\sigma_{\mathcal{K}\mathcal{R}}\in D(\mathcal{K}\mathcal{R})$, we have
\begin{equation}\label{eq:appro_terms}
\begin{aligned}
&(\mathcal{Q}_{\Delta t}\otimes\mathcal{I}_{\mathcal{R}})(\sigma_{\mathcal{K}\mathcal{R}})\\
=&\Tr_1\left[(e^{-i H\Delta t}\otimes \mathbb{I}_{\mathcal{R}})(\rho\otimes\sigma_{\mathcal{K}\mathcal{R}})(e^{i H\Delta t}\otimes \mathbb{I}_{\mathcal{R}})\right]\\
=&\Tr_1\left[e^{-i (H\otimes \mathbb{I}_{\mathcal{R}})\Delta t}(\rho\otimes\sigma_{\mathcal{K}\mathcal{R}})e^{i (H\otimes \mathbb{I}_{\mathcal{R}})\Delta t}\right]\\
=&\Tr_1\left[\sum_{m=0}^{\infty}\frac{1}{m!}\operatorname{ad}_{\left(-i (H\otimes \mathbb{I}_{\mathcal{R}})\Delta t\right)}^m(\rho\otimes\sigma_{\mathcal{K}\mathcal{R}})\right]\\
=&\sigma_{\mathcal{K}\mathcal{R}}-i\Delta t\left[\mathcal{N}(\rho)\otimes \mathbb{I}_{\mathcal{R}},\sigma_{\mathcal{K}\mathcal{R}}\right]\\
&+\Tr_1\left[\sum_{m=2}^{\infty}\frac{1}{m!}\operatorname{ad}_{\left(-i (H\otimes \mathbb{I}_{\mathcal{R}})\Delta t\right)}^m(\rho\otimes\sigma_{\mathcal{K}\mathcal{R}})\right].
\end{aligned}
\end{equation}
Here the notation $\operatorname{ad}$ is defined as
\begin{equation}
\begin{aligned}
\operatorname{ad}_{A}(B)&=[A,B],\\
\operatorname{ad}_{A}^m(B)&=[A,\operatorname{ad}_{A}^{m-1}(B)],
\end{aligned}
\end{equation}
for square matrices $A$ and $B$ of the same size. The trace norm of the third term in the last equation of Eq.~\eqref{eq:appro_terms} satisfies
\begin{equation}\label{eq:appro_remainder_thirdterm}
\begin{aligned}
&\norm{\Tr_1\left[\sum_{m=2}^{\infty}\frac{1}{m!}\operatorname{ad}_{\left(-i (H\otimes \mathbb{I}_{\mathcal{R}})\Delta t\right)}^m(\rho\otimes\sigma_{\mathcal{K}\mathcal{R}})\right]}_1\\
\le &\norm{\sum_{m=2}^{\infty}\frac{1}{m!}\operatorname{ad}_{\left(-i (H\otimes \mathbb{I}_{\mathcal{R}})\Delta t\right)}^m(\rho\otimes\sigma_{\mathcal{K}\mathcal{R}})}_1\\
\le &\sum_{m=2}^{\infty}\frac{1}{m!}\norm{\operatorname{ad}_{\left(-i (H\otimes \mathbb{I}_{\mathcal{R}})\Delta t\right)}^m(\rho\otimes\sigma_{\mathcal{K}\mathcal{R}})}_1\\
=&\sum_{m=2}^{\infty}\frac{1}{m!}(\Delta t)^m\norm{\operatorname{ad}_{\left(H\otimes \mathbb{I}_{\mathcal{R}}\right)}^m(\rho\otimes\sigma_{\mathcal{K}\mathcal{R}})}_1\\
\le&\sum_{m=2}^{\infty}\frac{1}{m!}(\Delta t)^m 2^m \norm{H\otimes \mathbb{I}_{\mathcal{R}}}_{\infty}^m\norm{\rho\otimes\sigma_{\mathcal{K}\mathcal{R}}}_1\\
=&\sum_{m=2}^{\infty}\frac{1}{m!}(2\norm{H}_{\infty}\Delta t)^m\\
=&e^{2\norm{H}_{\infty}\Delta t}-1-2\norm{H}_{\infty}\Delta t\\
\le&4\norm{H}_{\infty}^2\Delta t^2,\text{ when } \norm{H}_{\infty}\Delta t\in(0,0.8].
\end{aligned}
\end{equation}
The first inequality in Eq.~\eqref{eq:appro_remainder_thirdterm} is due to the non-increasing property of trace norm under partial trace. The third inequality in Eq.~\eqref{eq:appro_remainder_thirdterm} is obtained by iteratively applying the inequalities $\norm{AB}_1\le\norm{A}_{\infty}\norm{B}_1$ and $\norm{AB}_1\le\norm{A}_1\norm{B}_{\infty}$, which hold for all square matrices $A$ and $B$ of the same size. The second equality in Eq.~\eqref{eq:appro_remainder_thirdterm} follows from $\norm{\rho\otimes\sigma_{\mathcal{K}\mathcal{R}}}_1=1$ and $\norm{H\otimes \mathbb{I}_{\mathcal{R}}}_{\infty}=\norm{H}_{\infty}$.

On the other hand, we have
\begin{equation}\label{eq:exact_terms}
\begin{aligned}
&(\mathcal{U}_{\Delta t}\otimes\mathcal{I}_{\mathcal{R}})(\sigma_{\mathcal{K}\mathcal{R}})\\
=&(e^{-i \mathcal{N}(\rho)\Delta t}\otimes \mathbb{I}_{\mathcal{R}})(\sigma_{\mathcal{K}\mathcal{R}})(e^{i \mathcal{N}(\rho)\Delta t}\otimes \mathbb{I}_{\mathcal{R}})\\
=&e^{-i \left(\mathcal{N}(\rho)\otimes \mathbb{I}_{\mathcal{R}}\right)\Delta t}(\sigma_{\mathcal{K}\mathcal{R}})e^{i \left(\mathcal{N}(\rho)\otimes \mathbb{I}_{\mathcal{R}}\right)\Delta t}\\
=&\sum_{m=0}^{\infty}\frac{1}{m!}\operatorname{ad}_{\left(-i \left(\mathcal{N}(\rho)\otimes \mathbb{I}_{\mathcal{R}}\right)\Delta t\right)}^m(\sigma_{\mathcal{K}\mathcal{R}})\\
=&\sigma_{\mathcal{K}\mathcal{R}}-i\Delta t\left[\mathcal{N}(\rho)\otimes \mathbb{I}_{\mathcal{R}},\sigma_{\mathcal{K}\mathcal{R}}\right]\\
&+\sum_{m=2}^{\infty}\frac{1}{m!}\operatorname{ad}_{\left(-i \left(\mathcal{N}(\rho)\otimes \mathbb{I}_{\mathcal{R}}\right)\Delta t\right)}^m(\sigma_{\mathcal{K}\mathcal{R}}).
\end{aligned}
\end{equation}
The trace norm of the third term in the last equation of Eq.~\eqref{eq:exact_terms} satisfies
\begin{equation}\label{eq:exact_remainder}
\begin{aligned}
&\norm{\sum_{m=2}^{\infty}\frac{1}{m!}\operatorname{ad}_{\left(-i \left(\mathcal{N}(\rho)\otimes \mathbb{I}_{\mathcal{R}}\right)\Delta t\right)}^m(\sigma_{\mathcal{K}\mathcal{R}})}_1\\
\le&\sum_{m=2}^{\infty}\frac{1}{m!}(\Delta t)^m\norm{\operatorname{ad}_{\left(\mathcal{N}(\rho)\otimes \mathbb{I}_{\mathcal{R}}\right)}^m(\sigma_{\mathcal{K}\mathcal{R}})}_1\\
\le&\sum_{m=2}^{\infty}\frac{1}{m!}(\Delta t)^m 2^m \norm{\mathcal{N}(\rho)\otimes \mathbb{I}_{\mathcal{R}}}_{\infty}^m\norm{\sigma_{\mathcal{K}\mathcal{R}}}_1\\
=&\sum_{m=2}^{\infty}\frac{1}{m!}(2\norm{\mathcal{N}(\rho)}_{\infty}\Delta t)^m\\
\le&\sum_{m=2}^{\infty}\frac{1}{m!}(2\norm{H}_{\infty}\Delta t)^m\\
\le&4\norm{H}_{\infty}^2\Delta t^2,\text{ when } \norm{H}_{\infty}\Delta t\in(0,0.8],
\end{aligned}
\end{equation}
where the third inequality is given by $\norm{\mathcal{N}(\rho)}_{\infty}\le\norm{H}_{\infty}$. To gain a better understanding of this inequality, we provide a graphical proof in Fig.~\ref{fig:norm_inequality}. Combining Eq.~\eqref{eq:appro_terms}-\eqref{eq:exact_remainder}, we can conclude that when $\norm{H}_{\infty}\Delta t\in(0,0.8]$, it always holds true that
\begin{equation}
\begin{aligned}
&\norm{(\mathcal{Q}_{\Delta t}\otimes\mathcal{I}_{\mathcal{R}})(\sigma_{\mathcal{K}\mathcal{R}})-(\mathcal{U}_{\Delta t}\otimes\mathcal{I}_{\mathcal{R}})(\sigma_{\mathcal{K}\mathcal{R}})}_1\\
\le&\norm{\Tr_1\left[\sum_{m=2}^{\infty}\frac{1}{m!}\operatorname{ad}_{\left(-i (H\otimes \mathbb{I}_{\mathcal{R}})\Delta t\right)}^m(\rho\otimes\sigma_{\mathcal{K}\mathcal{R}})\right]}_1\\
&+\norm{\sum_{m=2}^{\infty}\frac{1}{m!}\operatorname{ad}_{\left(-i \left(\mathcal{N}(\rho)\otimes \mathbb{I}_{\mathcal{R}}\right)\Delta t\right)}^m(\sigma_{\mathcal{K}\mathcal{R}})}_1\\
\le& 8\norm{H}_{\infty}^2\Delta t^2,
\end{aligned}
\end{equation}
regardless of the dimension of the reference system $\mathcal{R}$.
By taking the supremum over $\mathcal{R}$ and $\sigma_{\mathcal{K}\mathcal{R}}$, we have $\norm{\mathcal{Q}_{\Delta t}-\mathcal{U}_{\Delta t}}_{\diamond}\le 8\norm{H}_{\infty}^2\Delta t^2$. Substituting this result into Eq.~\eqref{eq:linearly}, when $K\in[1.25\norm{H}_{\infty}t,\infty)$, we have
\begin{equation}
\begin{aligned}
\norm{\mathcal{Q}_t-\mathcal{U}_t}_{\diamond}
\le 8K\norm{H}_{\infty}^2\Delta t^2=8K^{-1}\norm{H}_{\infty}^2 t^2.
\end{aligned}
\end{equation}
So, in order to guarantee $\norm{\mathcal{Q}_t-\mathcal{U}_t}_{\diamond}\le\epsilon$, it suffices to set
\begin{equation}
\begin{aligned}
K=&\max\left\{\lceil8\epsilon^{-1}\norm{H}_{\infty}^2t^2\rceil,\lceil1.25\norm{H}_{\infty}t\rceil\right\}\\
\le&\mathcal{O}\left(\epsilon^{-1}\norm{H}_{\infty}^2t^2\right).
\end{aligned}
\end{equation}

\begin{figure}[t]
\centering
\includegraphics[width=0.48\textwidth]{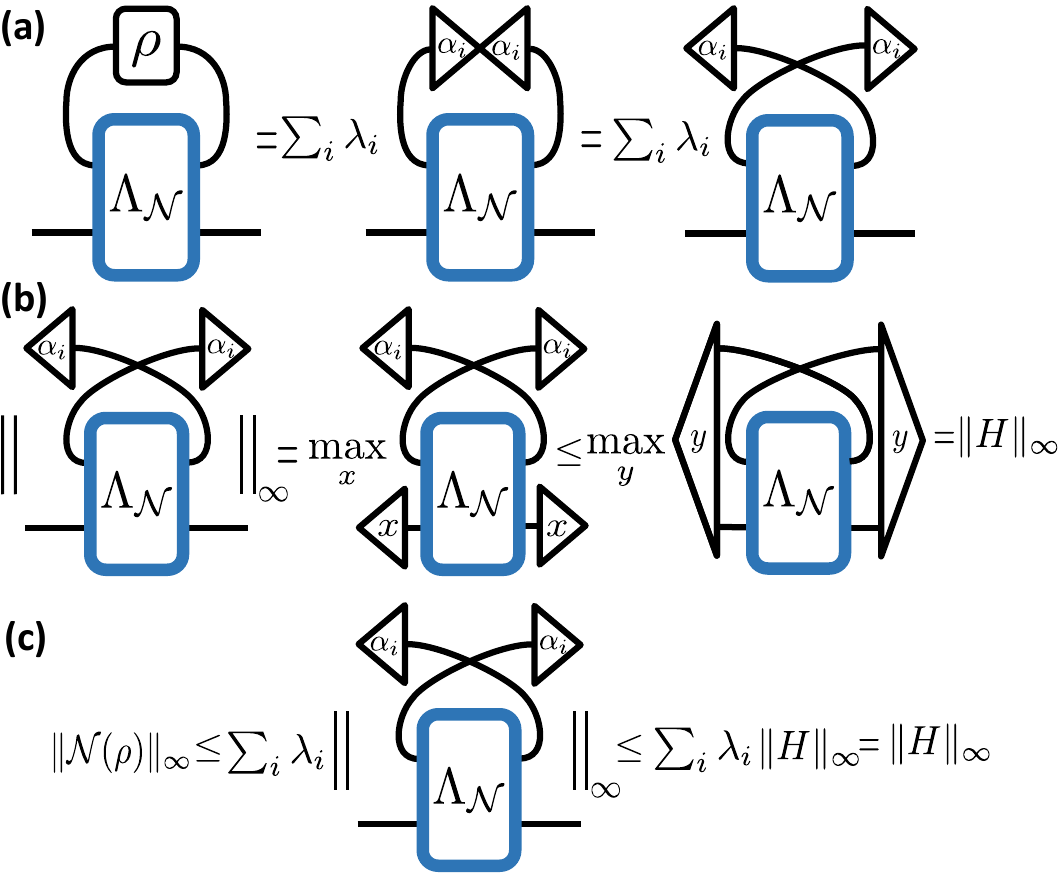}
\caption{Proof of $\norm{\mathcal{N}(\rho)}_{\infty}\le\norm{H}_{\infty}$ using tensor diagrams. The triangles stand for vectors and rounded rectangles stand for matrices. (a) We decompose $\rho=\sum_i\lambda_i\ketbra{\alpha_i}{\alpha_i}$ and represent $\mathcal{N}(\rho)$ with $\Lambda_{\mathcal{N}}^{T_1}$, $\lambda_i$, and $\ket{\alpha_i}$ by dragging two triangles across each other. (b) Using the fact that $\norm{M}_{\infty}=\max\{\bra{s}M\ket{s}:s\in\mathcal{X},\norm{s}=1\}$ for Hermitian matrix $M\in L(\mathcal{X})$, we derive $\norm{\mathcal{N}(\ketbra{\alpha_i}{\alpha_i})}_{\infty}\le\norm{H}_{\infty}$. (c) By applying the convexity of the operator norm, we prove that $\norm{\mathcal{N}(\rho)}_{\infty}\le\norm{H}_{\infty}$.}
\label{fig:norm_inequality}
\end{figure}

\subsection{Sample Complexity Upper Bound for implementing the Partial Transposition Map}\label{app:proof_of_err:PT}

\begin{figure*}[!t]
\centering
\includegraphics[width=0.89\linewidth]{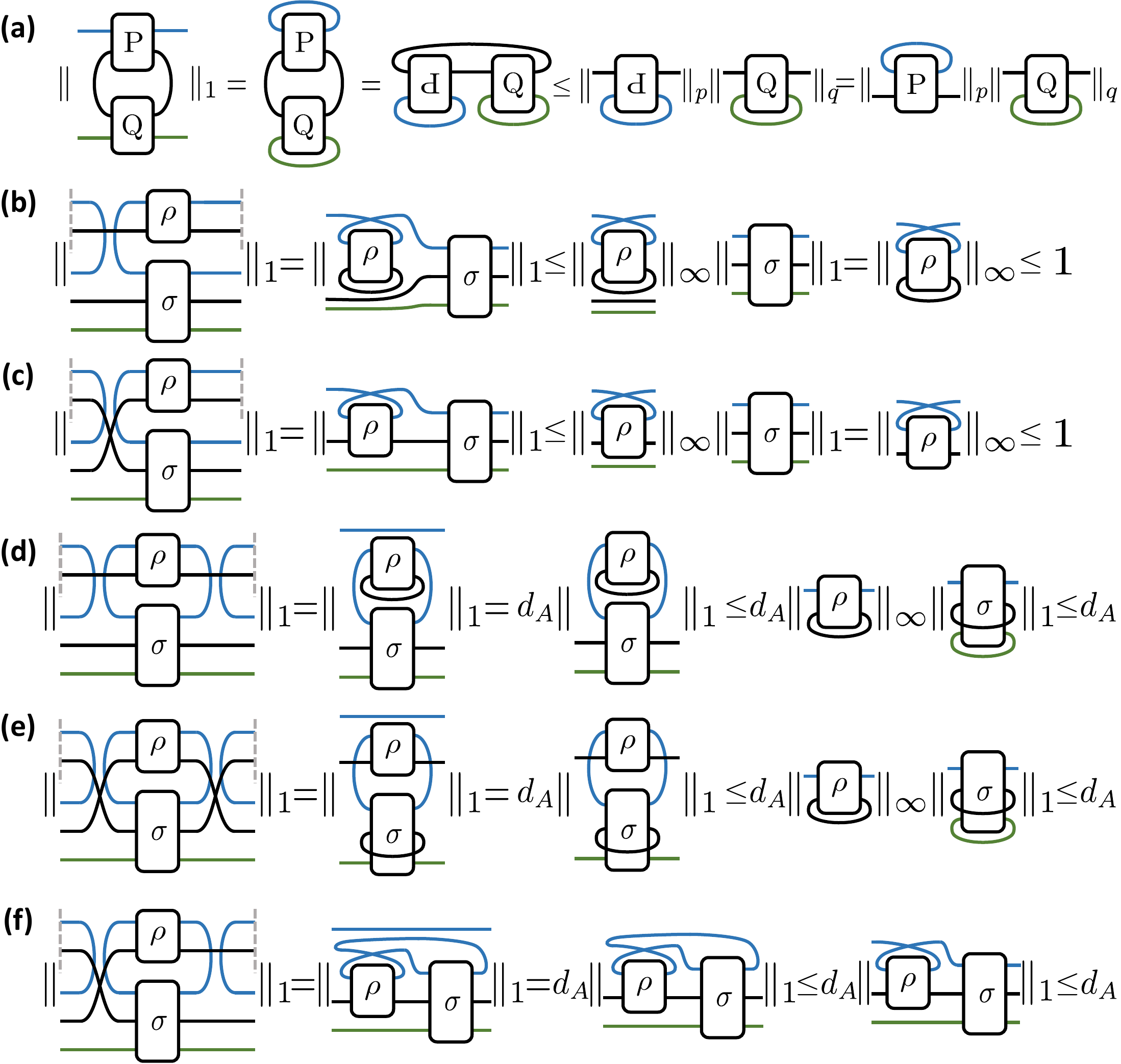}
\caption{(a) A tensor network version of Eq.~\eqref{eq:convolution_norm}, where the inverted $P$ box denotes the original $P$ box after transposition and exchanging the upper and lower legs. The gray dashed lines denote the trace function, which is represented by connecting legs in tensor network diagrams. (b)-(f) are diagrams for the proof of Lemma~\ref{lemma:PT_single_remainder_norm}, divided into five cases according to the values of $k_1$ and $k_2$. (b) $k_1$ even, $k_2=0$. The last inequality holds because $\Tr_2\left(\rho\right)^T$ is still a quantum state. (c) $k_1$ odd, $k_2=0$. The last inequality holds because $\norm{\rho^{T_A}}_{\infty}\le\sum_i\lambda_i\norm{\ketbra{\alpha_i}{\alpha_i}^{T_A}}_{\infty}\le\sum_i\lambda_i=1$, where $\rho=\sum_i\lambda_i\ketbra{\alpha_i}{\alpha_i}$ is the spectral decomposition, and the eigenvalues of $\ketbra{\alpha_i}{\alpha_i}^{T_A}$ are in the range $[-1/2,1]$ \cite{Johnston2018inverse}. (d) $k_1$ even, $k_2$ even. The first inequality is given by Lemma~\ref{lemma:convolution_norm}. (e) $k_1$ odd, $k_2$ odd. The first inequality is given by Lemma~\ref{lemma:convolution_norm}. (f) $k_1$ odd, $k_2$ even. The first inequality is given by the non-increasing property of trace norm under partial trace. The last equality follows the derivation in (c). Other cases, such as `$k_1=0$, $k_2$ odd', can be derived by taking the Hermitian conjugate in the above derivations.}
\label{fig:PTcost+ConvolutionNorm}
\end{figure*}

As stated in the main text, Theorem~\ref{theorem:cost} does not provide a tight upper bound on the sample complexity of HME for all non-physical maps. Here, we will use the partial transposition map to illustrate this.
\begin{proposition}\label{prop:PT_cost}
Setting the Hamiltonian to be $H_P=\Phi^+_A\otimes S_B$, the HME algorithm shown in Fig.~\ref{fig:overview} requires at most $\mathcal{O}(\epsilon^{-1}d_At^2)$ copies of sequentially inputting state $\rho$ to ensure that 
\begin{equation}
\norm{\mathcal{Q}_t-[e^{-i\rho^{T_A}t}]}_\diamond\le \epsilon
\end{equation}
holds for arbitrary $\rho$.
\end{proposition}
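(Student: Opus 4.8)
The plan is to reuse the two-layer structure of the proof of Theorem~\ref{theorem:cost}: first reduce the global diamond distance to a single-step error through the subadditivity of the diamond norm (Lemma~\ref{lemma:subadditivity}), writing $\norm{\mathcal{Q}_t-[e^{-i\rho^{T_A}t}]}_\diamond\le K\norm{\mathcal{Q}_{\Delta t}-\mathcal{U}_{\Delta t}}_\diamond$ with $\mathcal{U}_{\Delta t}=[e^{-i\rho^{T_A}\Delta t}]$ and $\Delta t=t/K$, and then bound the single-step error. Every improvement over Theorem~\ref{theorem:cost} must come from a sharper estimate of $\norm{\mathcal{Q}_{\Delta t}-\mathcal{U}_{\Delta t}}_\diamond$ that exploits the algebraic structure of $H_P=\Phi^+_A\otimes S_B$ rather than only its operator norm $\norm{H_P}_\infty=d_A$, which is precisely what produces the loose factor of $d_A$ in the general bound.

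Following the single-step expansion of Eqs.~\eqref{eq:appro_terms}–\eqref{eq:exact_remainder}, the error splits into a real-channel remainder $\Tr_1\big[\sum_{m\ge2}\tfrac{1}{m!}\operatorname{ad}_{(-iH_P\Delta t)}^m(\rho\otimes\sigma_{\mathcal{K}\mathcal{R}})\big]$ and an ideal-channel remainder $\sum_{m\ge2}\tfrac{1}{m!}\operatorname{ad}_{(-i\rho^{T_A}\Delta t)}^m(\sigma_{\mathcal{K}\mathcal{R}})$. The ideal part is easy: the eigenvalues of the partial transpose of a pure state lie in $[-1/2,1]$, so by convexity $\norm{\rho^{T_A}}_\infty\le1$ for every $\rho$, and this remainder is $\mathcal{O}(\Delta t^2)$ independently of $d_A$. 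The real part is the crux. The key observation is that $\Phi^+_A$ is proportional to a rank-one projector, $(\Phi^+_A)^2=d_A\Phi^+_A$, while $S_B^2=\mathbb{I}$, so $H_P^{k}=d_A^{\,k-1}\,\Phi^+_A\otimes S_B^{\,k\bmod 2}$ for $k\ge1$. Expanding the nested commutator into terms $H_P^{k_1}(\rho\otimes\sigma_{\mathcal{K}\mathcal{R}})H_P^{k_2}$ with $k_1+k_2=m$, the scalar prefactor is $d_A^{\,m-2}$ whenever $k_1,k_2\ge1$, and only the parities of $k_1,k_2$ decide whether a residual $S_B$ survives after the commutator is formed.

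The heart of the argument is therefore a dimension-counting estimate showing that, after $\Tr_1$, each term carries only $d_A^{\,m-1}$, one power below the naive $(2d_A)^m$. I would establish this in a case analysis on the parities of $k_1$ and $k_2$ (the five cases of Fig.~\ref{fig:PTcost+ConvolutionNorm}(b)–(f)): a tensor-network convolution-norm inequality (Lemma~\ref{lemma:convolution_norm}) controls the even–even and odd–odd cases where $\Phi^+_A$ is contracted on both sides, while the non-increase of the trace norm under partial trace together with $\norm{\rho^{T_A}}_\infty\le1$ (as in Fig.~\ref{fig:PTcost+ConvolutionNorm}(c)) and the fact that $\Tr_2(\rho)^T$ is still a state (Fig.~\ref{fig:PTcost+ConvolutionNorm}(b)) handle the boundary and mixed cases; the remaining cases follow by Hermitian conjugation. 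Collecting these yields $\norm{\Tr_1\big(\operatorname{ad}_{H_P}^m(\rho\otimes\sigma_{\mathcal{K}\mathcal{R}})\big)}_1\le C\,2^m d_A^{\,m-1}$ for an absolute constant $C$, uniformly in $\mathcal{R}$. Summing over $m\ge2$ gives $\tfrac{C}{d_A}\sum_{m\ge2}\tfrac{(2d_A\Delta t)^m}{m!}=\mathcal{O}(d_A\Delta t^2)$ once $d_A\Delta t$ is bounded, hence $\norm{\mathcal{Q}_{\Delta t}-\mathcal{U}_{\Delta t}}_\diamond=\mathcal{O}(d_A\Delta t^2)$ and $\norm{\mathcal{Q}_t-[e^{-i\rho^{T_A}t}]}_\diamond=\mathcal{O}(d_A t^2/K)$; choosing $K=\mathcal{O}(\epsilon^{-1}d_A t^2)$ closes the proof.

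I expect the main obstacle to be exactly the convolution-norm inequality: explaining why contraction through the norm-$d_A$ operator $\Phi^+_A$ does not generate the anticipated power-law blow-up but instead collapses, via $(\Phi^+_A)^2=d_A\Phi^+_A$ together with the trace-preservation of $\rho$ and $\sigma_{\mathcal{K}\mathcal{R}}$, into a quantity with one fewer factor of $d_A$. Keeping the bookkeeping of the residual SWAP operators and the spectator reference system consistent across all parity cases, while holding the per-term constant uniform so that the $m$-sum converges, is where the genuine work lies; the reduction to a single step and the final choice of $K$ are routine once this estimate is secured.
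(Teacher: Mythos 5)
Your proposal follows essentially the same route as the paper's proof in Appendix~\ref{app:proof_of_err:PT}: subadditivity reduces the bound to a single step, the ideal remainder is controlled by $\norm{\rho^{T_A}}_\infty\le1$, and the real remainder is sharpened to $d_A^{k_1+k_2-1}$ via the identity $H_P^k=d_A^{k-1}\Phi_A^+\otimes S_B^{k\bmod 2}$, the convolution-norm inequality (Lemma~\ref{lemma:convolution_norm}), and the same parity case analysis of Lemma~\ref{lemma:PT_single_remainder_norm}. The argument is correct and the bookkeeping you flag as the main obstacle is exactly what the paper's Fig.~\ref{fig:PTcost+ConvolutionNorm} carries out.
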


To prove Proposition~\ref{prop:PT_cost}, we need to make some preliminary preparations.

When $k\ge 1$, the expression for $H_P^k$ can be written as:
\begin{equation}\label{eq:H_P^k}
H_P^k=\begin{cases}d_A^{k-1}\Phi_A^+\otimes \mathbb{I}_B & \text{ if } k \text{ even,} \\ d_A^{k-1}\Phi_A^+\otimes S_B & \text{ if } k \text{ odd,}\end{cases}
\end{equation}
where $\mathbb{I}_B$ represents the identity operator acting on two
copies of the system $B$.

For two operators $P\in L({\mathcal{H}_1\otimes\mathcal{H}_2})$ and $Q\in L({\mathcal{H}_2\otimes\mathcal{H}_3})$, we define the \textit{convolution} of $P$ and $Q$ as 
\begin{equation}
P*Q:=(\mathbb{I}_{\mathcal{H}_1}\otimes\bra{\Phi^+}\otimes \mathbb{I}_{\mathcal{H}_3})(P\otimes Q)(\mathbb{I}_{\mathcal{H}_1}\otimes\ket{\Phi^+}\otimes \mathbb{I}_{\mathcal{H}_3}),
\end{equation}
where $\ket{\Phi^{+}}$ represents the unnormalized maximally entangled state of $\mathcal{H}_2\otimes\mathcal{H}_2$ in the computational basis. The following lemma provides an inequality concerning the norm of the convolution of two positive operators.

\begin{lemma}\label{lemma:convolution_norm}
For positive operators $P\in L({\mathcal{H}_1\otimes\mathcal{H}_2})$ and $Q\in L({\mathcal{H}_2\otimes\mathcal{H}_3})$, the trace norm of $P*Q$ satisfies
\begin{equation}
\begin{aligned}
\norm{P*Q}_1\le\norm{\Tr_{\mathcal{H}_1}(P)}_p\norm{\Tr_{\mathcal{H}_3}(Q)}_q,
\end{aligned}
\end{equation}
where $p,q\in\left[1,\infty\right]$ satisfy $\frac{1}{p}+\frac{1}{q}=1$.
\end{lemma}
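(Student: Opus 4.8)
The plan is to exploit the positivity of $P$ and $Q$ to collapse the trace norm of the convolution into an ordinary trace, and then to recognize that trace as a Schatten--H\"older pairing of the two relevant partial traces. The key simplification is that, once positivity guarantees $P*Q\ge0$, the norm $\norm{P*Q}_1$ becomes the linear functional $\Tr(P*Q)$, which can be evaluated explicitly.

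First I would show that $P*Q$ is itself positive. Writing the positive operators as $P=\sum_a\ket{\phi_a}\bra{\phi_a}$ and $Q=\sum_b\ket{\psi_b}\bra{\psi_b}$ with $\ket{\phi_a}\in\mathcal{H}_1\otimes\mathcal{H}_2$ and $\ket{\psi_b}\in\mathcal{H}_2\otimes\mathcal{H}_3$, each term of the convolution collapses to a rank-one projector $\ket{\chi_{ab}}\bra{\chi_{ab}}$, where $\ket{\chi_{ab}}:=(\mathbb{I}_{\mathcal{H}_1}\otimes\bra{\Phi^+}\otimes\mathbb{I}_{\mathcal{H}_3})(\ket{\phi_a}\otimes\ket{\psi_b})\in\mathcal{H}_1\otimes\mathcal{H}_3$. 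Hence $P*Q=\sum_{a,b}\ket{\chi_{ab}}\bra{\chi_{ab}}\ge0$, and therefore $\norm{P*Q}_1=\Tr(P*Q)$, which removes the need to handle the trace norm directly.

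Next I would evaluate the trace. By cyclicity, $\Tr(P*Q)=\Tr[(P\otimes Q)(\mathbb{I}_{\mathcal{H}_1}\otimes\ketbra{\Phi^+}{\Phi^+}\otimes\mathbb{I}_{\mathcal{H}_3})]$, and expanding $\ketbra{\Phi^+}{\Phi^+}=\sum_{k,k'}\ketbra{kk}{k'k'}$ over the two copies of $\mathcal{H}_2$ factorizes the trace into a product over the $P$-block and the $Q$-block. Carrying out the partial traces over $\mathcal{H}_1$ and $\mathcal{H}_3$ yields the sum $\sum_{k,k'}\bra{k'}\Tr_{\mathcal{H}_1}(P)\ket{k}\,\bra{k'}\Tr_{\mathcal{H}_3}(Q)\ket{k}$, which reorganizes into
\begin{equation}
\Tr(P*Q)=\Tr\!\left[\Tr_{\mathcal{H}_1}(P)\,\big(\Tr_{\mathcal{H}_3}(Q)\big)^{T}\right],
\end{equation}
the transpose being the sole residue of the maximally entangled contraction. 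This identity is the analytic content of the tensor-network picture in Fig.~\ref{fig:PTcost+ConvolutionNorm}(a).

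Finally I would apply the trace H\"older inequality $\abs{\Tr(AB)}\le\norm{A}_p\norm{B}_q$ for conjugate exponents $1/p+1/q=1$, with $A=\Tr_{\mathcal{H}_1}(P)$ and $B=\big(\Tr_{\mathcal{H}_3}(Q)\big)^{T}$, together with the invariance of every Schatten norm under transposition, to conclude $\norm{P*Q}_1\le\norm{\Tr_{\mathcal{H}_1}(P)}_p\norm{\Tr_{\mathcal{H}_3}(Q)}_q$. The only points requiring care are that the partial trace of a positive operator remains positive and that transposition preserves positivity, so that the pairing is a genuine nonnegative number equal to its own absolute value; the main bookkeeping obstacle is tracking the two copies of $\mathcal{H}_2$ correctly through the entangled-state contraction so that the transpose lands on the right factor.
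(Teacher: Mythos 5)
Your proof is correct and follows essentially the same route as the paper: use positivity of $P$ and $Q$ to conclude $P*Q\ge0$ so that $\norm{P*Q}_1=\Tr(P*Q)$, identify that trace as $\Tr\!\left[\Tr_{\mathcal{H}_1}(P)\,\Tr_{\mathcal{H}_3}(Q)^{T}\right]$, and finish with the trace H\"older inequality plus transpose-invariance of Schatten norms. Your explicit rank-one decomposition to justify $P*Q\ge0$ is a slightly more detailed version of a step the paper simply asserts; otherwise the arguments coincide.
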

\begin{proof}
$P,Q\ge 0$ implies $P*Q\ge 0$. We have
\begin{equation}\label{eq:convolution_norm}
\begin{aligned}
&\norm{P*Q}_1=\Tr_{\mathcal{H}_1,\mathcal{H}_3}\left(P*Q\right)=\Tr_{\mathcal{H}_1}\left(P\right)*\Tr_{\mathcal{H}_3}\left(Q\right)\\
=&\Tr\left(\Tr_{\mathcal{H}_1}\left(P\right)^T\Tr_{\mathcal{H}_3}\left(Q\right)\right)=\Tr\left(\overline{\Tr_{\mathcal{H}_1}\left(P\right)}^{\dagger}\Tr_{\mathcal{H}_3}\left(Q\right)\right)\\
\le&\norm{\overline{\Tr_{\mathcal{H}_1}(P)}}_p\norm{\Tr_{\mathcal{H}_3}(Q)}_q=\norm{\Tr_{\mathcal{H}_1}(P)}_p\norm{\Tr_{\mathcal{H}_3}(Q)}_q,
\end{aligned}
\end{equation}
where $\overline{M}$ denotes the complex conjugate of an operator $M$. The first equality holds due to the positivity of $P*Q$. The inequality is given by the \textit{Hölder inequality} \cite{watrous2018theory}. A tensor network representation of Eq.~\eqref{eq:convolution_norm} is shown in Fig.~\ref{fig:PTcost+ConvolutionNorm}(a).
\end{proof}

Taking advantage of the structure of $H_P$, we can provide a more refined estimation for the third term in the last equation of Eq.~\eqref{eq:appro_terms}.

\begin{lemma}\label{lemma:PT_single_remainder_norm}
For non-negative integers $k_1$ and $k_2$ such that $k_1+k_2\ge 1$, we have 
\begin{equation}
\begin{aligned}
&\norm{\Tr_1\left[\left(H_P\otimes \mathbb{I}_{\mathcal{R}}\right)^{k_1}(\rho\otimes\sigma_{\mathcal{K}\mathcal{R}})\left(H_P\otimes \mathbb{I}_{\mathcal{R}}\right)^{k_2})\right]}_1\le d_A^{k_1+k_2-1}
\end{aligned}
\end{equation}
for any quantum states $\rho$ and $\sigma_{\mathcal{K}\mathcal{R}}$.
\end{lemma}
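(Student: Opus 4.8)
The plan is to substitute the closed-form expression for powers of $H_P$ from Eq.~\eqref{eq:H_P^k} into the operator whose trace norm we wish to bound, and then to finish by a finite case analysis on the parities of $k_1$ and $k_2$. Since $(H_P\otimes\mathbb{I}_\mathcal{R})^k=H_P^k\otimes\mathbb{I}_\mathcal{R}$ and the reference system $\mathcal{R}$ is untouched by $H_P$ and by $\Tr_1$, I would treat $\mathcal{R}$ as a passive spectator absorbed into the $B$-part of $\sigma_{\mathcal{K}\mathcal{R}}$; none of the marginals or norms used below are affected by this. Whenever $k_i\ge 1$, Eq.~\eqref{eq:H_P^k} extracts a scalar $d_A^{k_i-1}$ and leaves behind $\Phi^+_A\otimes\mathbb{I}_B$ (for $k_i$ even) or $\Phi^+_A\otimes S_B$ (for $k_i$ odd). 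The combined prefactor is thus $d_A^{k_1+k_2-2}$ in the bulk cases where $k_1,k_2\ge 1$, and $d_A^{k_1-1}$ (respectively $d_A^{k_2-1}$) when one exponent vanishes, so it suffices to show that the remaining scalar-free operator has trace norm at most $d_A$ in the bulk and at most $1$ on the boundary.

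First I would treat the boundary cases $k_2=0$; the cases $k_1=0$ then follow by taking the Hermitian conjugate inside the trace norm, which leaves $\norm{\cdot}_1$ invariant. For $k_1$ even the remaining operator reduces, after tracing out the first copy, to $(\rho_A^T\otimes\mathbb{I})\,\sigma_{\mathcal{K}\mathcal{R}}$ with $\rho_A=\Tr_B(\rho)$; since $\rho_A^T$ is again a density operator, a H\"older estimate bounds the trace norm by $\norm{\rho_A^T}_\infty\norm{\sigma_{\mathcal{K}\mathcal{R}}}_1\le 1$, matching Fig.~\ref{fig:PTcost+ConvolutionNorm}(b). For $k_1$ odd the $S_B$ contraction produces a $\rho^{T_A}$-type factor, and the bound follows from $\norm{\rho^{T_A}}_\infty\le 1$, valid because the eigenvalues of $\ketbra{\alpha_i}{\alpha_i}^{T_A}$ lie in $[-1/2,1]$, as in Fig.~\ref{fig:PTcost+ConvolutionNorm}(c).

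For the bulk cases I would handle the three parity patterns separately, as depicted in Fig.~\ref{fig:PTcost+ConvolutionNorm}(d)--(f). The key structural fact is that tracing the first copy's $A$-system through the two $\Phi^+_A$ factors closes a maximally entangled loop and yields $\Tr_A\ketbra{\Phi^+}{\Phi^+}=\mathbb{I}_A$, contributing the single factor $\norm{\mathbb{I}_A}_1=d_A$ that we are permitted; the surviving $B$-system contraction is then a convolution of positive operators whose trace norm is controlled by Lemma~\ref{lemma:convolution_norm}. Concretely, in the even--even and odd--odd patterns I would express the surviving operator as a convolution $P*Q$ of two positive operators built from $\rho$ and $\sigma_{\mathcal{K}\mathcal{R}}$, apply Lemma~\ref{lemma:convolution_norm} with a H\"older pair $(p,q)$, and bound the resulting marginals using $\norm{\cdot}_\infty\le 1$ and $\norm{\cdot}_1=1$; in the mixed odd--even pattern I would instead first drop a partial trace by its non-increasing effect on the trace norm, then reduce to the $\norm{\rho^{T_A}}_\infty\le 1$ estimate of the boundary case. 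Collecting the prefactor $d_A^{k_1+k_2-2}$ with the single factor $d_A$ from the entangled loop yields the claimed bound $d_A^{k_1+k_2-1}$.

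I expect the main obstacle to be the odd--odd and mixed cases, where the SWAP operators $S_B$ entangle the two $B$-registers on both sides of $\rho\otimes\sigma_{\mathcal{K}\mathcal{R}}$. Unlike the clean even--even case, the correct assignment of which legs get contracted against $\ket{\Phi^+}$ versus through $S_B$ is delicate, and one must choose the operators $P,Q$ and the exponents in Lemma~\ref{lemma:convolution_norm} so that exactly one factor of $d_A$ survives --- not $d_A^2$ and not none. Carrying out this bookkeeping rigorously, rather than graphically as in Fig.~\ref{fig:PTcost+ConvolutionNorm}, is the technical heart of the argument.
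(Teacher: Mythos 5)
Your proposal follows essentially the same route as the paper's proof: substituting the closed form of $H_P^k$, splitting into the same parity cases (with the $k_i=0$ boundary handled by conjugation symmetry), extracting the single factor of $d_A$ from the $\Phi^+_A$ contraction, and closing the bulk cases with Lemma~\ref{lemma:convolution_norm} plus H\"older while the mixed case uses monotonicity of the trace norm under partial trace and $\norm{\rho^{T_A}}_{\infty}\le 1$. The approach is correct and the bookkeeping you flag as the technical heart is exactly what the paper carries out diagrammatically in Fig.~\ref{fig:PTcost+ConvolutionNorm}(b)--(f).
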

\begin{proof}
When $k_1$ and $k_2$ are odd, we have
\begin{equation}
\begin{aligned}
&\left(H_P\otimes \mathbb{I}_{\mathcal{R}}\right)^{k_1}(\rho\otimes\sigma_{\mathcal{K}\mathcal{R}})\left(H_P\otimes \mathbb{I}_{\mathcal{R}}\right)^{k_2}\\
=&d_A^{k_1+k_2-2}(\Phi_A^+\otimes S_B\otimes \mathbb{I}_{\mathcal{R}})(\rho\otimes\sigma_{\mathcal{K}\mathcal{R}})(\Phi_A^+\otimes S_B\otimes \mathbb{I}_{\mathcal{R}}).
\end{aligned}
\end{equation}
Taking partial trace over the first system and estimating the trace norm, we have
\begin{equation}\label{eq:PT_cost_odd_odd}
\begin{aligned}
&\norm{\Tr_1\left[(\Phi_A^+\otimes S_B\otimes \mathbb{I}_{\mathcal{R}})(\rho\otimes\sigma_{\mathcal{K}\mathcal{R}})(\Phi_A^+\otimes S_B\otimes \mathbb{I}_{\mathcal{R}})\right]}_1\\
&=\norm{\mathbb{I}_{d_A}\otimes\big(S\rho S*\Tr_B(\sigma_{\mathcal{K}\mathcal{R}})\big)}_1\\
&=d_A\norm{S\rho S*\Tr_B(\sigma_{\mathcal{K}\mathcal{R}})}_1\\
&\le d_A\norm{\Tr_{B}(\rho)}_{\infty}\norm{\Tr_{B,\mathcal{R}}(\sigma)}_{1}\le d_A.
\end{aligned}
\end{equation}
In the second and third lines of Eq.~\eqref{eq:PT_cost_odd_odd}, the symbol $S$ denotes the SWAP operator between the two systems on which the density matrix $\rho$ acts. The tensor network derivation of this expression can be found in Fig.~\ref{fig:PTcost+ConvolutionNorm}(e). The proofs for other cases follow a similar approach and are provided in Fig.~\ref{fig:PTcost+ConvolutionNorm}.
\end{proof}

\begin{proof}[Proof of Proposition~\ref{prop:PT_cost}]
The trace norm of the third term in the last equation of Eq.~\eqref{eq:appro_terms} can be estimated as
\begin{equation}\label{eq:PT_appro_remainders}
\begin{aligned}
&\norm{\Tr_1\left[\sum_{m=2}^{\infty}\frac{1}{m!}\operatorname{ad}_{\left(-i (H_P\otimes \mathbb{I}_{\mathcal{R}})\Delta t\right)}^m(\rho\otimes\sigma_{\mathcal{K}\mathcal{R}})\right]}_1\\
\le &\sum_{m=2}^{\infty}\frac{1}{m!}(\Delta t)^m\norm{\Tr_1\left[\operatorname{ad}_{\left(H_P\otimes \mathbb{I}_{\mathcal{R}}\right)}^m(\rho\otimes\sigma_{\mathcal{K}\mathcal{R}})\right]}_1\\
\le&\sum_{m=2}^{\infty}\frac{1}{m!}(2\Delta t)^md_A^{m-1}\\
=&d_A^{-1}\sum_{m=2}^{\infty}\frac{1}{m!}(2d_A\Delta t)^m\\
=&\mathcal{O}(d_A\Delta t^2),\\
\end{aligned}
\end{equation}
where the second inequality is given by
\begin{equation}
\norm{\Tr_1\left[\left(H_P\otimes \mathbb{I}_{\mathcal{R}}\right)^k(\rho\otimes\sigma_{\mathcal{K}\mathcal{R}})\left(H_P\otimes \mathbb{I}_{\mathcal{R}}\right)^{m-k})\right]}_1\le d_A^{m-1},
\end{equation}
which can be derived from Lemma~\ref{lemma:PT_single_remainder_norm} and holds for $0\le k \le m$.

On the other hand, we have
\begin{equation}\label{eq:PT_exact_remainders}
\begin{aligned}
&\norm{\sum_{m=2}^{\infty}\frac{1}{m!}\operatorname{ad}_{\left(-i \left(\rho^{T_A}\otimes \mathbb{I}_{\mathcal{R}}\right)\Delta t\right)}^m(\sigma_{\mathcal{K}\mathcal{R}})}_1\\
\le&\mathcal{O}(\norm{\rho^{T_A}}_{\infty}^2\Delta t^2)= \mathcal{O}(\Delta t^2).
\end{aligned}
\end{equation}

Combining Eq.~\eqref{eq:appro_terms} \eqref{eq:exact_terms} \eqref{eq:PT_appro_remainders} \eqref{eq:PT_exact_remainders}, we can conclude that in order to guarantee $\norm{\mathcal{Q}_t-\left[e^{-i\rho^{T_A}t}\right]}_{\diamond}\le\epsilon$, the sample complexity $K$ we need is at most $\mathcal{O}(\epsilon^{-1}d_At^2)$.
\end{proof}

Note that the proof of Proposition~\ref{prop:PT_cost} can be easily generalized to the controlled version. Therefore, we need at most $\mathcal{O}(\epsilon^{-1}d_At^2)$ copies of $\rho$ to implement controlled-$e^{-i\rho^{T_A}t}$ up to error $\epsilon$ in diamond distance.

\section{Robustness of HME}
\subsection{Proof of Theorem~\ref{theorem:robustness}}\label{app:proof_of_robustness}

Likewise the proof of Theorem~\ref{theorem:cost} in Appendix~\ref{app:Proof_of_Thm2}, to derive the distance between the whole channel $\norm{\mathcal{Q}_t^\prime-\mathcal{Q}_t}$, we need to first derive the distance of a step $\norm{\mathcal{Q}_{k,\Delta t}^\prime-\mathcal{Q}_{\Delta t}}$, where 
\begin{equation}
\mathcal{Q}_{k,\Delta t}^\prime(\cdot):=\Tr_1\left[(e^{-i H^\prime_k\Delta t}\otimes \mathbb{I}_{\mathcal{R}})(\rho^{\prime}_k\otimes\cdot)(e^{i H^\prime_k\Delta t}\otimes \mathbb{I}_{\mathcal{R}})\right]
\end{equation}
with $H_k^\prime$ and $\rho_k^\prime$ varying for each step, and
\begin{equation}
\begin{aligned}
\mathcal{Q}_t^\prime=\mathcal{Q}_{K,\Delta t}^\prime\circ\mathcal{Q}_{K-1,\Delta t}^\prime\circ\cdots\circ\mathcal{Q}_{1,\Delta t}^\prime.
\end{aligned}
\end{equation}
Then, for arbitrary reference system $\mathcal{R}$ and any state $\sigma_{\mathcal{K}\mathcal{R}}\in D(\mathcal{K}\mathcal{R})$, we have
\begin{widetext}
\begin{equation}\label{eq:robustness_terms}
\begin{aligned}
&\Big\|\mathcal{Q}^{\prime}_{k,\Delta t}\otimes\mathcal{I}_{\mathcal{R}}(\sigma_{\mathcal{K}\mathcal{R}})-\mathcal{Q}_{\Delta t}\otimes\mathcal{I}_{\mathcal{R}}(\sigma_{\mathcal{K}\mathcal{R}})\Big\|_1\\
=&\norm{\Tr_1\big[(e^{-i H^\prime_k\Delta t}\otimes \mathbb{I}_{\mathcal{R}})(\rho^{\prime}_k\otimes\sigma_{\mathcal{K}\mathcal{R}})(e^{i H^\prime_k\Delta t}\otimes \mathbb{I}_{\mathcal{R}})\big]-\Tr_1\big[(e^{-i H\Delta t}\otimes \mathbb{I}_{\mathcal{R}})(\rho\otimes\sigma_{\mathcal{K}\mathcal{R}})(e^{i H\Delta t}\otimes \mathbb{I}_{\mathcal{R}})\big]}_1\\
\le&\norm{\Tr_1\big[(e^{-i H^\prime_k\Delta t}\otimes \mathbb{I}_{\mathcal{R}})(\rho^{\prime}_k\otimes\sigma_{\mathcal{K}\mathcal{R}})(e^{i H^\prime_k\Delta t}\otimes \mathbb{I}_{\mathcal{R}})\big]-\Tr_1\big[(e^{-i H\Delta t}\otimes \mathbb{I}_{\mathcal{R}})(\rho^{\prime}_k\otimes\sigma_{\mathcal{K}\mathcal{R}})(e^{iH\Delta t}\otimes \mathbb{I}_{\mathcal{R}})\big]}_1\\
&+\Big\|\Tr_1\big[(e^{-i H\Delta t}\otimes \mathbb{I}_{\mathcal{R}})(\rho^{\prime}_k\otimes\sigma_{\mathcal{K}\mathcal{R}})(e^{i H\Delta t}\otimes \mathbb{I}_{\mathcal{R}})\big]-\Tr_1\big[(e^{-i H\Delta t}\otimes \mathbb{I}_{\mathcal{R}})(\rho\otimes\sigma_{\mathcal{K}\mathcal{R}})(e^{i H\Delta t}\otimes \mathbb{I}_{\mathcal{R}})\big]\Big\|_1\\
\le&\norm{(e^{-i H^\prime_k\Delta t}\otimes \mathbb{I}_{\mathcal{R}})(\rho^{\prime}_k\otimes\sigma_{\mathcal{K}\mathcal{R}})(e^{i H^\prime_k\Delta t}\otimes \mathbb{I}_{\mathcal{R}})-(e^{-i H\Delta t}\otimes \mathbb{I}_{\mathcal{R}})(\rho^{\prime}_k\otimes\sigma_{\mathcal{K}\mathcal{R}})(e^{i H\Delta t}\otimes \mathbb{I}_{\mathcal{R}})}_1\\
&+\Big\|{\Tr_1\left[(e^{-i H\Delta t}\otimes \mathbb{I}_{\mathcal{R}})\left((\rho^{\prime}_k-\rho)\otimes\sigma_{\mathcal{K}\mathcal{R}}\right)(e^{i H\Delta t}\otimes \mathbb{I}_{\mathcal{R}})\right]}\Big\|_1\\
\le&\norm{[e^{-iH^\prime_k\Delta t}]-[e^{-iH\Delta t}]}_{\diamond}+\Big\|\Tr_1\left[(e^{-i H\Delta t}\otimes \mathbb{I}_{\mathcal{R}})\left((\rho^{\prime}_k-\rho)\otimes\sigma_{\mathcal{K}\mathcal{R}}\right)(e^{i H\Delta t}\otimes \mathbb{I}_{\mathcal{R}})\right]\Big\|_1,
\end{aligned} 
\end{equation}
\end{widetext}
where the last inequality is given by the definition of the diamond norm. Consider the first term of the last line, we have
\begin{equation}\label{eq:robustness_terms_1}
\begin{aligned}
&\norm{[e^{-iH^\prime_k\Delta t}]-[e^{-iH\Delta t}]}_{\diamond}\\
\le&2\Delta t\norm{H^\prime_k-H}_{\infty}\exp\Big(\Delta t\max\left\{\norm{H^\prime_k}_{\infty},\norm{H}_{\infty}\right\}\Big)\\
\le&2\sqrt{e}\Delta t\norm{H^\prime_k-H}_{\infty},
\end{aligned}
\end{equation}
where the first inequality is given by Lemma~\ref{lemma:DiamondDistance_HamiltonianEvolution}, and the second inequality holds when
\begin{equation}
\begin{aligned}
\Delta t\max\left\{\norm{H^\prime_k}_{\infty},\norm{H}_{\infty}\right\}\in(0,0.5].
\end{aligned}
\end{equation}
The second term of the last equation of Eq.~\eqref{eq:robustness_terms} satisfies
\begin{equation}\label{eq:robustness_terms_2}
\begin{aligned}
&\Big\|\Tr_1\left[(e^{-i H\Delta t}\otimes \mathbb{I}_{\mathcal{R}})\left((\rho^{\prime}_k-\rho)\otimes\sigma_{\mathcal{K}\mathcal{R}}\right)(e^{i H\Delta t}\otimes \mathbb{I}_{\mathcal{R}})\right]\Big\|_1\\
=&\norm{\sum_{m=0}^{\infty}\frac{1}{m!}\Tr_1\left[\operatorname{ad}_{\left(-i (H\otimes \mathbb{I}_{\mathcal{R}})\Delta t\right)}^m((\rho^{\prime}_k-\rho)\otimes\sigma_{\mathcal{K}\mathcal{R}})\right]}_1\\
\le&\norm{\Tr_1\left[(\rho^{\prime}_k-\rho)\otimes\sigma_{\mathcal{K}\mathcal{R}})\right]}_1\\
&+\sum_{m=1}^{\infty}\frac{1}{m!}\norm{\operatorname{ad}_{\left(-i (H\otimes \mathbb{I}_{\mathcal{R}})\Delta t\right)}^m((\rho^{\prime}_k-\rho)\otimes\sigma_{\mathcal{K}\mathcal{R}})}_1\\
\le&\sum_{m=1}^{\infty}\frac{1}{m!}(2\Delta t\norm{H}_{\infty})^m\norm{(\rho^{\prime}_k-\rho)\otimes\sigma_{\mathcal{K}\mathcal{R}}}_1\\
=&\norm{\rho^{\prime}_k-\rho}_1\left(e^{2\Delta t\norm{H}_{\infty}}-1\right)\\
\le&4\Delta t\norm{H}_{\infty}\norm{\rho^{\prime}_k-\rho}_1,
\end{aligned}
\end{equation}
where the last inequality holds when $\Delta t\norm{H}_{\infty}\in(0,0.5]$. 

Combining Eq.~\eqref{eq:robustness_terms} \eqref{eq:robustness_terms_1} \eqref{eq:robustness_terms_2}, we can conclude that
\begin{equation}
\begin{aligned}
&\norm{\mathcal{Q}^{\prime}_{k,\Delta t}\otimes\mathcal{I}_{\mathcal{R}}(\sigma_{\mathcal{K}\mathcal{R}})-\mathcal{Q}_{\Delta t}\otimes\mathcal{I}_{\mathcal{R}}(\sigma_{\mathcal{K}\mathcal{R}})}_1\\
\le&\Delta t\left(2\sqrt{e}\norm{H^\prime_k-H}_{\infty}+4\norm{H}_{\infty}\norm{\rho^{\prime}_k-\rho}_1\right),
\end{aligned}
\end{equation}
which holds for all choices of $\mathcal{R}$ and $\sigma_{\mathcal{K}\mathcal{R}}\in D(\mathcal{K}\mathcal{R})$. Thus we have
\begin{equation}
\begin{aligned}
&\norm{\mathcal{Q}^{\prime}_{k,\Delta t}-\mathcal{Q}_{\Delta t}}_{\diamond}\\
\le&\Delta t\left(2\sqrt{e}\norm{H^\prime_k-H}_{\infty}+4\norm{H}_{\infty}\norm{\rho^{\prime}_k-\rho}_1\right).
\end{aligned}
\end{equation}
By the subadditivity property of diamond distance (Lemma~\ref{lemma:subadditivity}), we have
\begin{equation}\label{eq:Q_distance}
\begin{aligned}
&\norm{\mathcal{Q}^{\prime}_{t}-\mathcal{Q}_{t}}_{\diamond}\le\sum_{k=1}^K\norm{\mathcal{Q}^{\prime}_{k,\Delta t}-\mathcal{Q}_{\Delta t}}_{\diamond}\\
\le&t\left(2\sqrt{e}\frac{1}{K}\sum_{k=1}^K\norm{H^{\prime}_k-H}_{\infty}+4\norm{H}_{\infty}\frac{1}{K}\sum_{k=1}^K\norm{\rho^{\prime}_k-\rho}_1\right)\\
\le&4t\left(D_H+\norm{H}_{\infty}D_S\right).
\end{aligned}
\end{equation}

According to the derivation, the requirement for the validity of Eq.~\eqref{eq:Q_distance} is that
\begin{equation}
\frac{t}{K}\max\Big\{\norm{H}_{\infty},\norm{H^\prime_1}_{\infty},\cdots,\norm{H^\prime_K}_{\infty}\Big\}\le 0.5,
\end{equation}
that is, $K\ge2t\norm{H}_{\infty}$ and $K\ge2t\norm{H^\prime_k}_{\infty}$ for $k=1,\cdots,K$.

\subsection{Robustness for implementing the Partial Transposition Map}\label{app:proof_of_robustness:PT}

\begin{figure*}[!t]
\centering
\includegraphics[width=0.95\linewidth]{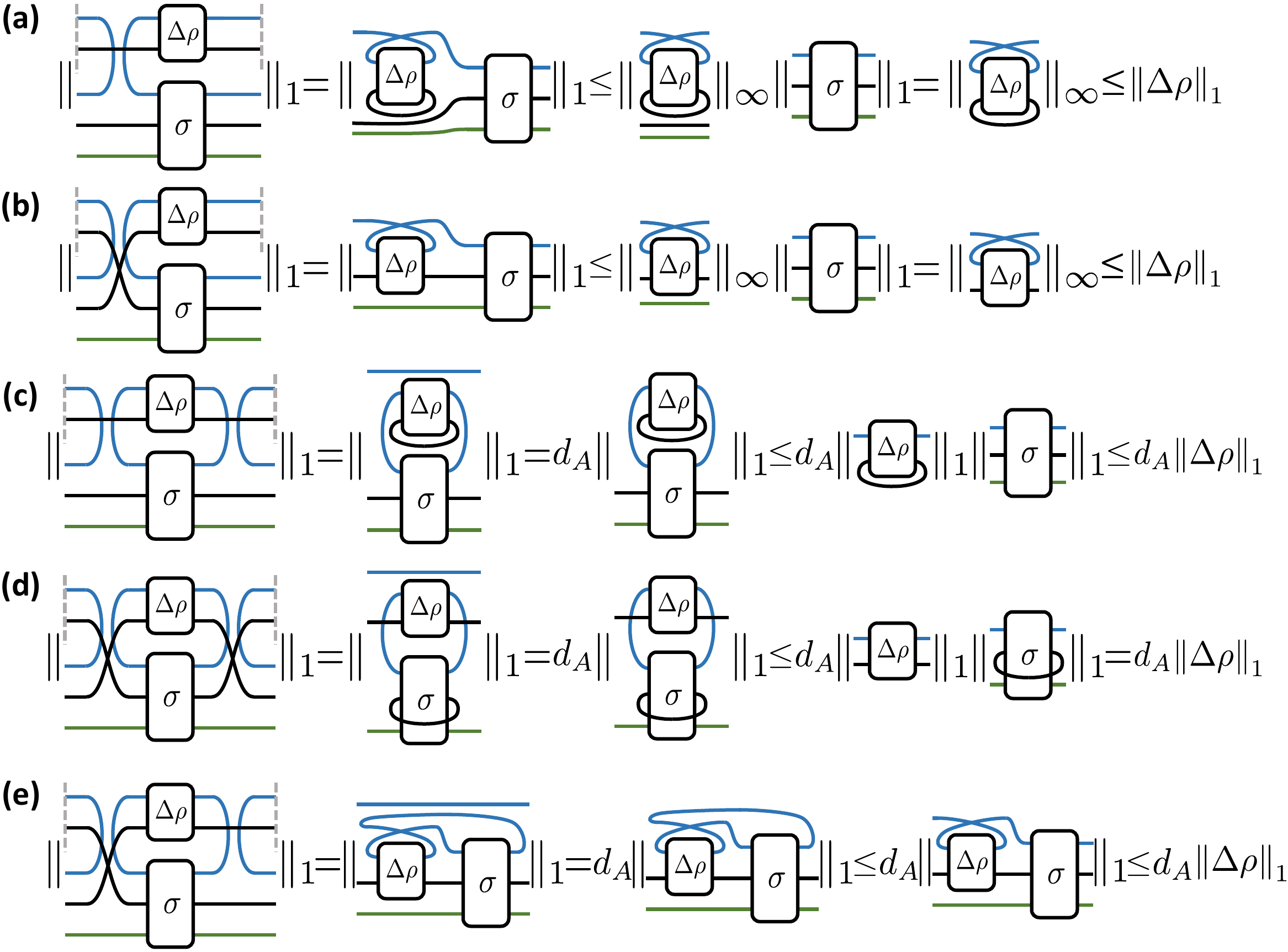}
\caption{Diagrams for the proof of Lemma~\ref{lemma:PT_robustness_single_remainder_norm}, divided into five cases according to the values of $k_1$ and $k_2$. We denote $\rho_1-\rho_2$ by $\Delta\rho$. Other cases, such as `$k_1=0$, $k_2$ odd' can be derived by taking the Hermitian conjugate of the derivations listed in the figure. (a) $k_1$ even, $k_2=0$. The last inequality is given by $\norm{\Tr_2(\Delta \rho)^{T}}_{\infty}=\norm{\Tr_2(\Delta \rho)}_{\infty}\le\norm{\Tr_2(\Delta \rho)}_{1}\le\norm{\Delta \rho}_{1}$. (b) $k_1$ odd, $k_2=0$. The last inequality holds because $\norm{(\Delta\rho)^{T_A}}_{\infty}\le\sum_i\abs{\lambda_i}\norm{\ketbra{\alpha_i}{\alpha_i}^{T_A}}_{\infty}\le\sum_i\abs{\lambda_i}=\norm{\Delta\rho}_1$, where $\Delta\rho=\sum_i\lambda_i\ketbra{\alpha_i}{\alpha_i}$ is the spectral decomposition of $\Delta\rho$. (c) $k_1$ even, $k_2$ even. The first inequality is given by Lemma~\ref{lemma:convolution_norm_111}. (d) $k_1$ odd, $k_2$ odd. The inequality is given by Lemma~\ref{lemma:convolution_norm_111}. (f) $k_1$ odd, $k_2$ even. The last inequality follows the derivation in (b).}
\label{fig:PTroubstness}
\end{figure*}

For certain specific Hermitian-preserving maps, we can obtain more accurate estimations of the robustness of HME. Again we use the partial transposition map as an example to illustrate this.

\begin{proposition}\label{proposition:PT_robuseness}
Setting the Hamiltonian to be $H_P=\Phi^+_A\otimes S_B$, following the noise assumptions in Theorem~\ref{theorem:robustness}, we have
\begin{equation}
\norm{\mathcal{Q}_t^{\prime}-\mathcal{Q}_t}_{\diamond}\le4t\left(D_H+D_S\right).
\end{equation}
\end{proposition}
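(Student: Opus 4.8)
The plan is to retrace the proof of Theorem~\ref{theorem:robustness} in Appendix~\ref{app:proof_of_robustness} essentially verbatim, keeping intact the single-step decomposition of Eq.~\eqref{eq:robustness_terms} into a \emph{Hamiltonian-error} term and an \emph{input-state-error} term, and to sharpen only the latter by exploiting the algebraic structure of $H_P=\Phi_A^+\otimes S_B$. The Hamiltonian-error term $\norm{[e^{-iH_k'\Delta t}]-[e^{-iH_P\Delta t}]}_\diamond$ is map-independent and is already controlled by Lemma~\ref{lemma:DiamondDistance_HamiltonianEvolution}, giving $2\sqrt{e}\,\Delta t\norm{H_k'-H_P}_\infty$ as before; this produces the $D_H$ contribution and needs no modification. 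The entire savings --- removing the $\norm{H_P}_\infty=d_A$ multiplier that sits in front of $D_S$ in the generic bound --- must therefore come from a better estimate of the input-state-error term $\norm{\Tr_1[(e^{-iH_P\Delta t}\otimes\mathbb{I}_\mathcal{R})((\rho_k'-\rho)\otimes\sigma_{\mathcal{K}\mathcal{R}})(e^{iH_P\Delta t}\otimes\mathbb{I}_\mathcal{R})]}_1$, evaluated at the \emph{fixed} noiseless Hamiltonian $H_P$.

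Writing $\Delta\rho:=\rho_k'-\rho$, which is Hermitian and traceless but \emph{not} positive, I would establish the robustness analog of Lemma~\ref{lemma:PT_single_remainder_norm}: for nonnegative integers with $k_1+k_2\ge1$,
\[
\norm{\Tr_1\big[(H_P\otimes\mathbb{I}_\mathcal{R})^{k_1}(\Delta\rho\otimes\sigma_{\mathcal{K}\mathcal{R}})(H_P\otimes\mathbb{I}_\mathcal{R})^{k_2}\big]}_1\le d_A^{k_1+k_2-1}\norm{\Delta\rho}_1.
\]
As in the sample-complexity argument, one substitutes the closed form $H_P^k=d_A^{k-1}\Phi_A^+\otimes S_B$ ($k$ odd) or $d_A^{k-1}\Phi_A^+\otimes\mathbb{I}_B$ ($k$ even) from Eq.~\eqref{eq:H_P^k}, then splits into the five parity cases displayed in Fig.~\ref{fig:PTroubstness}. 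The even--even and odd--odd cases reduce to bounding a convolution, which forces the key technical ingredient: a positivity-free version of Lemma~\ref{lemma:convolution_norm}, namely the proposed Lemma~\ref{lemma:convolution_norm_111}, since the original identity $\norm{P*Q}_1=\Tr(P*Q)$ exploited $P*Q\ge0$ and is unavailable for the indefinite $\Delta\rho$. The remaining cases are handled by the spectral bound $\norm{(\Delta\rho)^{T_A}}_\infty\le\sum_i|\lambda_i|=\norm{\Delta\rho}_1$, obtained from the spectral decomposition of $\Delta\rho$ together with the eigenvalue range of partially transposed rank-one projectors, mirroring Eq.~\eqref{eq:PT_cost_odd_odd}.

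With Lemma~\ref{lemma:PT_robustness_single_remainder_norm} in hand the assembly is routine. Expanding $\operatorname{ad}^m$ binomially and applying the triangle inequality gives $\norm{\Tr_1[\operatorname{ad}^m_{(H_P\otimes\mathbb{I}_\mathcal{R})}(\Delta\rho\otimes\sigma_{\mathcal{K}\mathcal{R}})]}_1\le 2^m d_A^{m-1}\norm{\Delta\rho}_1$; crucially the $m=0$ term vanishes because $\Tr(\Delta\rho)=\Tr(\rho_k'-\rho)=0$, so the series starts at $m=1$:
\[
\sum_{m=1}^\infty\frac{(\Delta t)^m}{m!}2^m d_A^{m-1}\norm{\Delta\rho}_1=\frac{\norm{\Delta\rho}_1}{d_A}\big(e^{2d_A\Delta t}-1\big)\le 4\Delta t\,\norm{\Delta\rho}_1,
\]
where the last step uses $e^x-1\le 2x$ for $x=2d_A\Delta t\le1$, guaranteed by the hypothesis $K\ge 2t\norm{H_P}_\infty=2td_A$. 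The factor $d_A$ cancels exactly, which is the whole point. Summing the per-step bound $\Delta t(2\sqrt{e}\,\norm{H_k'-H_P}_\infty+4\norm{\rho_k'-\rho}_1)$ over the $K$ steps via subadditivity (Lemma~\ref{lemma:subadditivity}) and using $2\sqrt{e}<4$ yields $\norm{\mathcal{Q}_t'-\mathcal{Q}_t}_\diamond\le 4t(D_H+D_S)$. The main obstacle is the positivity-free convolution inequality Lemma~\ref{lemma:convolution_norm_111}: every other step is a near-transcription of the corresponding sample-complexity argument, but removing positivity from Lemma~\ref{lemma:convolution_norm} is where genuinely new care is required.
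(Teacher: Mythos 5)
Your proposal reproduces the paper's proof of Proposition~\ref{proposition:PT_robuseness} essentially step for step: the same single-step decomposition from Eq.~\eqref{eq:robustness_terms}, the same unmodified treatment of the Hamiltonian-error term via Lemma~\ref{lemma:DiamondDistance_HamiltonianEvolution}, the same sharpened bound on the input-state-error term through the robustness analog Lemma~\ref{lemma:PT_robustness_single_remainder_norm}, and you correctly single out the positivity-free convolution inequality (Lemma~\ref{lemma:convolution_norm_111}, proved in the paper by spectral decomposition and reduction to Lemma~\ref{lemma:convolution_norm}) as the one genuinely new ingredient. The assembly, including the vanishing $m=0$ term, the cancellation of the $d_A$ factor, and the final $2\sqrt{e}<4$ bound, matches the paper exactly.
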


To prove this proposition, we first need to present two lemmas, similar to what we have done in Appendix~\ref{app:proof_of_err:PT}.

\begin{lemma}\label{lemma:convolution_norm_111}
For $P\in L({\mathcal{H}_1\otimes\mathcal{H}_2})$ and $Q\in L({\mathcal{H}_2\otimes\mathcal{H}_3})$, we have
\begin{equation}
\norm{P*Q}_1\le\norm{P}_1\norm{Q}_1.
\end{equation}
\end{lemma}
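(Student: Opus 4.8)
The plan is to reduce the bound to the rank-one case, exploiting that the convolution $P*Q$ is bilinear in its two arguments and that the trace norm is subadditive. First I would write out singular value decompositions $P=\sum_\alpha\sigma_\alpha\ket{p_\alpha}\bra{p'_\alpha}$ and $Q=\sum_\beta\tau_\beta\ket{q_\beta}\bra{q'_\beta}$, where $\{\ket{p_\alpha}\},\{\ket{p'_\alpha}\}\subset\mathcal{H}_1\otimes\mathcal{H}_2$ and $\{\ket{q_\beta}\},\{\ket{q'_\beta}\}\subset\mathcal{H}_2\otimes\mathcal{H}_3$ are orthonormal families and $\sum_\alpha\sigma_\alpha=\norm{P}_1$, $\sum_\beta\tau_\beta=\norm{Q}_1$. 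Substituting these into the definition of the convolution and using bilinearity gives $P*Q=\sum_{\alpha,\beta}\sigma_\alpha\tau_\beta\,(\ket{p_\alpha}\bra{p'_\alpha})*(\ket{q_\beta}\bra{q'_\beta})$.

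The next step is to note that the convolution of two rank-one operators is again rank one. Writing the definition out, $(\ket{p}\bra{p'})*(\ket{q}\bra{q'})=\ket{r}\bra{r'}$, where $\ket{r}=(\mathbb{I}_{\mathcal{H}_1}\otimes\bra{\Phi^+}\otimes\mathbb{I}_{\mathcal{H}_3})(\ket{p}\otimes\ket{q})$ and $\bra{r'}=(\bra{p'}\otimes\bra{q'})(\mathbb{I}_{\mathcal{H}_1}\otimes\ket{\Phi^+}\otimes\mathbb{I}_{\mathcal{H}_3})$. Since $\norm{\ket{r}\bra{r'}}_1=\norm{r}_2\norm{r'}_2$, applying the triangle inequality for the trace norm to the above decomposition yields $\norm{P*Q}_1\le\sum_{\alpha,\beta}\sigma_\alpha\tau_\beta\,\norm{r_{\alpha\beta}}_2\norm{r'_{\alpha\beta}}_2$. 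The whole estimate therefore reduces to showing $\norm{r_{\alpha\beta}}_2\le1$ and $\norm{r'_{\alpha\beta}}_2\le1$ whenever the four vectors are unit vectors, after which $\norm{P*Q}_1\le\big(\sum_\alpha\sigma_\alpha\big)\big(\sum_\beta\tau_\beta\big)=\norm{P}_1\norm{Q}_1$ follows at once.

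The hard part — really the only substantive step — is this norm bound on the contracted vectors, and the key observation is that contracting against $\ket{\Phi^+}=\sum_m\ket{m}\ket{m}$ turns the convolution into ordinary matrix multiplication. Concretely, for unit vectors $\ket{p}=\sum_{i,m}p_{im}\ket{i}\ket{m}\in\mathcal{H}_1\otimes\mathcal{H}_2$ and $\ket{q}=\sum_{m,c}q_{mc}\ket{m}\ket{c}\in\mathcal{H}_2\otimes\mathcal{H}_3$, a direct calculation using $\bra{\Phi^+}(\ket{m}\otimes\ket{n})=\delta_{mn}$ gives the components $r_{ic}=\sum_m p_{im}q_{mc}$. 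Reshaping the coefficient arrays $(p_{im})$ and $(q_{mc})$ into matrices $\hat{p}$ and $\hat{q}$, this is exactly the matrix product $\hat{p}\hat{q}$, so that $\norm{r}_2$ equals the Hilbert–Schmidt norm $\norm{\hat{p}\hat{q}}_2$. I would then close the bound with the standard Schatten-norm inequalities $\norm{\hat{p}\hat{q}}_2\le\norm{\hat{p}}_\infty\norm{\hat{q}}_2\le\norm{\hat{p}}_2\norm{\hat{q}}_2$, using that $\norm{\hat{p}}_2=\norm{p}_2=1$ and $\norm{\hat{q}}_2=\norm{q}_2=1$ because the Hilbert–Schmidt norm of the matricization coincides with the Euclidean norm of the original unit vector. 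The identical argument (or simply taking adjoints) handles $\norm{r'}_2\le1$, and assembling the pieces completes the proof. I expect the only subtlety to be bookkeeping the tensor-factor orderings of the two copies of $\mathcal{H}_2$ in the definition of the convolution; everything else is routine once the rank-one contraction is identified as a matrix product.
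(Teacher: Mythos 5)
Your proof is correct, but it follows a genuinely different route from the paper's. The paper proves this lemma by taking \emph{spectral} decompositions $P=\sum_i\lambda_i\ketbra{\alpha_i}{\alpha_i}$, $Q=\sum_j\mu_j\ketbra{\beta_j}{\beta_j}$ and then invoking its earlier Lemma~\ref{lemma:convolution_norm} (the H\"older-type bound $\norm{P*Q}_1\le\norm{\Tr_{\mathcal{H}_1}(P)}_p\norm{\Tr_{\mathcal{H}_3}(Q)}_q$ for positive operators) on each pair of rank-one projectors to get $\norm{\ketbra{\alpha_i}{\alpha_i}*\ketbra{\beta_j}{\beta_j}}_1\le1$. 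You instead use the singular value decomposition, observe that the convolution of rank-one operators is again rank one with $\norm{\ketbra{r}{r'}}_1=\norm{r}_2\norm{r'}_2$, and reduce the contraction against $\ket{\Phi^+}$ to a matrix product of the matricized vectors, closing with $\norm{\hat p\hat q}_2\le\norm{\hat p}_\infty\norm{\hat q}_2\le\norm{\hat p}_2\norm{\hat q}_2$. Each approach has something to recommend it: the paper's is shorter once Lemma~\ref{lemma:convolution_norm} is in hand, whereas yours is self-contained (needing only standard Schatten-norm inequalities) and, notably, handles arbitrary operators directly — the spectral decomposition with orthonormal eigenvectors and $\sum_i\abs{\lambda_i}=\norm{P}_1$ used in the paper is strictly valid only for normal $P$ and $Q$ (which suffices for the paper's application to Hermitian $S\Delta\rho S$), while the SVD matches the full generality of the lemma as stated. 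Your computation of the contracted components $r_{ic}=\sum_m p_{im}q_{mc}$ and the identification with $\hat p\hat q$ are correct, so the argument goes through as planned.
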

\begin{proof}
Let $P=\sum_i\lambda_i\ketbra{\alpha_i}{\alpha_i}$ and $Q=\sum_i\mu_i\ketbra{\beta_i}{\beta_i}$ be the spectral decomposition of $P$ and $Q$ respectively. We have
\begin{equation}
\begin{aligned}
\norm{P*Q}_1\le&\sum_{i,j}\abs{\lambda_i}\cdot\abs{\mu_j}\cdot\norm{\ketbra{\alpha_i}{\alpha_i}*\ketbra{\beta_j}{\beta_j}}_1\\
\le&\sum_{i,j}\abs{\lambda _i}\cdot\abs{\mu_j}=\norm{P}_1\norm{Q}_1,
\end{aligned}
\end{equation}
where the second inequality can be derived by Lemma~\ref{lemma:convolution_norm}.
\end{proof}

\begin{lemma}\label{lemma:PT_robustness_single_remainder_norm}
For non-negative integers $k_1$ and $k_2$ such that $k_1+k_2\ge 1$, we have
\begin{equation}
\begin{aligned}
&\norm{\Tr_1\left[\left(H_P\otimes \mathbb{I}_{\mathcal{R}}\right)^{k_1}((\rho_1-\rho_2)\otimes\sigma_{\mathcal{K}\mathcal{R}})\left(H_P\otimes \mathbb{I}_{\mathcal{R}}\right)^{k_2})\right]}_1\\
&\le d_A^{k_1+k_2-1}\norm{\rho_1-\rho_2}_1
\end{aligned}
\end{equation}
for any quantum states $\rho_1$, $\rho_2$ and $\sigma_{\mathcal{K}\mathcal{R}}$.
\end{lemma}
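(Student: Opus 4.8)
The plan is to prove this lemma by closely paralleling the proof of Lemma~\ref{lemma:PT_single_remainder_norm}, treating $\Delta\rho := \rho_1 - \rho_2$ in place of the single density matrix $\rho$. The starting point is the explicit expression for powers of the Hamiltonian in Eq.~\eqref{eq:H_P^k}, which shows that $H_P^{k}$ collapses to $d_A^{k-1}\Phi^+_A$ tensored with either $\mathbb{I}_B$ (for even $k$) or $S_B$ (for odd $k$). Substituting these into $\left(H_P\otimes\mathbb{I}_{\mathcal{R}}\right)^{k_1}(\Delta\rho\otimes\sigma_{\mathcal{K}\mathcal{R}})\left(H_P\otimes\mathbb{I}_{\mathcal{R}}\right)^{k_2}$ and applying $\Tr_1$ reduces the estimate to a short list of cases distinguished by the parities of $k_1$ and $k_2$, together with the boundary situations $k_1=0$ or $k_2=0$, organized exactly as in Fig.~\ref{fig:PTroubstness}. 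The prefactors $d_A^{k_1-1}$ and $d_A^{k_2-1}$ from the two Hamiltonian powers, together with the single extra factor of $d_A$ produced when two copies of $\Phi^+_A$ are contracted through the partial trace (present precisely when both $k_1,k_2\ge 1$), assemble into the target $d_A^{k_1+k_2-1}$.

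The crucial difference from Lemma~\ref{lemma:PT_single_remainder_norm} is that $\Delta\rho$ is Hermitian but no longer positive, so the positivity-based estimate of Lemma~\ref{lemma:convolution_norm} is unavailable. Here I would instead invoke the just-established Lemma~\ref{lemma:convolution_norm_111}, which bounds $\norm{P*Q}_1 \le \norm{P}_1\norm{Q}_1$ for \emph{arbitrary} operators. In the cases where both $k_1,k_2\ge 1$, the partial trace produces a convolution between a unitary conjugate of $\Delta\rho$ and a reduced operator of $\sigma_{\mathcal{K}\mathcal{R}}$ on the $B$-type systems; applying Lemma~\ref{lemma:convolution_norm_111} splits this into a factor $\norm{\Delta\rho}_1$ (by unitary invariance of the trace norm) times a factor $\norm{\Tr_B(\sigma_{\mathcal{K}\mathcal{R}})}_1 = 1$ (a reduced state has unit trace norm), which yields precisely the claimed $d_A^{k_1+k_2-1}\norm{\Delta\rho}_1$.

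For the boundary cases ($k_1=0$ or $k_2=0$) I would use two elementary scalar bounds, both reducing to $\norm{\Delta\rho}_1 = \norm{\rho_1-\rho_2}_1$. When the exponent on the single nonzero side is even, so that $H_P^k$ carries $\mathbb{I}_B$, the partial trace leaves $\Tr_B(\Delta\rho)^T$, whose operator norm satisfies $\norm{\Tr_B(\Delta\rho)^T}_\infty = \norm{\Tr_B(\Delta\rho)}_\infty \le \norm{\Tr_B(\Delta\rho)}_1 \le \norm{\Delta\rho}_1$, using that the operator norm is dominated by the trace norm and that the trace norm is non-increasing under partial trace. When the exponent is odd, so that $H_P^k$ carries $S_B$, a partial transposition appears and I would write the spectral decomposition $\Delta\rho = \sum_i\lambda_i\ketbra{\alpha_i}{\alpha_i}$ and use that each $\ketbra{\alpha_i}{\alpha_i}^{T_A}$ has eigenvalues in $[-1/2,1]$, giving $\norm{(\Delta\rho)^{T_A}}_\infty \le \sum_i\abs{\lambda_i} = \norm{\Delta\rho}_1$. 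The remaining mixed-parity cases follow from these two bounds together with Hermitian-conjugation symmetry, as laid out in Fig.~\ref{fig:PTroubstness}.

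The main obstacle is bookkeeping rather than any conceptual subtlety: in each parity case one must correctly identify how many factors of $d_A$ the contraction produces and to which reduced operator Lemma~\ref{lemma:convolution_norm_111} is applied, so that the powers of $d_A$ accumulate to exactly $d_A^{k_1+k_2-1}$ while the operator- and trace-norm factors combine to $\norm{\rho_1-\rho_2}_1$ with no spurious dimensional dependence. The tensor-network diagrams of Fig.~\ref{fig:PTroubstness} are the cleanest way to verify this contraction counting.
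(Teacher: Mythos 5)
Your proposal matches the paper's proof essentially step for step: the same reduction via Eq.~\eqref{eq:H_P^k} to a parity case analysis, the same substitution of Lemma~\ref{lemma:convolution_norm_111} for the positivity-dependent Lemma~\ref{lemma:convolution_norm} since $\rho_1-\rho_2$ is no longer positive, and the same boundary-case bounds via $\norm{\Tr_B(\Delta\rho)^T}_\infty\le\norm{\Delta\rho}_1$ and the partial-transposition eigenvalue estimate. Your bookkeeping of the $d_A$ factors is also correct, so this is the paper's argument.
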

\begin{proof}
Denote $\rho_1-\rho_2$ by $\Delta\rho$ for brevity. By Eq.~\eqref{eq:H_P^k}, when $k_1$ and $k_2$ are odd, we have
\begin{equation}
\begin{aligned}
&\left(H_P\otimes \mathbb{I}_{\mathcal{R}}\right)^{k_1}(\Delta\rho\otimes\sigma_{\mathcal{K}\mathcal{R}})\left(H_P\otimes \mathbb{I}_{\mathcal{R}}\right)^{k_2}\\
=&d_A^{k_1+k_2-2}(\Phi_A^+\otimes S_B\otimes \mathbb{I}_{\mathcal{R}})(\Delta\rho\otimes\sigma_{\mathcal{K}\mathcal{R}})(\Phi_A^+\otimes S_B\otimes \mathbb{I}_{\mathcal{R}}).
\end{aligned}
\end{equation}
Taking partial trace on the first system and estimating the trace norm, we have
\begin{equation}\label{eq:PT_robustness_odd_odd}
\begin{aligned}
&\norm{\Tr_1\left[(\Phi_A^+\otimes S_B\otimes \mathbb{I}_{\mathcal{R}})(\Delta\rho\otimes\sigma_{\mathcal{K}\mathcal{R}})(\Phi_A^+\otimes S_B\otimes \mathbb{I}_{\mathcal{R}})\right]}_1\\
&=\norm{\mathbb{I}_{d_A}\otimes[S\Delta\rho S*\Tr_B(\sigma_{\mathcal{K}\mathcal{R}})]}_1\\
&=d_A\norm{[S\Delta\rho S*\Tr_B(\sigma_{\mathcal{K}\mathcal{R}})]}_1\\
&\le d_A\norm{S\Delta\rho S}_{1}\norm{\Tr_{B}(\sigma)}_{1}\le d_A\norm{\Delta\rho}_1,
\end{aligned}
\end{equation}
where the symbol $S$ denotes the SWAP operator between the two systems on which the density matrix $\rho$ acts. The tensor network derivation of this expression can be found in Fig.~\ref{fig:PTroubstness}(d). The proofs of the other cases follow a similar approach and are presented in Fig.~\ref{fig:PTroubstness}.
\end{proof}

\begin{proof}[Proof of Proposition~\ref{proposition:PT_robuseness}]
The second term of the last
equation in Eq.~\eqref{eq:robustness_terms} satisfies
\begin{equation}\label{eq:robustness_terms_2_PT}
\begin{aligned}
&\Big\|\Tr_1\left[(e^{-i H_P\Delta t}\otimes \mathbb{I}_{\mathcal{R}})\left((\rho^{\prime}_k-\rho)\otimes\sigma_{\mathcal{K}\mathcal{R}}\right)(e^{i H_P\Delta t}\otimes \mathbb{I}_{\mathcal{R}})\right]\Big\|_1\\
&\le\sum_{m=1}^{\infty}\frac{1}{m!}\norm{\Tr_1\left[\operatorname{ad}_{\left(-i (H_P\otimes \mathbb{I}_{\mathcal{R}})\Delta t\right)}^m((\rho^{\prime}_k-\rho)\otimes\sigma_{\mathcal{K}\mathcal{R}})\right]}_1\\
&\le\sum_{m=1}^{\infty}\frac{1}{m!}(\Delta t)^m\norm{\Tr_1\left[\operatorname{ad}_{(H_P\otimes \mathbb{I}_{\mathcal{R}})}^m((\rho^{\prime}_k-\rho)\otimes\sigma_{\mathcal{K}\mathcal{R}})\right]}_1\\
&\le\sum_{m=1}^{\infty}\frac{1}{m!}(2\Delta t)^md_A^{m-1}\norm{\rho^{\prime}_k-\rho}_1\\
&=\frac{1}{d_A}\norm{\rho^{\prime}_k-\rho}_1\left(e^{2d_A\Delta t}-1\right)\\
&\le4\Delta t\norm{\rho^{\prime}_k-\rho}_1,
\end{aligned}
\end{equation}
where the third inequality is given by Lemma~\ref{lemma:PT_robustness_single_remainder_norm}, and the last inequality holds when $d_A\Delta t\in(0,0.5]$.

Combining Eq.~\eqref{eq:robustness_terms}, \eqref{eq:robustness_terms_1} and \eqref{eq:robustness_terms_2_PT}, we obtain
\begin{equation}
\begin{aligned}
&\norm{\mathcal{Q}^{\prime}_{k,\Delta t}\otimes\mathcal{I}_{\mathcal{R}}(\sigma_{\mathcal{K}\mathcal{R}})-\mathcal{Q}_{\Delta t}\otimes\mathcal{I}_{\mathcal{R}}(\sigma_{\mathcal{K}\mathcal{R}})}_1\\
\le&\Delta t\left(2\sqrt{e}\norm{H^\prime_k-H}_{\infty}+4\norm{\rho^{\prime}_k-\rho}_1\right),
\end{aligned}
\end{equation}
which holds for arbitrary $\mathcal{R}$ and $\sigma_{\mathcal{K}\mathcal{R}}\in D(\mathcal{K}\mathcal{R})$. Thus by the definition and the subadditivity property of diamond distance, we have
\begin{equation}
\begin{aligned}
&\norm{\mathcal{Q}^{\prime}_{t}-\mathcal{Q}_{t}}_{\diamond}\le\sum_{k=1}^K\norm{\mathcal{Q}^{\prime}_{k,\Delta t}-\mathcal{Q}_{\Delta t}}_{\diamond}\\
\le&t\left(2\sqrt{e}\frac{1}{K}\sum_{k=1}^K\norm{H^{\prime}_k-H}_{\infty}+4\frac{1}{K}\sum_{k=1}^K\norm{\rho^{\prime}_k-\rho}_1\right)\\
\le&4t\left(D_H+D_S\right).
\end{aligned}
\end{equation}

\end{proof}

\section{Analysis of Optimality}

\subsection{Proof of Theorem~\ref{theorem:LowerBound}}\label{app:proof_of_LB}

We will utilize the following two lemmas in our proof.

\begin{lemma}[Holevo-Helstrom theorem \cite{Holevo1973decision,Helstrom1969detection}]\label{lemma:Holevo-Helstrom}
Given two states, $\rho_0,\rho_1\in D(\mathcal{H})$, for any two-outcomes POVM $\{M_0,M_1\}$, we have
\begin{equation}\label{eq:Holevo-Helstrom}
\frac{1}{2}\Tr(M_0\rho_0)+\frac{1}{2}\Tr(M_1\rho_1)\le\frac{1}{2}+\frac{1}{4}\norm{\rho_0-\rho_1}_1.
\end{equation}
Furthermore, equality can be achieved by an appropriate choice of ${M_0,M_1}$.
\end{lemma}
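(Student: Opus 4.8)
The plan is to reduce the left-hand side of Eq.~\eqref{eq:Holevo-Helstrom} to a single trace expression and then bound that expression by a spectral argument. First I would use the fact that a two-outcome POVM satisfies $M_0+M_1=\mathbb{I}$ with $0\le M_0\le\mathbb{I}$. Writing $M_1=\mathbb{I}-M_0$ and using $\Tr(\rho_1)=1$, the quantity $\frac{1}{2}\Tr(M_0\rho_0)+\frac{1}{2}\Tr(M_1\rho_1)$ becomes
\begin{equation}
\frac{1}{2}+\frac{1}{2}\Tr\big(M_0(\rho_0-\rho_1)\big).
\end{equation}
Hence it suffices to prove $\Tr(M_0\Delta)\le\frac{1}{2}\norm{\Delta}_1$, where $\Delta:=\rho_0-\rho_1$ is a traceless Hermitian operator.

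For this bound I would invoke the positive/negative part decomposition introduced in Sec.~\ref{sec:notation}, writing $\Delta=\Delta^+-\Delta^-$ with $\Delta^+,\Delta^-\ge0$ supported on orthogonal subspaces, so that $\norm{\Delta}_1=\Tr(\Delta^+)+\Tr(\Delta^-)$. The key observation is that $\Tr(\Delta)=\Tr(\rho_0)-\Tr(\rho_1)=0$, which forces $\Tr(\Delta^+)=\Tr(\Delta^-)=\frac{1}{2}\norm{\Delta}_1$. Then
\begin{equation}
\Tr(M_0\Delta)=\Tr(M_0\Delta^+)-\Tr(M_0\Delta^-)\le\Tr(M_0\Delta^+)\le\Tr(\Delta^+)=\frac{1}{2}\norm{\Delta}_1,
\end{equation}
where the first inequality uses $M_0\ge0$ and $\Delta^-\ge0$ (so $\Tr(M_0\Delta^-)\ge0$), and the second uses $M_0\le\mathbb{I}$ together with $\Delta^+\ge0$. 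Substituting back establishes Eq.~\eqref{eq:Holevo-Helstrom}.

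To show the bound is achievable, I would take $M_0$ to be the orthogonal projector onto the support of $\Delta^+$, i.e.\ the span of the eigenvectors of $\rho_0-\rho_1$ with positive eigenvalue, and set $M_1=\mathbb{I}-M_0$. With this choice $\Tr(M_0\Delta^-)=0$ by orthogonality of supports and $\Tr(M_0\Delta^+)=\Tr(\Delta^+)$, so both inequalities above saturate and the left-hand side equals exactly $\frac{1}{2}+\frac{1}{4}\norm{\rho_0-\rho_1}_1$.

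There is no serious obstacle here, as this is a classical result; the only step that requires care is the trace-zero argument that yields $\Tr(\Delta^+)=\frac{1}{2}\norm{\Delta}_1$. This is precisely what produces the factor of $\frac{1}{4}$ rather than $\frac{1}{2}$ in front of the trace norm, and omitting it would give a weaker (and for the lower-bound application, unusable) constant. I would therefore state the trace-preservation condition $\Tr(\rho_0)=\Tr(\rho_1)=1$ explicitly at the point where it is used.
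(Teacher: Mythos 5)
Your proof is correct. The paper itself does not prove this lemma---it is stated with citations to Holevo and Helstrom and used as a black box---but your argument is exactly the standard one from those references (and from Watrous's book): reduce to bounding $\Tr\big(M_0(\rho_0-\rho_1)\big)$, use the tracelessness of $\rho_0-\rho_1$ to get $\Tr(\Delta^+)=\Tr(\Delta^-)=\frac{1}{2}\norm{\Delta}_1$, and saturate with the projector onto the positive eigenspace. Every step checks out, including the two operator inequalities and the achievability claim.
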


\begin{lemma}[Channel analogue of Holevo-Helstrom theorem \cite{watrous2018theory}]\label{lemma:Holevo-Helstrom_channel}
Let $\mathcal{Q}_0,\mathcal{Q}_1\in T(\mathcal{X},\mathcal{Y})$ be two CPTP maps. For any ancillary system $\mathcal{R}$, any two-outcomes POVM $\{M_0,M_1\}$ on $\mathcal{Y}\otimes\mathcal{R}$, and any input state $\sigma\in D(\mathcal{X}\otimes\mathcal{R})$, we have
\begin{equation}\label{eq:Holevo-Helstrom_channel}
\begin{aligned}
\frac{1}{2}\Tr\big(M_0\left(\mathcal{Q}_0 \otimes \mathcal{I}_{\mathcal{R}}\right)(\sigma)\big)+\frac{1}{2}\Tr\big(M_1\left(\mathcal{Q}_1 \otimes \mathcal{I}_{\mathcal{R}}\right)(\sigma)\big) \\
\le \frac{1}{2}+\frac{1}{4}\norm{\mathcal{Q}_0- \mathcal{Q}_1}_{\diamond}.
\end{aligned}
\end{equation}
Furthermore, if $\dim(\mathcal{R})\ge\dim(\mathcal{X})$, the equality can be achieved by an appropriate choice of $\sigma$ and $\{M_0,M_1\}$.
\end{lemma}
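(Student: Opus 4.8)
The plan is to reduce the channel-discrimination quantity on the left-hand side to an ordinary state-discrimination quantity, so that the state version of the Holevo--Helstrom theorem (Lemma~\ref{lemma:Holevo-Helstrom}) applies directly, and then to absorb the resulting trace distance into the diamond norm via its definition. Concretely, for a fixed ancilla $\mathcal{R}$, input state $\sigma\in D(\mathcal{X}\otimes\mathcal{R})$, and POVM $\{M_0,M_1\}$, I would set $\rho_0:=(\mathcal{Q}_0\otimes\mathcal{I}_{\mathcal{R}})(\sigma)$ and $\rho_1:=(\mathcal{Q}_1\otimes\mathcal{I}_{\mathcal{R}})(\sigma)$. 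Both are genuine density operators on $\mathcal{Y}\otimes\mathcal{R}$ because $\mathcal{Q}_0,\mathcal{Q}_1$ are CPTP, so the left-hand side is exactly $\frac{1}{2}\Tr(M_0\rho_0)+\frac{1}{2}\Tr(M_1\rho_1)$, the success probability of discriminating the two states $\rho_0,\rho_1$ with the given measurement.

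For the inequality, applying Lemma~\ref{lemma:Holevo-Helstrom} to $\rho_0,\rho_1$ immediately bounds this success probability by $\frac{1}{2}+\frac{1}{4}\norm{\rho_0-\rho_1}_1$. It then remains to control the trace distance. By linearity of the maps, $\rho_0-\rho_1=((\mathcal{Q}_0-\mathcal{Q}_1)\otimes\mathcal{I}_{\mathcal{R}})(\sigma)$, and the definition of the diamond norm as a supremum over all reference systems and all input states gives $\norm{\rho_0-\rho_1}_1\le\norm{\mathcal{Q}_0-\mathcal{Q}_1}_{\diamond}$. Combining the two bounds yields the claimed inequality, valid for every choice of $\mathcal{R}$, $\sigma$, and $\{M_0,M_1\}$.

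For the achievability statement, I would first invoke the standard structural fact that the supremum defining $\norm{\mathcal{Q}_0-\mathcal{Q}_1}_{\diamond}$ is attained by a pure input state whose reference system may be taken of dimension $\dim(\mathcal{X})$: convexity of the trace norm lets one restrict to pure $\sigma$, and the Schmidt decomposition of a pure state on $\mathcal{X}\otimes\mathcal{R}$ has at most $\dim(\mathcal{X})$ nonzero terms, so the reference can be compressed without changing the output trace norm. Hence when $\dim(\mathcal{R})\ge\dim(\mathcal{X})$ one can pick $\sigma$ achieving $\norm{((\mathcal{Q}_0-\mathcal{Q}_1)\otimes\mathcal{I}_{\mathcal{R}})(\sigma)}_1=\norm{\mathcal{Q}_0-\mathcal{Q}_1}_{\diamond}$, turning the second inequality into an equality; the equality clause of Lemma~\ref{lemma:Holevo-Helstrom} then supplies the Helstrom measurement (projectors onto the positive and negative parts of $\rho_0-\rho_1$) that saturates the first inequality. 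I expect the only genuinely nontrivial step to be this last one---verifying that the optimal diamond-norm input can be realized on a reference of dimension $\dim(\mathcal{X})$---while the rest is a direct composition of linearity, the diamond-norm definition, and the already-established state discrimination result.
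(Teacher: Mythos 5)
Your proof is correct. The paper does not actually prove this lemma---it is stated with a citation to Watrous's book---and your argument (reduce to state discrimination via Lemma~\ref{lemma:Holevo-Helstrom}, bound the resulting trace distance by the diamond norm from its definition, and obtain achievability from attainment of the diamond-norm supremum on a pure input whose reference has dimension at most $\dim(\mathcal{X})$ together with the Helstrom measurement) is precisely the standard proof given in that reference.
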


\begin{figure}[t]
\centering
\includegraphics[width=0.48\textwidth]{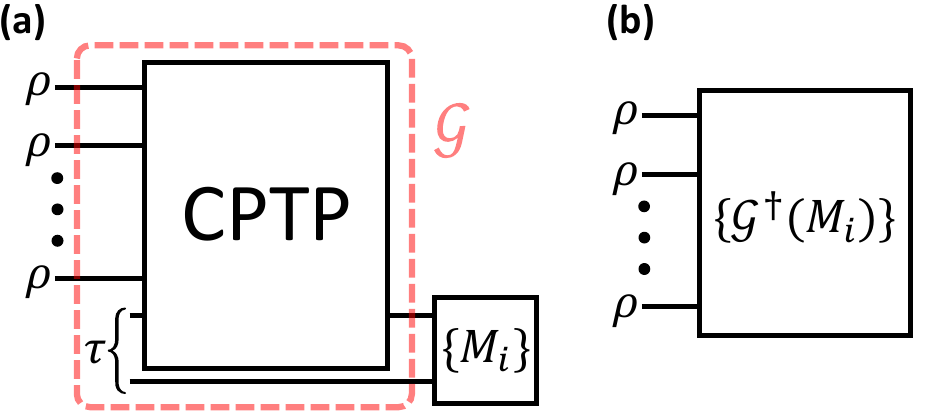}
\caption{(a) Multiple copies of $\rho$ together with a CPTP map induce a channel $\mathcal{Q}^\rho$ on the last input state. Here we demonstrate the optimal method to distinguish two different induced channels $\mathcal{Q}^{\rho_\pm}$, which is composed of first acting $\mathcal{Q}^\rho$ on a subsystem of a larger processed state $\tau$ and measuring it using a two-outcome POVM measurement. (b) The channel discrimination task shown in (a) is equivalent to a state discrimination task, as the CPTP map, the processed state $\tau$, and the POVM measurement $\{M_i\}$ can be considered as a large POVM $\{\mathcal{G}^\dagger(M_i)\}$ acting on $\rho_\pm^{\otimes K}$. Note that $\mathcal{G}^\dagger$ is a completely positive unital map, thus $\{\mathcal{G}^\dagger(M_i)\}$ is a valid POVM.}
\label{fig:Dis_channel_to_state}
\end{figure}

The logic for proving Theorem~\ref{theorem:LowerBound} has been presented in the main context, Sec.~\ref{subsec:optimality}. Hereafter, we will utilize certain quantities and functions that have been defined in Theorem~\ref{theorem:LowerBound}. First, let us choose two states given by
\begin{equation}
\rho_{\pm}:=\abs{A}\pm\lambda A=(1\pm\lambda)A^++(1\mp\lambda)A^-,
\end{equation}
where $A$ is an element of the feasible region $\mathscr{F}$ and $0<\lambda<1$. Note that $\norm{A}_1=1$ and $\Tr(A)=0$ imply $\Tr(\rho_\pm)=1$, and $0<\lambda<1$ implies $\rho_\pm\ge0$, together ensuring that $\rho_\pm$ are valid quantum states.
Denoting $A=\sum_{x}a_x\ketbra{\alpha_x}{\alpha_x}$ as the spectral decomposition of $A$, we have
\begin{equation}
\begin{aligned}
\rho_\pm=&\sum_{x:a_x>0}(1\pm\lambda)a_x\ketbra{\alpha_x}{\alpha_x}\\
&+\sum_{x:a_x<0}(1\mp\lambda)(-a_x)\ketbra{\alpha_x}{\alpha_x}.
\end{aligned}
\end{equation}
Thus, the fidelity between $\rho_+$ and $\rho_-$ is given by
\begin{equation}
\begin{aligned}
F\left(\rho_+,\rho_-\right)=&\left(\sum_x\sqrt{(1+\lambda)(1-\lambda)a_x^2}\right)^2\\
=&\left(1-\lambda^2\right)\left(\sum_x\abs{a_x}\right)^2=1-\lambda^2.
\end{aligned}
\end{equation}
Denote $R=R\left[\mathcal{N}(A)\right]$ for brevity. Take $t_{\lambda}=\frac{\pi}{2\lambda R}$. We can compute the diamond norm distance as
\begin{equation}\label{eq:diamond_distance=2}
\begin{aligned}
&\norm{\left[e^{-i\mathcal{N}(\rho_+)t_{\lambda}}\right]-\left[e^{-i\mathcal{N}(\rho_-)t_{\lambda}}\right]}_{\diamond}\\
=&\max_{\sigma\in D(\mathcal{K}\mathcal{R}),\mathcal{R}}\norm{\left(\left[e^{-i\mathcal{N}(\rho_+)t_{\lambda}}\right]-\left[e^{-i\mathcal{N}(\rho_-)t_{\lambda}}\right]\right)\otimes\mathcal{I}_{\mathcal{R}}(\sigma)}_1\\
=&\underset{\ket{u}}{\max}~\Big\|\left[e^{-i\mathcal{N}(\rho_+)t_{\lambda}}\right]\ketbra{u}{u}-\left[e^{-i\mathcal{N}(\rho_-)t_{\lambda}}\right]\ketbra{u}{u}\Big\|_1\\
=&\underset{\ket{u}}{\max}~2\left(1-\left|\bra{u}e^{i\mathcal{N}(\rho_+)t_{\lambda}}e^{-i\mathcal{N}(\rho_-)t_{\lambda}}\ket{u}\right|^2\right)^{\frac{1}{2}}\\
=&2\left(1-\underset{\ket{u}}{\min}\left|\bra{u}e^{i\pi\frac{\mathcal{N}(A)}{R}}\ket{u}\right|^2\right)^{\frac{1}{2}}=2.
\end{aligned}
\end{equation}
The second equality of Eq.~\eqref{eq:diamond_distance=2} holds because $\left[e^{-i\mathcal{N}(\rho_+)t_{\lambda}}\right]$ and $\left[e^{-i\mathcal{N}(\rho_-)t_{\lambda}}\right]$ are 
unitary channels. The fourth equality of Eq.~\eqref{eq:diamond_distance=2} holds because
\begin{equation}
\left[\mathcal{N}(\rho_+),\mathcal{N}(\rho_-)\right]=4\lambda\left[\mathcal{N}(A^+),\mathcal{N}(A^-)\right]=0,
\end{equation}
where the second equality follows from the fact that $A\in\mathscr{F}$. By setting $\ket{u}=\frac{1}{\sqrt{2}}\left(\ket{L}+\ket{S}\right)$, where $\ket{L}$ and $\ket{S}$ are the eigenstates corresponding to the largest and smallest eigenvalues of $\mathcal{N}(A)$, we can demonstrate that $\left|\bra{u}e^{i\pi\frac{\mathcal{N}(A)}{R}}\ket{u}\right|^2=0$. Hence, the last equality of Eq.~\eqref{eq:diamond_distance=2} holds.

Now for the non-physical map $\mathcal{N}$, desired accuracy $1/3$, and evolution time $t_{\lambda}$, suppose there exists a CPTP map in the general protocol shown in Fig.~\ref{fig:LowerBoundModel}(a) with $K$ copies of the inputting state $\rho$, such that for all $\rho\in D(\mathcal{H})$, the diamond distance between the induced channel $\mathcal{Q}^{\rho}$ and the target unitary channel $\left[e^{-i\mathcal{N}(\rho)t_{\lambda}}\right]$ satisfies $\norm{Q^{\rho}-\left[e^{-i\mathcal{N}(\rho)t_{\lambda}}\right]}_{\diamond}\le1/3$. By selecting $\rho_\pm$ as the inputting states for the general protocol, the diamond distance between the channels induced by them satisfies
\begin{equation}
\begin{aligned}
\norm{\mathcal{Q}^{\rho_+}-\mathcal{Q}^{\rho_-}}_{\diamond}\ge&\norm{\left[e^{-i\mathcal{N}(\rho_+)t_{\lambda}}\right]-\left[e^{-i\mathcal{N}(\rho_-)t_{\lambda}}\right]}_{\diamond}\\
&-\norm{Q^{\rho_+}-\left[e^{-i\mathcal{N}(\rho_+)t_{\lambda}}\right]}_{\diamond}\\
&-\norm{Q^{\rho_-}-\left[e^{-i\mathcal{N}(\rho_-)t_{\lambda}}\right]}_{\diamond}\ge\frac{4}{3}.
\end{aligned}
\end{equation}
According to Lemma~\ref{lemma:Holevo-Helstrom_channel}, if we take an ancillary system $\mathcal{R}$ such that $\dim(\mathcal{R})=\dim(\mathcal{K})$, then there exist a state $\tau\in D(\mathcal{K}\otimes\mathcal{R})$ and a POVM $\{M_0,M_1\}$ on $L(\mathcal{K}\otimes\mathcal{R})$ such that
\begin{equation}\label{eq:LB-SuccessProbLB}
\begin{aligned}
\frac{1}{2}\Tr\big(M_0\left(\mathcal{Q}^{\rho_+} \otimes \mathcal{I}_{\mathcal{R}}\right)(\tau)\big)+\frac{1}{2}\Tr\big(M_1\left(\mathcal{Q}^{\rho_-} \otimes \mathcal{I}_{\mathcal{R}}\right)(\tau)\big) \\
=\frac{1}{2}+\frac{1}{4}\norm{\mathcal{Q}^{\rho_+}- \mathcal{Q}^{\rho_-}}_{\diamond}\ge\frac{5}{6}.
\end{aligned}
\end{equation}
Therefore, it is possible to distinguish between $\mathcal{Q}^{\rho_+}$ and $\mathcal{Q}^{\rho_-}$, thus also $\rho_{\pm}$, with a success probability of at least $5/6$ using just a single copy of $\mathcal{Q}^{\rho_+}$ or $\mathcal{Q}^{\rho_-}$.

On the other hand, as shown in Fig.~\ref{fig:Dis_channel_to_state}(b), the CPTP channel, the processed state $\tau$, and the POVM operators can be unified into a global POVM measurement acting on the multiple input states $\rho$, labeled as $\{\mathcal{G}^\dagger(M_i)\}_{i=0,1}$, where $\mathcal{G}$ represents a new CPTP channel. Thus
\begin{equation}\label{eq:LowerBound_F_change_POVM}
\Tr\big(M_i\left(\mathcal{Q}^{\rho} \otimes \mathcal{I}_{\mathcal{R}}\right)(\tau)\big)=\Tr\left(\mathcal{G}^{\dagger}\left(M_i\right)\rho^{\otimes K}\right)
\end{equation}
holds for $i=0,1$ and all $\rho\in D(\mathcal{H})$. Combining Lemma~\ref{lemma:Holevo-Helstrom}, Eq.~\eqref{eq:LB-SuccessProbLB}, and Eq.~\eqref{eq:LowerBound_F_change_POVM}, we obtain
\begin{equation}\label{eq:LB-SuccessProbUB}
\begin{aligned}
\frac{5}{6}&\le\frac{1}{2}\Tr\left(\mathcal{G}^{\dagger}\left(M_0\right)\rho_+^{\otimes K}\right)+\frac{1}{2}\Tr\left(\mathcal{G}^{\dagger}\left(M_1\right)\rho_-^{\otimes K}\right)\\
&\le\frac{1}{2}+\frac{1}{4}\norm{\rho_+^{\otimes K}-\rho_-^{\otimes K}}_1.
\end{aligned}
\end{equation}
This implies that
\begin{equation}
\begin{aligned}
\frac{4}{3}\le\norm{\rho_+^{\otimes K}-\rho_-^{\otimes K}}_1\le2\sqrt{1-F\left(\rho_+^{\otimes K},\rho_-^{\otimes K}\right)}\\
=2\sqrt{1-F\left(\rho_+,\rho_-\right)^K}=2\sqrt{1-\left(1-\lambda^2\right)^K},
\end{aligned}
\end{equation}
where the second inequality follows from the relation between trace distance and fidelity, and the first equality is based on the property of fidelity \cite{nielsen2010quantum}. We thus have
\begin{equation}\label{eq:log_lower}
\begin{aligned}
K&\ge\log(1.8)\left(\log\frac{1}{1-\lambda^2}\right)^{-1}\\
&\ge\frac{\log(1.8)}{2}\frac{1}{\lambda^2}=\frac{2\log(1.8)}{\pi^2}R^2t_{\lambda}^2,
\end{aligned}
\end{equation}
where the second inequality holds for all $\lambda\in(0,0.8]$. Denote the constant as $C=2\log(1.8)/\pi^2$, by adjusting the value of $\lambda$ within the range of $(0,0.8]$, Eq.~\eqref{eq:log_lower} implies that for all $t\in[\frac{5\pi}{8R},\infty)$, we have
\begin{equation}\label{Eq:LB-1/3}
f_{\mathcal{N}}(1/3,t)\ge CR^2t^2.
\end{equation}

To derive the lower bound for $f_{\mathcal{N}}(\epsilon,t)$, we need to explore the properties of this function. Suppose a global circuit shown in Fig.~\ref{fig:LowerBoundModel}(a) can realize $[e^{-i\mathcal{N}(\rho)t}]$ with $\epsilon$ accuracy, meaning that $\norm{\mathcal{Q}^\rho-[e^{-i\mathcal{N}(\rho)t}]}_\diamond\le\epsilon$. According to the subadditivity of diamond distance, if we concatenate this global circuit for $m$ times, we can realize the evolution of $[e^{-i\mathcal{N}(\rho)mt}]$ with accuracy $m\epsilon$. This implies an important property of $f_{\mathcal{N}}(\epsilon,t)$:
\begin{equation}
f_{\mathcal{N}}(m\epsilon,mt)\le mf_{\mathcal{N}}(\epsilon,t).
\end{equation}

For all $\epsilon\in(0,\frac{1}{6}]$ and $t\in[\frac{15\pi\epsilon}{4R},\infty)$, we take $m=\lceil\frac{1}{6\epsilon}\rceil$, ensuring $m\epsilon\le\frac{1}{3}$ and $mt\ge\frac{5\pi}{8R}$. Thus,
\begin{equation}
\begin{aligned}
f_{\mathcal{N}}(\epsilon,t)\ge& \frac{1}{m}f_{\mathcal{N}}\left(m\epsilon,mt\right)\ge \frac{1}{m}f_{\mathcal{N}}\left(1/3,mt\right)\\
\ge&\frac{1}{m}CR^2(mt)^2=CmR^2t^2\ge\frac{C}{6}\epsilon^{-1}R^2t^2,
\end{aligned}
\end{equation}
where the third inequality is based on Eq.~\eqref{Eq:LB-1/3}. The above equation holds for all $A\in \mathscr{F}$. By taking the maximum value over the feasible region, we define
\begin{equation}
R_*:=\underset{A\in \mathscr{F}}{\max}R\left[\mathcal{N}(A)\right],
\end{equation}
and obtain
\begin{equation}
\begin{aligned}
f_{\mathcal{N}}(\epsilon,t)\ge\frac{C}{6}\epsilon^{-1}R_*^2t^2\ge\Omega\left(\epsilon^{-1}R_*^2t^2\right),
\end{aligned}
\end{equation}
which holds for all $\epsilon\in(0,\frac{1}{6}]$ and all $t\in[\frac{15\pi\epsilon}{4R_*},\infty)$. Thus, we have completed the proof of Theorem~\ref{theorem:LowerBound}.

It can be observed that Theorem~\ref{theorem:LowerBound} holds significance only if the feasible region $\mathscr{F}$ exists. The following proposition presents a sufficient condition for the non-emptiness of $\mathscr{F}$.

\begin{proposition}\label{proposition:non-empty_condition}
For a Hermitian-preserving map $\mathcal{N}\in T(\mathcal{H},\mathcal{K})$, suppose there exists a non-zero Hermitian matrix $P\in L(\mathcal{H})$ such that $\mathcal{N}(P)$ commutes with all matrices in the set $\left\{\mathcal{N}(M):M\in L(\mathcal{H}),M^{\dagger}=M\right\}$, then the feasible region $\mathscr{F}$ defined in Theorem~\ref{theorem:LowerBound} is non-empty.
\end{proposition}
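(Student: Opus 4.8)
The plan is to exhibit an explicit element of $\mathscr{F}$ constructed from the given $P$. Among the four defining constraints in Eq.~\eqref{eq:feasible region}, being Hermitian, traceless, and of unit trace norm can always be arranged by subtracting a suitable term and rescaling, so the only genuine requirement is the commutation $[\mathcal{N}(A^+),\mathcal{N}(A^-)]=0$. My strategy is therefore to engineer $A$ so that its positive and negative parts are, up to positive scalars, orthogonally supported pieces cut out of $P$, and then to inherit the commutation from the centrality of $\mathcal{N}(P)$. The first thing I would establish is a splitting lemma: \emph{if $P=P_1-P_2$ with $P_1,P_2$ Hermitian, then $[\mathcal{N}(P_1),\mathcal{N}(P_2)]=0$.} This follows in one line from linearity of $\mathcal{N}$ and the hypothesis, since $P_1$ is Hermitian forces $[\mathcal{N}(P),\mathcal{N}(P_1)]=0$, and expanding $\mathcal{N}(P)=\mathcal{N}(P_1)-\mathcal{N}(P_2)$ then gives $[\mathcal{N}(P_2),\mathcal{N}(P_1)]=0$.

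For the generic case, suppose $P$ is indefinite, so that both $P^+\neq0$ and $P^-\neq0$. I would set $A=\Tr(P^-)\,P^+-\Tr(P^+)\,P^-$. Because $P^+$ and $P^-$ carry positive coefficients and have orthogonal supports, the positive and negative parts of $A$ are exactly $A^+=\Tr(P^-)P^+$ and $A^-=\Tr(P^+)P^-$; applying the lemma with $P_1=P^+$ and $P_2=P^-$ yields $[\mathcal{N}(A^+),\mathcal{N}(A^-)]=0$. By construction $\Tr(A)=0$ and $\norm{A}_1=2\Tr(P^+)\Tr(P^-)>0$, so $A/\norm{A}_1\in\mathscr{F}$.

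It remains to treat semidefinite $P$; replacing $P$ by $-P$ (whose image $\mathcal{N}(-P)=-\mathcal{N}(P)$ is still central) I may assume $P\ge0$, and I assume $\dim\mathcal{H}\ge2$ since otherwise no nonzero traceless operator exists. If $\operatorname{rank}(P)\ge2$, I split the support of $P$ into two nonzero orthogonally supported positive pieces $P=P_1+P_2$ and take $A=\Tr(P_2)P_1-\Tr(P_1)P_2$; the lemma again delivers the commutation, and normalizing places $A$ in $\mathscr{F}$. If $\operatorname{rank}(P)=1$, say $P=p\ketbra{v}{v}$, then $\mathcal{N}(P)$, and hence $\mathcal{N}(\ketbra{v}{v})$, is central, so it commutes with $\mathcal{N}(B)$ for \emph{every} Hermitian $B$; choosing any unit vector $\ket{w}\perp\ket{v}$ (possible as $\dim\mathcal{H}\ge2$) and setting $A=\ketbra{v}{v}-\ketbra{w}{w}$ gives $A^+=\ketbra{v}{v}$ and $A^-=\ketbra{w}{w}$ with automatically commuting images, again landing in $\mathscr{F}$ after normalization.

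The main obstacle is precisely the non-degeneracy required for normalization: the commutation is essentially free once one part of $A$ is cut from $P$, but one must guarantee that neither $A^+$ nor $A^-$ vanishes, so that $\norm{A}_1\neq0$ and $A$ can be rescaled to unit trace norm. This is exactly why the semidefinite, and especially the rank-one, situations demand the separate treatments above and why the hypothesis should be supplemented by the harmless assumption $\dim\mathcal{H}\ge2$. A secondary point requiring care is confirming that the operators I designate as $A^+$ and $A^-$ are genuinely the positive and negative parts of $A$; this is ensured throughout by insisting on positive coefficients and orthogonal supports, so that the two pieces occupy disjoint spectral subspaces of $A$.
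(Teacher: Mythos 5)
Your proof is correct and follows essentially the same route as the paper: a case split on whether $P$ is indefinite, semidefinite of rank at least two, or rank one, in each case building $A$ as a normalized difference of two orthogonally supported positive pieces whose images under $\mathcal{N}$ commute (your choices of $A$ coincide with the paper's after normalization). Your explicit splitting lemma just makes precise the commutation step the paper leaves as ``one can check,'' and your remark that $\dim\mathcal{H}\ge2$ is needed is a fair, if minor, sharpening.
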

\begin{proof}
If $\rank P=1$, we can select a positive matrix $Q\in L(\mathcal{H})$ whose non-zero eigenstates are orthogonal to the non-zero eigenstate of $P$, one can check that 
\begin{equation}
A=\frac{\abs{P}}{2\norm{P}_1}-\frac{Q}{2\norm{Q}_1}
\end{equation}
is an element of $\mathscr{F}$.

For the case where $\rank P>1$ and $P\ge0$ ($P\le0$), we can decompose $P=P_1+P_2$ using two non-zero matrices $P_1$ and $P_2$, where $P_1,P_2\ge0$ ($P_1,P_2\le0$) and have orthogonal support. One can check that
\begin{equation}
A=\frac{\abs{P_1}}{2\norm{P_1}_1}-\frac{\abs{P_2}}{2\norm{P_2}_1}
\end{equation}
is an element of $\mathscr{F}$.

If both $P^+$ and $P^-$, the positive and negative parts of $P$, are non-zero, then one can check that
\begin{equation}
A=\frac{P^+}{2\norm{P^+}_1}-\frac{P^-}{2\norm{P^-}_1}
\end{equation}
is an element of $\mathscr{F}$.
\end{proof}

Consider the following two easily satisfied conditions:
\begin{enumerate}
\item $\mathcal{N}$ is not injective,
\item $\dim(\mathcal{H})\ge\dim(\mathcal{K})$.
\end{enumerate}
If one of these conditions is satisfied, then the condition required by Proposition~\ref{proposition:non-empty_condition} is also satisfied, implying that $\mathscr{F}$ is non-empty. When $\mathcal{N}$ satisfies condition $1$, we can choose a non-zero Hermitian matrix $P\in L(\mathcal{H})$ such that $\mathcal{N}(P)=0$. As a result, $\mathcal{N}(P)$ commutes with all matrices in $L(\mathcal{K})$. On the other hand, when $\mathcal{N}$ satisfies condition $2$, we only need to consider the case when $\dim(\mathcal{H})=\dim(\mathcal{K})$ and $\mathcal{N}$ is injective, because otherwise $\mathcal{N}$ is not injective and satisfies condition $1$. In this case $\mathcal{N}$ is also surjective, so we can choose a non-zero Hermitian matrix $P\in L(\mathcal{H})$ such that $\mathcal{N}(P)=\mathbb{I}_{\mathcal{K}}$, and thus $\mathcal{N}(P)$ commutes with all matrices in $L(\mathcal{K})$.

Note that when $\mathscr{F}$ is empty, Theorem~\ref{theorem:LowerBound} will yield the trivial lower bound $f_{\mathcal{N}}(\epsilon,t)\ge 0$. However, for Hermitian-preserving maps discussed in this paper, Theorem~\ref{theorem:LowerBound} can always provide meaningful lower bounds. The most challenging condition to fulfill required by the feasible region $\mathscr{F}$ is $\left[\mathcal{N}(A^+),\mathcal{N}(A^-)\right]=0$. This condition is intended to facilitate the computation of $\bra{u}e^{i\mathcal{N}(\rho_+)t_{\lambda}}e^{-i\mathcal{N}(\rho_-)t_{\lambda}}\ket{u}$ in Eq.~\eqref{eq:diamond_distance=2}. Alternatively, more powerful mathematical tools, such as the Baker-Campbell-Hausdorff formula, can be employed to compute this quantity. Utilizing such tools has the potential to eliminate the constraint $\left[\mathcal{N}(A^+),\mathcal{N}(A^-)\right]=0$ from the feasible region $\mathscr{F}$ and derive improved lower bounds for $f_{\mathcal{N}}(\epsilon,t)$.

\subsection{Example: Inverting Local Amplitude Damping Noise Channel}\label{app:LowerBound_AD}

We now calculate a lower bound on the sample complexity for implementing the inverse channel of local amplitude damping noise. Our analysis demonstrates that even with highly global protocols, mitigating the effects of local amplitude damping noise requires an exponential sample complexity. The lower bound presented in this subsection and the upper bound provided by Theorem~\ref{theorem:cost} are asymptotically consistent for all variables: the desired accuracy $\epsilon$, the evolution time $t$, the number of qubits $n$ and the noise intensity $\gamma$.

Let $\mathcal{E}_{\gamma}$ denote the single-qubit amplitude damping noise with a damping rate $0<\gamma<1$. It can be expressed as $\mathcal{E}_{\gamma}(\sigma)=E_0\sigma E_0^{\dagger}+E_1\sigma E_1^{\dagger}$, where
\begin{equation}
\begin{aligned}
E_0&=\ketbra{0}{0}+\sqrt{1-\gamma}\ketbra{1}{1},\\
E_1&=\sqrt{\gamma}\ketbra{0}{1}.
\end{aligned}
\end{equation}
The inverse of $\mathcal{E}_{\gamma}$ acts as
\begin{equation}
\begin{aligned}
\mathcal{E}_{\gamma}^{-1}(\sigma)=&\left(\begin{array}{cc}
1&0\\
0&\frac{1}{\sqrt{1-\gamma}}
\end{array}\right)\sigma\left(\begin{array}{cc}
1&0\\
0&\frac{1}{\sqrt{1-\gamma}}
\end{array}\right)\\
&+\frac{-\gamma}{1-\gamma}\left(\begin{array}{cc}
0&1\\
0&0
\end{array}\right)\sigma\left(\begin{array}{cc}
0&0\\
1&0
\end{array}\right),
\end{aligned}
\end{equation}
Denote $\mathcal{N}_{\gamma,n}:=\left(\mathcal{E}_{\gamma}^{\otimes n}\right)^{-1}=\left(\mathcal{E}_{\gamma}^{-1}\right)^{\otimes n}$ as the inverse of the $n$-qubit local amplitude damping channel we aim to implement.

To compute the lower bound for implementing $\mathcal{N}_{\gamma,n}$, we need to find an element in its feasible region. We observe that
\begin{equation}
\begin{aligned}
\mathcal{N}_{\gamma,n}(\ketbra{0}{0}^{\otimes n})&=\ketbra{0}{0}^{\otimes n},\\
\mathcal{N}_{\gamma,n}(\ketbra{1}{1}^{\otimes n})&=\left(\frac{-\gamma}{1-\gamma}\ketbra{0}{0}+\frac{1}{1-\gamma}\ketbra{1}{1}\right)^{\otimes n},
\end{aligned}
\end{equation}
which implies that $\left[\mathcal{N}_{\gamma,n}(\ketbra{0}{0}^{\otimes n}),\mathcal{N}_{\gamma,n}(\ketbra{1}{1}^{\otimes n})\right]=0$. One can show that $A_n:=\frac{1}{2}\ketbra{1}{1}^{\otimes n}-\frac{1}{2}\ketbra{0}{0}^{\otimes n}$ is an element of the feasible region $\mathscr{F}$. Acting $\mathcal{N}_{\gamma,n}$ on $A_n$, we obtain
\begin{equation}
\begin{aligned}
&\mathcal{N}_{\gamma,n}\left(A_n\right)\\
=&\frac{1}{2}\sum_{s\in\{0,1\}^n}\left(\frac{-\gamma}{1-\gamma}\right)^{n-\abs{s}}\left(\frac{1}{1-\gamma}\right)^{\abs{s}}\ketbra{s}{s}-\frac{1}{2}\ketbra{0}{0}^{\otimes n}\\
=&\frac{1}{2}\left(\frac{1}{1-\gamma}\right)^n\ketbra{1}{1}^{\otimes n}+\left[\frac{1}{2}\left(\frac{-\gamma}{1-\gamma}\right)^n-\frac{1}{2}\right]\ketbra{0}{0}^{\otimes n}\\
&+\cdots,
\end{aligned}
\end{equation}
where $\abs{s}$ denote the number of $1$'s in the string $s$. Since $0<\gamma<1$, we have
\begin{equation}
\begin{aligned}
R_*\ge& R\left[\mathcal{N}_{\gamma,n}\left(A_n\right)\right]\\
\ge&\frac{1}{2}\left(\frac{1}{1-\gamma}\right)^n-\left[\frac{1}{2}\left(\frac{-\gamma}{1-\gamma}\right)^n-\frac{1}{2}\right]\\
\ge&\Omega\left(\left(\frac{1}{1-\gamma}\right)^n\right),
\end{aligned}
\end{equation}
which implies
\begin{equation}
f_{\mathcal{N}_{\gamma,n}}(\epsilon,t)\ge\Omega\left(\epsilon^{-1}\left(\frac{1}{1-\gamma}\right)^{2n}t^2\right).
\end{equation}

Through some straightforward calculations, we find that
\begin{equation}
\norm{\Lambda_{\mathcal{N}_{\gamma,n}}^{T_1}}_{\infty}=\left(\frac{1}{1-\gamma}\right)^n.
\end{equation}
Therefore, according to Theorem~\ref{theorem:cost}, HME is a protocol with sample complexity $\mathcal{O}\left(\epsilon^{-1}\left(\frac{1}{1-\gamma}\right)^{2n}t^2\right)$. This result demonstrates that HME is an asymptotically optimal protocol for realizing $e^{-i\mathcal{N}_{\gamma,n}(\rho)t}$.

\section{Entanglement Detection and
Quantification}

\subsection{Proof of Theorem~\ref{theorem:ED_HME+Reduction}}\label{app:ED_HME+Reduction}

Let $\mathscr{P}_0=\{\psi_A\otimes\psi_B\}$ and $\mathscr{P}_1=\{\psi_{AB}\}$ be the sets of product pure states and global pure states, respectively. An unknown state $\rho$ is selected through the following procedure: first, one randomly chooses $\mathscr{P}_0$ or $\mathscr{P}_1$ with equal probabilities, and then randomly selects a state from the chosen set according to the Haar measure. Our task is to determine which set the unknown state belongs to, using multiple copies of this state.

To accomplish this state discrimination task, we utilize $K+1$ copies of $\rho$ to implement the circuit depicted in Fig.~\ref{fig:ent_det}(b) for $K$ sequential steps. Let $\mathcal{Q}^{\rho}$ be the channel constructed by HME to approximate $\left[\mathrm{C}\text{-}e^{-i\rho^{R_A}\pi}\right]$. By using $K=\mathcal{O}(1)$ copies of $\rho$, we ensure that $\norm{\mathcal{Q}^{\rho}-\left[\mathrm{C}\text{-}e^{-i\rho^{R_A}\pi}\right]}_{\diamond}\le1/12$. Upon measuring the ancilla qubit, if the outcome is $\ket{i}$ $(i=0,1)$, we infer that $\rho$ belongs to the set $\mathscr{P}_i$.

Let us first consider the case when $\rho=\psi_{AB}\in \mathscr{P}_1$. Define
\begin{equation}
\begin{aligned}
&\Sigma:=\left[\mathrm{C}\text{-}e^{-i(-\psi_{AB})\pi}\right]\left(\ketbra{+}{+}\otimes\psi_{AB}\right),\\
&\Sigma^{\prime}:=\mathcal{Q}^{\psi_{AB}}(\ketbra{+}{+}\otimes\psi_{AB}).
\end{aligned}
\end{equation}
As $e^{i\psi_{AB}\pi}=\mathbb{I}-2\psi_{AB}$, $\Sigma$ can be written in a block matrix form as
\begin{equation}
\begin{aligned}
\Sigma=&\left[\mathrm{C}\text{-}(\mathbb{I}-2\psi_{AB})\right](\ketbra{+}{+}\otimes\psi_{AB})\\
=&\frac{1}{2}
\begin{bmatrix}
\psi_{AB} & -\psi_{AB} \\
-\psi_{AB} & \psi_{AB}
\end{bmatrix}.
\end{aligned}
\end{equation}
Upon measuring the ancilla qubit of $\Sigma$ in the Pauli-$X$ basis, we have
\begin{equation}\label{eq:reduction_app_-_exact}
\begin{aligned}
\Tr\big[(\ketbra{-}{-}\otimes\mathbb{I}_d) \Sigma\big]=\Tr(\psi_{AB})=1,
\end{aligned}
\end{equation}
This indicates that we will certainly obtain the measurement result $\ket{1}$.

By Lemma~\ref{lemma:DiamondDistance_HamiltonianEvolution}, we have
\begin{equation}
\begin{aligned}
&\norm{\left[\mathrm{C}\text{-}e^{-i(\psi_{AB})^{R_A}\pi}\right]-\left[\mathrm{C}\text{-}e^{i\psi_{AB}\pi}\right]}_{\diamond}\\
=&\norm{\left[e^{-i\left(\ketbra{1}{1}\otimes(\psi_{AB})^{R_A}\right)\pi}\right]-\left[e^{i\left(\ketbra{1}{1}\otimes\psi_{AB}\right)\pi}\right]}_{\diamond}\\
\le&2\pi\norm{(\psi_{AB})^{R_A}+\psi_{AB}}_{\infty}\\
&\exp\Big(\pi\max\left\{\norm{(\psi_{AB})^{R_A}}_{\infty},\norm{\psi_{AB}}_{\infty}\right\}\Big)\\
\le&2\pi e^{\pi}\norm{\mathbb{I}_{A}\otimes\Tr_{A}(\psi_{AB})-\psi_{AB}+\psi_{AB}}_{\infty}\\
=&2\pi e^{\pi}\norm{\Tr_{A}(\psi_{AB})}_{\infty},
\end{aligned}
\end{equation}
where the second inequality follows from
\begin{equation}
\begin{aligned}
&\norm{(\psi_{AB})^{R_A}}_{\infty}=\norm{\mathbb{I}_{A}\otimes\Tr_{A}(\psi_{AB})-\psi_{AB}}_{\infty}\\
\le&\max\big\{\norm{\mathbb{I}_{A}\otimes\Tr_{A}(\psi_{AB})}_{\infty},\norm{\psi_{AB}}_{\infty}\big\}=1.
\end{aligned}
\end{equation}
The average purity of the reduced density matrix $\Tr_A(\psi_{AB})$ is known to satisfy \cite{Lubkin1993average}
\begin{equation}
\underset{\psi_{AB}\sim\text{Haar}}{\mathbb{E}}\Tr\left[\Tr_A(\psi_{AB})^2\right]=\frac{d_A+d_B}{d_Ad_B+1}=\frac{2\sqrt{d}}{d+1},
\end{equation}
where the last equality holds because we take $d_A=d_B=\sqrt{d}$ for simplicity. Note that
\begin{equation}
\begin{aligned}
\underset{\psi_{AB}\sim\text{Haar}}{\mathbb{E}}\norm{\Tr_A(\psi_{AB})}_{\infty}\le&\underset{\psi_{AB}\sim\text{Haar}}{\mathbb{E}}\sqrt{\Tr\left[\Tr_A(\psi_{AB})^2\right]}\\
\le&\sqrt{\underset{\psi_{AB}\sim\text{Haar}}{\mathbb{E}}\Tr\left[\Tr_A(\psi_{AB})^2\right]}, 
\end{aligned}
\end{equation}
where the second inequality follows from Jensen's inequality. Thus, we have
\begin{equation}
\underset{\psi_{AB}\sim\text{Haar}}{\mathbb{E}}\norm{\Tr_A(\psi_{AB})}_{\infty}\rightarrow0,\text{ as }d\rightarrow\infty.
\end{equation}
When $d$ is sufficiently large, by Markov's inequality, we can guarantee $\norm{\Tr_A(\psi_{AB})}_{\infty}\le(2\pi e^{\pi})^{-1}/12$ with a probability of at least $4/5$. Thus, with a probability of at least $4/5$, we have
\begin{equation}\label{eq:reduction_app_}
\begin{aligned}
\norm{\left[\mathrm{C}\text{-}e^{-i(\psi_{AB})^{R_A}\pi}\right]-\left[\mathrm{C}\text{-}e^{i\psi_{AB}\pi}\right]}_{\diamond}\le1/12.
\end{aligned}
\end{equation}
Combined with
\begin{equation}
\norm{\mathcal{Q}^{\psi_{AB}}-\left[\mathrm{C}\text{-}e^{-i(\psi_{AB})^{R_A}\pi}\right]}_{\diamond}\le1/12,
\end{equation}
we obtain
\begin{equation}\label{eq:reduction_app_-_ChannelDistance_1}
\norm{\mathcal{Q}^{\psi_{AB}}-\left[\mathrm{C}\text{-}e^{i\psi_{AB}\pi}\right]}_{\diamond}\le1/6.
\end{equation}
According to the definition of the diamond distance, we have $\norm{\Sigma^{\prime}-\Sigma}_1\le1/6$. By Lemma~\ref{lemma:expectation_tracenorm}, we know
\begin{equation}\label{eq:reduction_app_-_appro}
\begin{aligned}
\Big|\Tr\big[(\ketbra{-}{-}\otimes\mathbb{I}_d)\Sigma^{\prime}\big]-\Tr\big[(\ketbra{-}{-}\otimes\mathbb{I}_d)\Sigma\big]\Big|\le1/6.
\end{aligned}
\end{equation}
Combined with Eq.~\eqref{eq:reduction_app_-_exact}, we have $\Tr\big[(\ketbra{-}{-}\otimes\mathbb{I}_d)\Sigma^{\prime}\big]\ge5/6$. Now we can conclude that
\begin{equation}
\begin{aligned}
&\Pr\big[\text{obtain~}\ket{1}\big|\rho\in \mathscr{P}_1\big]\\
\ge&\Pr\big[\text{Eq.}~\eqref{eq:reduction_app_-_ChannelDistance_1}\text{~holds~}\big|\rho\in \mathscr{P}_1\big]\\
&\Pr\big[\text{obtain~}\ket{1}\big|\text{Eq.}~\eqref{eq:reduction_app_-_ChannelDistance_1}\text{~holds},\rho\in \mathscr{P}_1\big]\\
\ge&\frac{4}{5}\cdot\frac{5}{6}=\frac{2}{3}.
\end{aligned}
\end{equation}

On the other hand, when $\rho=\psi_A\otimes\psi_B$, we denote 
\begin{equation}
\begin{aligned}
&\Gamma:=\left[\mathrm{C}\text{-}e^{-i(\psi_A\otimes\psi_B)^{R_A}\pi}\right](\ketbra{+}{+}\otimes\psi_A\otimes\psi_B),\\
&\Gamma^{\prime}:=\mathcal{Q}^{\psi_A\otimes\psi_B}(\ketbra{+}{+}\otimes\psi_A\otimes\psi_B).
\end{aligned}
\end{equation}
It can be observed that
\begin{equation}
\begin{aligned}
e^{-i(\psi_A\otimes\psi_B)^{R_A}\pi}=e^{-i\left((\mathbb{I}_A-\psi_A)\otimes\psi_B\right)\pi}=\mathbb{I}_d-2\Pi,
\end{aligned}
\end{equation}
where $\Pi=(\mathbb{I}_A-\psi_A)\otimes\psi_B$ is a projector that satisfies $\Pi(\psi_A\otimes\psi_B)=(\psi_A\otimes\psi_B)\Pi=0$. This leads to the expression of $\Gamma$ as follows:
\begin{equation}
\Gamma=\frac{1}{2}
\begin{bmatrix}
\psi_A\otimes\psi_B & \psi_A\otimes\psi_B \\
\psi_A\otimes\psi_B & \psi_A\otimes\psi_B
\end{bmatrix},
\end{equation}
which satisfies the property
\begin{equation}\label{eq:reduction_app_+_exact}
\Tr\big[(\ketbra{+}{+}\otimes\mathbb{I}_d)\Gamma\big]=\Tr(\psi_A\otimes\psi_B)=1.
\end{equation}
Since we have
\begin{equation}
\norm{\mathcal{Q}^{\psi_A\otimes\psi_B}-\left[\mathrm{C}\text{-}e^{-i(\psi_A\otimes\psi_B)^{R_A}\pi}\right]}_{\diamond}\le1/12,
\end{equation}
it follows that
\begin{equation}\label{eq:reduction_app_+_appro}
\Big|\Tr\big[(\ketbra{+}{+}\otimes\mathbb{I}_d)\Gamma^{\prime}\big]-\Tr\big[(\ketbra{+}{+}\otimes\mathbb{I}_d)\Gamma\big]\Big|\le1/12.
\end{equation}
Combined with Eq.~\eqref{eq:reduction_app_+_exact}, we can deduce that
\begin{equation}
\Pr\big[\text{obtain~}\ket{0}\big|\rho\in \mathscr{P}_0\big]\ge\frac{11}{12}.
\end{equation}

Combining the above derivations, the probability of successfully distinguishing the sets satisfies
\begin{equation}
\begin{aligned}
\Pr\big[\text{success}\big]=&\Pr\big[\rho\in \mathscr{P}_0\big]\Pr\big[\text{obtain~}\ket{0}\big|\rho\in \mathscr{P}_0\big]\\
&+\Pr\big[\rho\in \mathscr{P}_1\big]\Pr\big[\text{obtain~}\ket{1}\big|\rho\in \mathscr{P}_1\big]\\
\ge&\frac{1}{2}\cdot\frac{11}{12}+\frac{1}{2}\cdot\frac{2}{3}>\frac{2}{3}.
\end{aligned}
\end{equation}

\subsection{HME-based Negativity Estimation Algorithm}\label{app:nega_algo}

\begin{algorithm}[H]
\caption{Negativity Estimation}\label{algo:nega}
\begin{algorithmic}[1]
\Require
The measurement accuracy $\epsilon$, the failure probability $\delta$, the system dimension $d=d_Ad_B$.
\Ensure
An estimation $\hat{N}(\rho)$ of $N(\rho)$.
\For{$i= 1 \text{\textbf{ to }} M$} 
\State Randomly sample an integer $l$ according to the probability distribution $\{p(l)=\frac{8}{\pi^2(2l-1)^2}\}_{l=1}^\infty$. 
\If{$l> L=\lceil\frac{3d}{2\pi\epsilon}+\frac{1}{2}\rceil$}
\State Set $\hat{\mathbb{X}}_i=0$.
\Else
\State Run the quantum circuit shown in Fig.~\ref{fig:ent_det}(c) to realize the $\mathrm{C}\text{-}e^{-i\rho^{T_A}t}$ operation for $t=2l-1$ up to an error of $\frac{\pi(2l-1)}{3\left(2+\log(2L-1)\right)}\frac{\epsilon}{d}$ using $K(l)$ copies of $\rho$.
\If{The measurement result of ancilla qubit is $\ket{0}$}
\State Set $\hat{\mathbb{X}}_i=-\frac{\pi}{2}d$.
\Else
\State Set $\hat{\mathbb{X}}_i=+\frac{\pi}{2}d$.
\EndIf
\EndIf
\EndFor
\State Set $\hat{\mathbb{X}}=\operatorname{\mathbf{MedianOfMeans}}(\hat{\mathbb{X}}_1,\cdots,\hat{\mathbb{X}}_M)$.
\State Output $\hat{N}(\rho)=\frac{1}{2}\left(\frac{\pi}{2}d+\hat{\mathbb{X}}-1\right)$.
\end{algorithmic}
\end{algorithm}

\subsection{Proof of Theorem~\ref{theorem:cost_Ent_Neg}}\label{app:nega_cost}

The Fourier series of $\abs{x}$ on $[-\pi,\pi]$ is given by
\begin{equation}
\abs{x}=\frac{\pi}{2}+\sum_{n=1}^{\infty}a_{2n-1}\cos((2n-1)x),
\end{equation}
where $a_{2n-1}=-\frac{4}{\pi(2n-1)^2}$. Then the trace norm of $\rho^{T_A}$ is given by
\begin{equation}\label{eq:Neg_series}
\begin{aligned}
&\norm{\rho^{T_A}}_1=\Tr(\abs{\rho^{T_A}})\\
=&\Tr\left[\frac{\pi}{2}\mathbb{I}_d+\sum_{n=1}^{\infty}a_{2n-1}\cos((2n-1)\rho^{T_A})\right]\\
=&\frac{\pi}{2}d+\sum_{n=1}^{\infty}a_{2n-1}\Tr\left[\cos((2n-1)\rho^{T_A})\right].
\end{aligned}
\end{equation}

Let $\tau^{(2n-1)}$ be the state of the ancilla qubit before the Pauli-$X$ basis measurement when the evolution time $t$ is set to $t=2n-1$ in the right circuit shown in Fig.~\ref{fig:ent_det}(c). Correspondingly, let $\widetilde{\tau}^{(2n-1)}$ be the state of the ancilla qubit before the Pauli-$X$ basis measurement in the left circuit shown in Fig.~\ref{fig:ent_det}(c). Note that
\begin{equation}
\Tr(\tau^{(2n-1)}X)=\frac{1}{d}\Tr\left[\cos((2n-1)\rho^{T_A})\right].
\end{equation}

Now, we truncate the series in Eq.~\eqref{eq:Neg_series} and consider only the first $L$
terms to avoid divergence. We estimate the absolute value of the sum of the discarded terms as follows:
\begin{equation}
\begin{aligned}
&\abs{\sum_{n=L+1}^{\infty}a_{2n-1}\Tr\left[\cos((2n-1)\rho^{T_A})\right]}\\
\le&\sum_{n=L+1}^{\infty}\abs{a_{2n-1}}\Big|\Tr\left[\cos((2n-1)\rho^{T_A})\right]\Big|\\
\le& \frac{4d}{\pi}\sum_{n=L+1}^{\infty}\frac{1}{(2n-1)^2}\\
<& \frac{4d}{\pi}\int_{L}^{\infty}\frac{1}{(2x-1)^2}\,dx\\
=&\frac{2d}{(2L-1)\pi}.
\end{aligned}
\end{equation}

To ensure that $\frac{2d}{(2L-1)\pi}\le\epsilon_1$, we choose $L=\lceil\frac{1}{\pi}d\epsilon_1^{-1}+\frac{1}{2}\rceil$. Here, $\epsilon_1$ represents the truncation error threshold, which we will specify later.

Our goal is to estimate the sum
\begin{equation}\label{eq:negativity_first_N_terms}
\begin{aligned}
&\sum_{n=1}^{L}a_{2n-1}\Tr\left[\cos((2n-1)\rho^{T_A})\right]
\end{aligned}
\end{equation}
up to an error of $\epsilon_2$. To achieve this, we introduce a new random variable $\mathbb{X}$, with the following probabilities for its values:
\begin{itemize}
\item $-\frac{\pi}{2}d$ with probability
\begin{equation}
\sum_{n=1}^L\frac{8}{\pi^2(2n-1)^2}\bra{+}\tau^{(2n-1)}\ket{+},
\end{equation}
\item $\frac{\pi}{2}d$ with probability
\begin{equation}
\sum_{n=1}^L\frac{8}{\pi^2(2n-1)^2}\bra{-}\tau^{(2n-1)}\ket{-},
\end{equation}
\item $0$ with probability
\begin{equation}
1-\sum_{n=L+1}^{\infty}\frac{8}{\pi^2(2n-1)^2}.
\end{equation}
\end{itemize}
The random variable $\mathbb{X}$ serves as a good estimator for Eq.~\eqref{eq:negativity_first_N_terms} because its expected value is
\begin{equation}
\begin{aligned}
\mathbb{E}\left[\mathbb{X}\right]=&-\frac{\pi}{2}d\left(\sum_{n=1}^L\frac{8}{\pi^2(2n-1)^2}\bra{+}\tau^{(2n-1)}\ket{+}\right.\\
&\left.-\sum_{n=1}^L\frac{8}{\pi^2(2n-1)^2}\bra{-}\tau^{(2n-1)}\ket{-}\right)\\
=&-\frac{\pi}{2}d\sum_{n=1}^L\frac{8}{\pi^2(2n-1)^2}\Tr\left(\tau^{(2n-1)}X\right)\\
=&\sum_{n=1}^{L}a_{2n-1}\Tr\left[\cos((2n-1)\rho^{T_A})\right].
\end{aligned}
\end{equation}

Correspondingly, we introduce a random variable $\widetilde{\mathbb{X}}$, with the following probabilities for its values:
\begin{itemize}
\item $-\frac{\pi}{2}d$ with probability
\begin{equation}
\sum_{n=1}^L\frac{8}{\pi^2(2n-1)^2}\bra{+}\widetilde{\tau}^{(2n-1)}\ket{+},
\end{equation}
\item $\frac{\pi}{2}d$ with probability
\begin{equation}
\sum_{n=1}^L\frac{8}{\pi^2(2n-1)^2}\bra{-}\widetilde{\tau}^{(2n-1)}\ket{-},
\end{equation}
\item $0$ with probability
\begin{equation}
1-\sum_{n=L+1}^{\infty}\frac{8}{\pi^2(2n-1)^2}.
\end{equation}
\end{itemize}
The bias between the random variables $\mathbb{X}$ and $\widetilde{\mathbb{X}}$ is given by
\begin{equation}
\begin{aligned}
&\abs{\mathbb{E}[\widetilde{\mathbb{X}}]-\mathbb{E}[\mathbb{X}]}\\
\le&\sum_{n=1}^L\frac{4d}{\pi(2n-1)^2}\abs{\Tr\left(\widetilde{\tau}^{(2n-1)}X\right)-\Tr\left(\tau^{(2n-1)}X\right)}.
\end{aligned}
\end{equation}

To sample the random variable $\widetilde{\mathbb{X}}$, we first sample a random integer $l$ according to the probability distribution
\begin{equation}
\left\{p(l)=\frac{8}{\pi^2(2l-1)^2}\right\}_{l=1}^\infty.
\end{equation}
If $l\ge L+1$, we set $\tilde{\mathbb{X}}=0$. Otherwise, we use HME to generate the state $\widetilde{\tau}^{(2l-1)}$, and then perform a measurement in the Pauli-$X$ basis. Set $\widetilde{\mathbb{X}}=-\frac{\pi}{2}d$ if the measurement outcome is $\ket{+}$, set $\widetilde{\mathbb{X}}=+\frac{\pi}{2}d$ if the measurement outcome is $\ket{-}$.

For $n=1,\cdots,L$, we set
\begin{equation}
e_{2n-1}:=\frac{\pi(2n-1)}{2\left(2+\log(2L-1)\right)}\frac{\epsilon_2}{d}.
\end{equation}
We use HME to approximate the controlled-$e^{-i(2n-1)\rho^{T_A}}$ operation with a precision of $e_{2n-1}$. This precision ensures that $\norm{\widetilde{\tau}^{(2n-1)}-\tau^{(2n-1)}}_1\le e_{2n-1}$. By applying Lemma~\ref{lemma:expectation_tracenorm}, we bound the absolute difference between $\Tr(\widetilde{\tau}^{(2n-1)}X)$ and $\Tr(\tau^{(2n-1)}X)$ as:
\begin{equation}
\abs{\Tr(\widetilde{\tau}^{(2n-1)}X)-\Tr(\tau^{(2n-1)}X)}\le e_{2n-1}.
\end{equation}
Therefore, the bias between $\widetilde{\mathbb{X}}$ and $\mathbb{X}$ is bounded by
\begin{equation}
\begin{aligned}
&\abs{\mathbb{E}[\widetilde{\mathbb{X}}]-\mathbb{E}[\mathbb{X}]}\le\sum_{n=1}^L\frac{4d}{\pi(2n-1)^2}e_{2n-1}\\
=&\frac{2\epsilon_2}{2+\log(2L-1)}\sum_{n=1}^L\frac{1}{2n-1}\\
<&\frac{2\epsilon_2}{2+\log(2L-1)}\left(1+\int_{1}^{L}\frac{1}{2x-1}dx\right)=\epsilon_2.
\end{aligned}
\end{equation}

We sample $\widetilde{\mathbb{X}}$ for $M$ times and obtain independent and identically distributed random variables $\widetilde{\mathbb{X}}_1,\cdots,\widetilde{\mathbb{X}}_M$. Now, let's estimate the required value of $M$ to ensure the desired accuracy and success probability. To do so, we need to estimate the variance of $\widetilde{\mathbb{X}}$, denoted as $\operatorname{Var}[\widetilde{\mathbb{X}}]$. Note that
\begin{equation}
\begin{aligned}
&\abs{\mathbb{E}[\widetilde{\mathbb{X}}]-\left(-\frac{\pi}{2}d\right)}\\
\le&\abs{\mathbb{E}[\widetilde{\mathbb{X}}]-\mathbb{E}[\mathbb{X}]}+\abs{\mathbb{E}[\mathbb{X}]-\left(-\frac{\pi}{2}d\right)}\\
\le&\epsilon_2+\abs{\norm{\rho^{T_A}}_1-\sum_{n=L+1}^{\infty}a_{2n-1}\Tr\left[\cos((2n-1)\rho^{T_A})\right]}\\
<&\epsilon_2+\norm{\rho^{T_A}}_1+\epsilon_1.
\end{aligned}
\end{equation}
Thus, we have
\begin{equation}
\begin{aligned}
&\operatorname{Var}[\widetilde{\mathbb{X}}]=\mathbb{E}[\widetilde{\mathbb{X}}^2]-\mathbb{E}[\widetilde{\mathbb{X}}]^2\\
\le&\abs{\mathbb{E}[\widetilde{\mathbb{X}}^2]-\frac{\pi^2d^2}{4}}+\abs{\mathbb{E}[\widetilde{\mathbb{X}}]^2-\frac{\pi^2d^2}{4}}\\
=&\frac{\pi^2d^2}{4}\sum_{n=L+1}^{\infty}\frac{8}{\pi^2(2n-1)^2}+\abs{\mathbb{E}[\widetilde{\mathbb{X}}]-\frac{\pi d}{2}}\abs{\mathbb{E}[\widetilde{\mathbb{X}}]+\frac{\pi d}{2}}\\
<&\frac{\pi}{2}\epsilon_1d+\pi d\left(\norm{\rho^{T_A}}_1+\epsilon_1+\epsilon_2\right)=\mathcal{O}\left(d\norm{\rho^{T_A}}_1\right),
\end{aligned}
\end{equation}
where the last inequality is given by 
\begin{equation}
\sum_{n=L+1}^{\infty}\frac{1}{(2n-1)^2}<\frac{\pi\epsilon_1}{4d},
\end{equation}
which is guaranteed by our previous choice of $L$.
Since $\operatorname{Var}[\widetilde{\mathbb{X}}]$ is upper bounded by $\mathcal{O}\left(d\norm{\rho^{T_A}}_1\right)$, we can use the following lemma to bound the number of samples required.

\begin{lemma}\label{lemma:median_of_means}
Suppose we aim to estimate the expectation value $\mathbb{E}[R]$ of a random variable $R$ with finite variance, using independently drawn samples $R_1, \cdots, R_M$ from its distribution. Then by employing the median of means estimation method and taking $M=\mathcal{O}\left(\log \left(\delta^{-1}\right)\operatorname{Var}[R]\epsilon^{-2}\right)$, we can guarantee
\begin{equation}
\Big|\operatorname{\mathbf{MedianOfMeans}}(R_1,\cdots,R_M)-\mathbb{E}[R]\Big|\le\epsilon
\end{equation}
with a probability of at least $1-\delta$.
\end{lemma}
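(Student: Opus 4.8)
The plan is to establish this via the standard \emph{boosting} argument underlying the median-of-means estimator, which upgrades the merely polynomial concentration guaranteed by a finite variance into an exponential concentration in the failure probability. First I would partition the $M$ samples into $B$ disjoint batches, each of size $n=M/B$, and form the empirical mean $\bar{R}_j$ of each batch $j=1,\dots,B$. The estimator is then the median of these batch means, so that $\operatorname{\mathbf{MedianOfMeans}}(R_1,\dots,R_M)=\operatorname{median}(\bar{R}_1,\dots,\bar{R}_B)$. The two free parameters $n$ and $B$ will be tuned separately to control the accuracy $\epsilon$ and the failure probability $\delta$, respectively, and their product recovers the claimed $M$.

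The analysis proceeds in two stages. In the first stage I would apply Chebyshev's inequality to a single batch: since $\mathbb{E}[\bar{R}_j]=\mathbb{E}[R]$ and $\operatorname{Var}[\bar{R}_j]=\operatorname{Var}[R]/n$, one has $\Pr\!\left[\,|\bar{R}_j-\mathbb{E}[R]|>\epsilon\,\right]\le \operatorname{Var}[R]/(n\epsilon^2)$. Choosing the batch size $n=\lceil 4\operatorname{Var}[R]\epsilon^{-2}\rceil=\mathcal{O}(\operatorname{Var}[R]\epsilon^{-2})$ makes each batch mean deviate from $\mathbb{E}[R]$ by more than $\epsilon$ with probability at most $1/4$; call such a batch \emph{bad}. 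In the second stage I would observe that the median of the $B$ batch means can lie farther than $\epsilon$ from $\mathbb{E}[R]$ only if at least half of the batches are bad. Writing $Y=\sum_{j=1}^{B}\mathbf{1}\{\text{batch }j\text{ is bad}\}$, the batches are independent with per-batch bad probability $p\le 1/4$, so $\mathbb{E}[Y]\le B/4$, and a Hoeffding (equivalently Chernoff) bound gives $\Pr[Y\ge B/2]\le e^{-cB}$ for an absolute constant $c$ (here $c=1/8$ suffices). Forcing this tail below $\delta$ requires $B=\mathcal{O}(\log(\delta^{-1}))$, and multiplying the two factors yields $M=nB=\mathcal{O}\!\left(\log(\delta^{-1})\operatorname{Var}[R]\epsilon^{-2}\right)$, as stated.

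I do not anticipate a substantive obstacle, as this is a textbook concentration argument; the only point requiring care is the constant bookkeeping needed to ensure that the per-batch failure probability is held strictly below $1/2$, since this strict gap is precisely what makes the Chernoff bound on $Y$ decay exponentially in $B$. Pushing the per-batch failure probability down to $1/4$ leaves a constant-factor margin that both stabilizes the median and produces the clean $\log(\delta^{-1})$ dependence; I would make this margin explicit rather than optimizing constants, since the downstream application in Theorem~\ref{theorem:cost_Ent_Neg} only needs the stated asymptotic scaling.
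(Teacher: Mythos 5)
Your proof is correct and is the standard Chebyshev-plus-Chernoff boosting argument for the median-of-means estimator; the paper itself states this lemma without proof, treating it as a known textbook fact, so your write-up simply supplies the canonical argument the authors omitted. The only detail worth making explicit is the median step — if strictly more than half of the batch means lie in $[\mathbb{E}[R]-\epsilon,\mathbb{E}[R]+\epsilon]$ then the median does too — which you state correctly in contrapositive form.
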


According to Lemma~\ref{lemma:median_of_means}, to estimate $\mathbb{E}[\widetilde{\mathbb{X}}]$ with accuracy $\epsilon_3$ and a success probability of at least $1-\delta$, it suffices to sample $\mathbb{X}$ for $M=\mathcal{O}\left(\log(\delta^{-1})d\norm{\rho^{T_A}}_1\epsilon_3^{-2}\right)$ times.

By Proposition~\ref{prop:PT_cost}, the expected number of $\rho$ required for a single sampling of $\widetilde{\mathbb{X}}$ is
\begin{equation}
\begin{aligned}
&\underset{l\sim p(\cdot)}{\mathbb{E}}\left[\mathcal{O}\left(e_{2l-1}^{-1}d_A(2l-1)^2\right)\right]\\
=&\mathcal{O}\left(\sum_{l=1}^{L}\frac{8}{\pi^2(2l-1)^2}\left(\frac{2\left(2+\log(2L-1)\right)}{\pi(2l-1)}\frac{d}{\epsilon_2}\right)d_A(2l-1)^2\right)\\
=&\mathcal{O}\left(\frac{16}{\pi^3}\frac{dd_A}{\epsilon_2}\sum_{l=1}^{L}\frac{\left(2+\log(2L-1)\right)}{2l-1}\right)\\
=&\mathcal{O}\left(\frac{8}{\pi^3}\frac{dd_A}{\epsilon_2}\left(2+\log(2L-1)\right)^2\right)\\
=&\mathcal{O}\left(\epsilon_2^{-1}dd_A\log(L)^2\right).
\end{aligned}
\end{equation}

We take $\epsilon_1=\epsilon_2=\epsilon_3=\frac{2\epsilon}{3}$. Now, with a success probability of at least $1-\delta$, we have
\begin{equation}
\begin{aligned}
&\abs{\left(\frac{\pi}{2}d+\operatorname{\mathbf{MedianOfMeans}}(\hat{\mathbb{X}}_1,\cdots,\hat{\mathbb{X}}_M)\right)-\norm{\rho^{T_A}}_1}\\
\le&\abs{\operatorname{\mathbf{MedianOfMeans}}(\hat{\mathbb{X}}_1,\cdots,\hat{\mathbb{X}}_M)-\mathbb{E}[\widetilde{\mathbb{X}}]}\\
&+\abs{\mathbb{E}[\widetilde{\mathbb{X}}]-\mathbb{E}[\mathbb{X}]}+\abs{\left(\frac{\pi}{2}d+\mathbb{E}[\mathbb{X}]\right)-\norm{\rho^{T_A}}_1}\\
<&\epsilon_3+\epsilon_2+\epsilon_1=2\epsilon.
\end{aligned}
\end{equation}
Thus, we can estimate $\norm{\rho^{T_A}}_1$ to an accuracy of $2\epsilon$, which allows us to determine $N(\rho)=\frac{\norm{\rho^{T_A}}_1-1}{2}$ to an accuracy of $\epsilon$.

Putting everything together, the expectation value of the total number of copies of $\rho$ we need is
\begin{equation}
\begin{aligned}
&M\underset{l\sim p(\cdot)}{\mathbb{E}}\left[\mathcal{O}\left(e_{2l-1}^{-1}d_A(2l-1)^2\right)\right]\\
=&\mathcal{O}\left(\log(\delta^{-1})d\norm{\rho^{T_A}}_1\epsilon_3^{-2}\right)\mathcal{O}\left(\epsilon_2^{-1}dd_A\log(L)^2\right)\\
=&\mathcal{O}\left(\log(\delta^{-1})\epsilon^{-3}d^2d_A\norm{\rho^{T_A}}_1\log(\epsilon^{-1}d)^2\right)\\
=&\widetilde{\mathcal{O}}\left(\log(\delta^{-1})\epsilon^{-3}d^2d_A\norm{\rho^{T_A}}_1\right),
\end{aligned}
\end{equation}
here the notation $\widetilde{\mathcal{O}}$, suppresses logarithmic factors of $d$ and $\epsilon$. Thus, we have completed the proof of Theorem~\ref{theorem:cost_Ent_Neg}.

Now, we discuss several properties of entanglement negativity, which facilitate the discussions in Sec.~\ref{subsec:ent_det}. We first show that $\norm{\rho^{T_A}}_1\le d_A$. Let $\ket{\psi}$ be a $d=d_A\times d_B$ dimensional pure state with Schmidt coefficients $s_1\ge\cdots\ge s_{d_A}\ge 0$. According to Lemma 1 in \cite{Johnston2018inverse}, we have
\begin{equation}
\begin{aligned}
\norm{\ketbra{\psi}{\psi}^{T_A}}_1=(s_1+\cdots+s_{d_A})^2.
\end{aligned}
\end{equation}
Since $s_1^2+\cdots+s_{d_A}^2=1$, it follows that $\norm{\ketbra{\psi}{\psi}^{T_A}}_1\le d_A$. Let $\rho=\sum_{i=1}^d\lambda_i\ketbra{\psi_i}{\psi_i}$ be the spectral decomposition of $\rho$, then we have
\begin{equation}
\begin{aligned}
\norm{\rho^{T_A}}_1&=\norm{\sum_{i=1}^d\lambda_i\ketbra{\psi_i}{\psi_i}^{T_A}}_1\\
&\le\sum_{i=1}^d\lambda_i\norm{\ketbra{\psi_i}{\psi_i}^{T_A}}_1\le d_A.
\end{aligned}
\end{equation}

We then show that for all $d$-diamonsional states $\rho_1$ and $\rho_2$, the following inequality holds:
\begin{equation}
\abs{N(\rho_1)-N(\rho_2)}\le\frac{\sqrt{d}}{2}\norm{\rho_1-\rho_2}_1.
\end{equation}
Let $\rho_1-\rho_2=\sum_{i=1}^d\mu_i\ketbra{\phi_i}{\phi_i}$ be the spectral decomposition of $\rho_1-\rho_2$, then
\begin{equation}
\begin{aligned}
&\abs{N(\rho_1)-N(\rho_2)}=\frac{1}{2}\abs{\norm{\rho_1^{T_A}}_1-\norm{\rho_2^{T_A}}_1}\\
\le&\frac{1}{2}\norm{\rho_1^{T_A}-\rho_2^{T_A}}_1=\frac{1}{2}\norm{(\rho_1-\rho_2)^{T_A}}_1\\
\le&\frac{1}{2}\sum_{i=1}^d\abs{\mu_i}\norm{\ketbra{\phi_i}{\phi_i}^{T_A}}_1\le\frac{d_A}{2}\norm{\rho_1-\rho_2}_1\\
\le&\frac{\sqrt{d}}{2}\norm{\rho_1-\rho_2}_1.
\end{aligned}
\end{equation}

\subsection{Proof of Proposition~\ref{prop:nega_LB}}\label{app:nega_LB}

Consider the case when $d_A=d_B=\sqrt{d}$. We define $\rho(x)$ as follows:
\begin{equation}
\begin{aligned}
\rho(x):=x \frac{1}{\sqrt{d}}\Phi^++(1-x)\frac{1}{d}\mathbb{I}_d,
\end{aligned}
\end{equation}
where $\Phi^+$ is the unnormalized maximally entangled state.
This state provides an example that the difference in negativity will be exponentially larger than the difference in trace distance. For $x>\frac{1}{1+\sqrt{d}}$, we have
\begin{equation}
N[\rho(x)]=x\frac{\sqrt{d}}{2}-\frac{1}{2}+\frac{1-x}{2\sqrt{d}}.
\end{equation} 
Thus, when $x_1>x_2>\frac{1}{1+\sqrt{d}}$, the difference in negativity will be
\begin{equation}
N[\rho(x_1)]-N[\rho(x_2)]=\left(\frac{\sqrt{d}}{2}-\frac{1}{2\sqrt{d}}\right)(x_1-x_2).
\end{equation}
While at the same time, 
\begin{equation}
\norm{\rho(x_1)-\rho(x_2)}_1=2\left(1-\frac{1}{d}\right)(x_1-x_2).
\end{equation}

Now we use $\rho(x)$ to prove Proposition~\ref{prop:nega_LB}. For $0\le x_1,x_2\le1$, the fidelity between $\rho(x_1)$ and $\rho(x_2)$ can be lower bounded as
\begin{equation}\label{eq:nega_LB_fidelity}
\begin{aligned}
&F\big(\rho(x_1),\rho(x_2)\big)\\
=&\left[\sqrt{\left(x_1+(1-x_1)\frac{1}{d}\right)\left(x_2+(1-x_2)\frac{1}{d}\right)}\right.\\
&+\left.(d-1)\sqrt{(1-x_1)(1-x_2)\frac{1}{d^2}}\right]^2\\
\ge&\left[\sqrt{x_1x_2}+\sqrt{(1-x_1)(1-x_2)}\right]^2,
\end{aligned}
\end{equation}
where the inequality holds for all $0\le x_1,x_2\le1$ and all dimension $d$. Taking $\lambda=6\epsilon/\sqrt{d}$, 
\begin{equation}
\begin{aligned}
N[\rho(1/2+\lambda)]-N[\rho(1/2)]=\lambda\left(\frac{\sqrt{d}}{2}-\frac{1}{2\sqrt{d}}\right)>2.5\epsilon
\end{aligned}
\end{equation}
holds when $d>6$. Thus, if we can estimate $N(\rho)$ to an accuracy of $\epsilon$ with a success probability of at least $2/3$ using $K$ copies of the unknown state $\rho$, then we can distinguish between $\rho(1/2+\lambda)$ and $\rho(1/2)$ with a probability of at least $2/3$ using $K$ copies of $\rho$. By Eq.~\eqref{eq:nega_LB_fidelity}, we have
\begin{equation}
\begin{aligned}
F\big(\rho(1/2+\lambda),\rho(1/2)\big)&\ge\left[\sqrt{\frac{1}{4}+\frac{\lambda}{2}}+\sqrt{\frac{1}{4}-\frac{\lambda}{2}}\right]^2\\
&\ge1-2\lambda^2,
\end{aligned}
\end{equation}
where the second inequality holds for all $\lambda\in(0,1/2]$.

By Lemma~\ref{lemma:Holevo-Helstrom}, in order to distinguish between $\rho(1/2+\lambda)$ and $\rho(1/2)$ with a probability of at least $2/3$ using $K$ copies of the unknown state, $K$ must be large enough to make sure
\begin{equation}\label{eq:nega_LB_HH1}
\begin{aligned}
\norm{\rho(1/2+\lambda)^{\otimes K}-\rho(1/2)^{\otimes K}}_1\ge\frac{2}{3}.
\end{aligned}
\end{equation}
Note that
\begin{equation}\label{eq:nega_LB_HH2}
\begin{aligned}
&\norm{\rho(1/2+\lambda)^{\otimes K}-\rho(1/2)^{\otimes K}}_1\\
\le&2\sqrt{1-F\big(\rho(1/2+\lambda)^{\otimes K},\rho(1/2)^{\otimes K}\big)}\\
=&2\sqrt{1-F\big(\rho(1/2+\lambda),\rho(1/2)\big)^K}\\
\le&2\sqrt{1-(1-2\lambda^2)^K}.
\end{aligned}
\end{equation}

Combine Eq.~\eqref{eq:nega_LB_HH1} and \eqref{eq:nega_LB_HH2}, $K$ must satisfies
\begin{equation}
\begin{aligned}
K&\ge\log\left(\frac{9}{8}\right)\log\left(\frac{1}{1-2\lambda^2}\right)^{-1}\\
&\ge\log\left(\frac{9}{8}\right)\frac{1}{4\lambda^2}\\
&\ge\Omega\left(\lambda^{-2}\right)=\Omega\left(\epsilon^{-2}d\right),
\end{aligned}
\end{equation}
where the second inequality holds when $\lambda\in(0,1/2]$.

\section{Quantum Noiseless State
Recovery}

\subsection{Proof of Theorem~\ref{theorem:HME-QEM}}\label{app:HME_QNSR_thm}

We use $\mathcal{Q}$ to represent the channel constructed by HME, which
approximates the channel $\left[\mathrm{C}\text{-}e^{-i\psi\pi}\right]$. We select the number of sequential operations $K = \mathcal{O}(\tilde{\epsilon}^{-1}\norm{H_{\mathcal{E}^{-1}}}_{\infty}^2)$ to guarantee $\norm{\mathcal{Q}-\left[\mathrm{C}\text{-}e^{-i\psi\pi}\right]}_{\diamond}\le\tilde{\epsilon}$, where $\tilde{\epsilon}$ represents an error threshold that will be further defined in our analysis. Let
\begin{equation}
\begin{aligned}
&\Sigma:=\left[\mathrm{C}\text{-}e^{-i\psi\pi}\right]\left(\ketbra{+}{+}\otimes\sigma\right),\\
&\Sigma^{\prime}:=\mathcal{Q}(\ketbra{+}{+}\otimes\sigma),
\end{aligned}
\end{equation}
by definition of diamond distance, we have $\norm{\Sigma^{\prime}-\Sigma}_1<\tilde{\epsilon}$. After measuring the ancilla qubit of $\Sigma^{\prime}$ in the Pauli-$X$
basis and obtaining the outcome $\ket{-}$, the state evolves into
\begin{equation}
\Tr_c\left(\frac{(\ketbra{-}{-}\otimes\mathbb{I}_d)\Sigma^{\prime}(\ketbra{-}{-}\otimes\mathbb{I}_d)}{\Tr\big[(\ketbra{-}{-}\otimes\mathbb{I}_d)\Sigma^{\prime}\big]}\right),
\end{equation}
which closely approximates $\psi$. Note that
\begin{equation}
\begin{aligned}
F^{\prime}:=\Tr\big[(\ketbra{-}{-}\otimes\mathbb{I}_d)\Sigma^{\prime}\big]
\end{aligned}
\end{equation}
is the probability of obtaining the measurement result $\ket{-}$. By Lemma~\ref{lemma:expectation_tracenorm}, we can observe that
\begin{equation}
\begin{aligned}
&\big|F^{\prime}-F\big|\\
=&\Big|\Tr\big[(\ketbra{-}{-}\otimes\mathbb{I}_d)\Sigma^{\prime}\big]-\Tr\big[(\ketbra{-}{-}\otimes\mathbb{I}_d)\Sigma\big]\Big|\\
\le&\norm{\Sigma^{\prime}-\Sigma}_1<\tilde{\epsilon},
\end{aligned}
\end{equation}
where
\begin{equation}
\begin{aligned}
F:=\Tr\big[(\ketbra{-}{-}\otimes\mathbb{I}_d)\Sigma\big]=\bra{\psi}\sigma\ket{\psi}
\end{aligned}
\end{equation}
is the fidelity between the initial state $\sigma$ and the target noiseless state $\psi$. We have
\begin{widetext}
\begin{equation}
\begin{aligned}
&\norm{\Tr_c\left(\frac{(\ketbra{-}{-}\otimes\mathbb{I}_d)\Sigma^{\prime}(\ketbra{-}{-}\otimes\mathbb{I}_d)}{\Tr\big[(\ketbra{-}{-}\otimes\mathbb{I}_d)\Sigma^{\prime}\big]}\right)-\ketbra{\psi}{\psi}}_1\\
=&\norm{\Tr_c\left(\frac{(\ketbra{-}{-}\otimes\mathbb{I}_d)\Sigma^{\prime}(\ketbra{-}{-}\otimes\mathbb{I}_d)}{\Tr\big[(\ketbra{-}{-}\otimes\mathbb{I}_d)\Sigma^{\prime}\big]}\right)-\Tr_c\left(\frac{(\ketbra{-}{-}\otimes\mathbb{I}_d)\Sigma(\ketbra{-}{-}\otimes\mathbb{I}_d)}{\Tr\big[(\ketbra{-}{-}\otimes\mathbb{I}_d)\Sigma\big]}\right)}_1\\
\le&\norm{\frac{1}{F^{\prime}}\Tr_c\Big((\ketbra{-}{-}\otimes\mathbb{I}_d)\Sigma^{\prime}(\ketbra{-}{-}\otimes\mathbb{I}_d)\Big)-\frac{1}{F^{\prime}}\Tr_c\Big((\ketbra{-}{-}\otimes\mathbb{I}_d)\Sigma(\ketbra{-}{-}\otimes\mathbb{I}_d)\Big)}_1\\
&+\norm{\frac{1}{F^{\prime}}\Tr_c\Big((\ketbra{-}{-}\otimes\mathbb{I}_d)\Sigma(\ketbra{-}{-}\otimes\mathbb{I}_d)\Big)-\frac{1}{F}\Tr_c\Big((\ketbra{-}{-}\otimes\mathbb{I}_d)\Sigma(\ketbra{-}{-}\otimes\mathbb{I}_d)\Big)}_1\\
\le&\frac{1}{F^{\prime}}\Big\|(\ketbra{-}{-}\otimes\mathbb{I}_d)(\Sigma^{\prime}-\Sigma)(\ketbra{-}{-}\otimes\mathbb{I}_d)\Big\|_1+\left|\frac{1}{F^{\prime}}-\frac{1}{F}\right|\norm{\Tr_c\Big((\ketbra{-}{-}\otimes\mathbb{I}_d)\Sigma(\ketbra{-}{-}\otimes\mathbb{I}_d)\Big)}_1\\
\le&\frac{\tilde{\epsilon}}{F^{\prime}}+\frac{\tilde{\epsilon}}{F^{\prime}F}F=\frac{2\tilde{\epsilon}}{F^{\prime}}\le\frac{2\tilde{\epsilon}}{F-\tilde{\epsilon}}.
\end{aligned}
\end{equation}
\end{widetext}
In order to guarantee
\begin{equation}
\begin{aligned}
\norm{\Tr_c\left(\frac{(\ketbra{-}{-}\otimes\mathbb{I}_d)\Sigma^{\prime}(\ketbra{-}{-}\otimes\mathbb{I}_d)}{\Tr\big[(\ketbra{-}{-}\otimes\mathbb{I}_d)\Sigma^{\prime}\big]}\right)-\ketbra{\psi}{\psi}}_1<\epsilon
\end{aligned}
\end{equation}
for some given error threshold $\epsilon$, we can take $\tilde{\epsilon}=F\epsilon/3$. Thus, $\mathcal{O}(\epsilon^{-1}F^{-1}\norm{H_{\mathcal{E}^{-1}}}_{\infty}^2)$ copies of $\rho$ are sufficient for us to get $\psi$ up to an error of $\epsilon$ when the post-selection succeeds.

We repeatedly run the circuit shown in Fig.~\ref{fig:QNSR} for $m$ times to increase the success probability to $1-(1-F^{\prime})^m$. To ensure $1-(1-F^{\prime})^m\ge 1-\delta$, it suffices to take
\begin{equation}
\begin{aligned}
m=&\left\lceil\log(\delta^{-1})\log(\frac{1}{1-F^{\prime}})^{-1}\right\rceil\\
=&\mathcal{O}\left(\log(\delta^{-1})F^{\prime-1}\right)=\mathcal{O}\left(\log(\delta^{-1})F^{-1}\right).
\end{aligned}
\end{equation}
Thus, the total number of copies of $\rho$ we need to achieve a success probability of at least $1-\delta$ is
\begin{equation}
\mathcal{O}\left(\log(\delta^{-1})\epsilon^{-1}F^{-2}\norm{H_{\mathcal{E}^{-1}}}_{\infty}^2\right).
\end{equation}

\subsection{Proof of Corollary~\ref{corollary:hardness_of_indnependnet_noise}}\label{app:hardness_of_indnependnet_noise}

In this section, let $M_{\mathbf{r}}:=\frac{1}{2}(\mathbb{I}_2+\mathbf{r}\cdot\boldsymbol{\sigma})$ be the Bloch representation of a two-dimensional Hermitian matrix, where $\mathbf{r}=(x,y,z)\in\mathbb{R}^3$ is the Bloch vector of $M_{\mathbf{r}}$, and $\boldsymbol{\sigma}=(\sigma_1,\sigma_2,\sigma_3)=(X,Y,Z)$. Additionally, $\norm{\mathbf{r}}=\sqrt{x^2+y^2+z^2}$ represents the norm of $\mathbf{r}$, $D^3=\{(x,y,z):x^2+y^2+z^2\le1\}$ represents the closed unit ball in $\mathbb{R}^3$, and $S^2=\{(x,y,z):x^2+y^2+z^2=1\}$ represents the unit sphere.

The action of a trace-preserving map in the Bloch representation can always be expressed as an affine map \cite{nielsen2010quantum}, which maps one ellipsoid to another ellipsoid. Note that affine maps preserve antipodal points of ellipsoids. We use the notation $\mathcal{E}$ to refer to both the CPTP noise channel and the corresponding affine map on $\mathbb{R}^3$. If $\mathcal{E}$ is not a unitary channel, it maps some pure state on $S^2$ to the interior of $D^3$. Consequently, $\mathcal{E}(D^3)\subsetneq D^3$, which implies that $D^3\subsetneq \mathcal{E}^{-1}(D^3)$.

In the following, we will discuss two cases separately: when $\mathcal{E}$ is a unital map, and when it is not a unital map.

If $\mathcal{E}$ is a unital map, then $(0,0,0)$ is the fixed point of $\mathcal{E}$. Since $D^3\subsetneq \mathcal{E}^{-1}(D^3)$, we can choose $M_{\pm\mathbf{r}}\in\mathcal{E}^{-1}(S^2)$ such that $\norm{\mathbf{r}}>1$. Besides, the two pure states, $\ketbra{\psi_{\pm}}{\psi_{\pm}}:=\mathcal{E}(M_{\pm\mathbf{r}})$, are orthogonal since their Bloch vectors are antipodal points on $S^2$. Let the spectral decomposition of $M_{\pm\mathbf{r}}$ be
\begin{equation}
M_{\pm\mathbf{r}}=\frac{1\pm\norm{\mathbf{r}}}{2}\ketbra{w_1}{w_1}+\frac{1\mp\norm{\mathbf{r}}}{2}\ketbra{w_2}{w_2},
\end{equation}
where $\ket{w_1}$ and $\ket{w_2}$ are two orthonormal pure states that are determined by $\mathbf{r}$. One can show that 
\begin{equation}
A_n:=\frac{1}{2}\ketbra{\psi_+}{\psi_+}^{\otimes n}-\frac{1}{2}\ketbra{\psi_-}{\psi_-}^{\otimes n}
\end{equation}
is an element of the feasible region $\mathscr{F}$, as $[M_{+\mathbf{r}},M_{-\mathbf{r}}]=0$. Acting $\mathcal{N}_{n}:=(\mathcal{E}^{-1})^{\otimes n}$ on $A_n$, we have
\begin{equation}
\begin{aligned}
&\mathcal{N}_{n}(A_n)=\frac{1}{2}(M_{+\mathbf{r}})^{\otimes n}-\frac{1}{2}(M_{-\mathbf{r}})^{\otimes n}\\
=&\frac{1}{2}\left[\left(\frac{1+\norm{\mathbf{r}}}{2}\right)^n-\left(\frac{1-\norm{\mathbf{r}}}{2}\right)^n\right]\ketbra{w_1}{w_1}^{\otimes n}\\
&+\frac{1}{2}\left[\left(\frac{1-\norm{\mathbf{r}}}{2}\right)^n-\left(\frac{1+\norm{\mathbf{r}}}{2}\right)^n\right]\ketbra{w_2}{w_2}^{\otimes n}\\
&+\cdots.
\end{aligned}
\end{equation}
Thus we have
\begin{equation}
\begin{aligned}
&R_*\ge R\left[\mathcal{N}_{n}\left(A_n\right)\right]\\
\ge&\frac{1}{2}\left[\left(\frac{1+\norm{\mathbf{r}}}{2}\right)^n-\left(\frac{1-\norm{\mathbf{r}}}{2}\right)^n\right]\\
&-\frac{1}{2}\left[\left(\frac{1-\norm{\mathbf{r}}}{2}\right)^n-\left(\frac{1+\norm{\mathbf{r}}}{2}\right)^n\right]\\
\ge&\Omega\left(\left(\frac{1+\norm{\mathbf{r}}}{2}\right)^n\right).
\end{aligned}
\end{equation}
Since $\frac{1+\norm{\mathbf{r}}}{2}>1$, Theorem~\ref{theorem:LowerBound} implies that the sample complexity $K$ for realizing $e^{-i\mathcal{N}_n(\rho)t}$ with accuracy $\epsilon$ has a lower bound of $K\ge\epsilon^{-1}2^{\Omega(n)}t^2$.

When $\mathcal{E}$ is not a unital map, then $\mathcal{E}^{-1}$ is also not a unital map. Denote $\mathcal{E}^{-1}(\frac{\mathbb{I}_2}{2})=M_{\mathbf{c}}$, then the center of the ellipsoid $\mathcal{E}^{-1}(D^3)$ is given by $\mathbf{c}$. The line passing through $(0,0,0)$ and $\mathbf{c}$ intersects the ellipsoid plane $\mathcal{E}^{-1}(S^2)$ at two points $\mathbf{s}$ and $\mathbf{t}$, which are antipodal points on $\mathcal{E}^{-1}(S^2)$ and satisfy $\mathbf{t}=-\frac{\norm{\mathbf{t}}}{\norm{\mathbf{s}}}\mathbf{s}$. Without loss of generality, we assume that $\norm{\mathbf{s}}>\norm{\mathbf{t}}$. Denote $\ketbra{\phi_{+}}{\phi_{+}}:=\mathcal{E}(M_{\mathbf{s}})$ and $\ketbra{\phi_{-}}{\phi_{-}}:=\mathcal{E}(M_{\mathbf{t}})$, they are antipodal pure states on $S^2$, thus $\braket{\phi_+}{\phi_-}=0$. Let the spectral decomposition of $M_{\mathbf{s}}$ and $M_{\mathbf{t}}$ be
\begin{equation}
\begin{aligned}
M_{\mathbf{s}}&=\frac{1+\norm{\mathbf{s}}}{2}\ketbra{v_1}{v_1}+\frac{1-\norm{\mathbf{s}}}{2}\ketbra{v_2}{v_2},\\
M_{\mathbf{t}}&=\frac{1+\norm{\mathbf{t}}}{2}\ketbra{v_2}{v_2}+\frac{1-\norm{\mathbf{t}}}{2}\ketbra{v_1}{v_1},
\end{aligned}
\end{equation}
where $\ket{v_1}$ and $\ket{v_2}$ are orthonormal pure states that are determined by $\mathbf{s}$. One can show that
\begin{equation}
B_n:=\frac{1}{2}\ketbra{\phi_+}{\phi_+}^{\otimes n}-\frac{1}{2}\ketbra{\phi_-}{\phi_-}^{\otimes n}
\end{equation}
is an element of the feasible region $\mathscr{F}$, as $[M_{\mathbf{s}},M_{\mathbf{t}}]=0$. Acting $\mathcal{N}_{n}$ on $B_n$, we have
\begin{equation}
\begin{aligned}
&\mathcal{N}_{n}(B_n)=\frac{1}{2}(M_{\mathbf{s}})^{\otimes n}-\frac{1}{2}(M_{\mathbf{t}})^{\otimes n}\\
=&\frac{1}{2}\left[\left(\frac{1+\norm{\mathbf{s}}}{2}\right)^n-\left(\frac{1-\norm{\mathbf{t}}}{2}\right)^n\right]\ketbra{v_1}{v_1}^{\otimes n}\\
&+\frac{1}{2}\left[\left(\frac{1-\norm{\mathbf{s}}}{2}\right)^n-\left(\frac{1+\norm{\mathbf{t}}}{2}\right)^n\right]\ketbra{v_2}{v_2}^{\otimes n}\\
&+\cdots.
\end{aligned}
\end{equation}
Thus we have
\begin{equation}
\begin{aligned}
R_*\ge R\left[\mathcal{N}_{n}\left(B_n\right)\right]\ge\Omega\left(\left(\frac{1+\norm{\mathbf{s}}}{2}\right)^n\right).
\end{aligned}
\end{equation}
Since $\frac{1+\norm{\mathbf{s}}}{2}>1$, Theorem~\ref{theorem:LowerBound} implies that the sample complexity $K$ for realizing $e^{-i\mathcal{N}_n(\rho)t}$ with accuracy $\epsilon$ has a lower bound of $K\ge\epsilon^{-1}2^{\Omega(n)}t^2$.

Finally, note that
\begin{equation}
\begin{aligned}
\mathrm{C}\text{-}e^{-i\mathcal{N}_n(\rho)t}=e^{-i\ketbra{1}{1}_c\otimes\mathcal{N}_n(\rho)t},
\end{aligned}
\end{equation}
and $\ketbra{1}{1}_c\otimes\mathcal{N}_n(\rho)$ has the same spectrum as $\mathcal{N}_n(\rho)$. Based on this observation, we can easily demonstrate that realizing controlled-$e^{-i\mathcal{N}_n(\rho)t}$ with accuracy $\epsilon$ also has a lower bound of $K\ge\epsilon^{-1}2^{\Omega(n)}t^2$.


%

\end{document}